\def\Real{\mathbb{R}}
\def\Complex{\mathbb{C}}
\newcommand{\gammamax}{\gamma_{\mathrm{max}}}
\newcommand{\cancel}[1]{}
\newenvironment{sdp}[2]{
\smallskip
\begin{center}
\begin{tabular}{ll}
#1 & #2\\
subject to
}
{
\end{tabular}
\end{center}
\smallskip
}
\def\01{\{0,1\}}
\newcommand{\eps}{\varepsilon}
\newcommand{\ket}[1]{|#1\rangle}
\newcommand{\bra}[1]{\langle#1|}
\newcommand{\braket}[2]{\langle #1|#2\rangle}
\newcommand{\sketbra}[2]{{\ensuremath{\lvert #1\rangle\!\langle #2\rvert}}}
\newcommand{\lketbra}[2]{{\ensuremath{\left\lvert #1\right\rangle\!\!\left\langle #2\right\rvert}}}
\newcommand{\ketbra}[2]{\if@display\lketbra{#1}{#2}\else\sketbra{#1}{#2}\fi}
\newcommand{\proj}[1]{\ketbra{#1}{#1}}
\newcommand{\gMg}{{g M g^\dag}}
\newcommand{\spr}[2]{\langle #1,#2 \rangle}
\newcommand{\oeps}{\delta}
\newcommand\beq{\begin{equation}}
\newcommand\eeq{\end{equation}}
\newcommand\bea{\begin{eqnarray}}
\newcommand\eea{\end{eqnarray}}
\newcommand{\Tr}{\mbox{\rm Tr}}
\newcommand{\id}{\mathsf{id}}
\newtheorem{theorem}{Theorem}[section]
\newtheorem{lemma}[theorem]{Lemma}
\newtheorem{claim}{Claim}
\newtheorem{corollary}[theorem]{Corollary}
\newtheorem{protocol}{Protocol}
\newtheorem{definition}[theorem]{Definition}
\newenvironment{proof}
{\noindent {\bf Proof. }}
{{\hfill $\Box$}\\
 \smallskip}
\newenvironment{sketch}
{\noindent {\bf Proof (Sketch). }}
{{\hfill $\Box$}\\
 \smallskip}
\newcommand{\ball}{\mathcal{K}^\eps}
\newcommand{\mV}{\mathcal{V}}
\newcommand{\mM}{\mathcal{M}}
\newcommand{\mP}{\mathcal{P}}
\newcommand{\hil}{\mathcal{H}}
\newcommand{\mX}{\mathcal{X}}
\newcommand{\mS}{\mathcal{S}}
\newcommand{\mN}{\mathcal{N}}
\newcommand{\setI}{\mathcal{I}}
\newcommand{\setT}{\mathcal{T}}
\newcommand{\setX}{\mathcal{X}}
\newcommand{\setY}{\mathcal{Y}}
\newcommand{\setF}{\mathcal{F}}
\newcommand{\set}[1]{\{#1\}}
\newcommand{\regX}{X}
\newcommand{\regTheta}{\Theta}
\newcommand{\regE}{E}
\newcommand{\regK}{K}
\newcommand{\bop}{\mathcal{B}}
\newcommand{\pos}{\mathcal{P}}
\newcommand{\states}{\mathcal{S}}
\newcommand{\rank}{\mathrm{rank}}
\newcommand{\assign}{\ensuremath{\kern.5ex\raisebox{.1ex}{\mbox{\rm:}}\kern -.3em =}}
\newcommand{\ol}[1]{\overline{#1}}
\renewcommand{\H}{\operatorname{H}} %  Entropy
\newcommand{\hmin}{\ensuremath{\H_{\infty}}}
\newcommand{\hmine}[2]{\ensuremath{\hmin^{#1}\left(#2\right)}}
\newcommand{\hminee}[1]{\hmine{\varepsilon}{#1}}
\newcommand{\B}{\operatorname{B}} %  figure of merit (in Appendix)
\newcommand{\C}{\operatorname{C}} %  figure of merit (in Appendix)
\newcommand{\syn}{\mathit{syn}} % syndrome information
\newcommand{\pg}{{\rm P}_{\rm guess}}
\begin{document}

\title{Robust Cryptography in the Noisy-Quantum-Storage Model}

\author{\vspace{-1.6cm}}

\author{Christian Schaffner$^1$, Barbara Terhal$^2$ and Stephanie Wehner$^3$\\
\textit{$^1$CWI, P.O.~Box 94079, 1090 GB Amsterdam, The Netherlands}\\
\textit{$^2$IBM, Watson Research Center, P.O.~Box 218, Yorktown Heights, NY, USA}\\
\textit{$^3$Caltech, Institute for Quantum Information, 1200 E California Blvd, Pasadena CA 91125, USA}}
\date{\today}
\maketitle

\begin{abstract}
  It was shown in \cite{prl:noisy} that cryptographic primitives can be
  implemented based on the assumption that quantum storage of qubits
  is noisy. In this work we analyze a protocol for the universal task
  of oblivious transfer that can be implemented using
  quantum-key-distribution (QKD) hardware in the practical setting
  where honest participants are unable to perform noise-free
  operations. We derive trade-offs between the amount
  of storage noise, the amount of noise in the operations performed by
  the honest participants and the security of oblivious transfer which are greatly improved
  compared to the results in \cite{prl:noisy}. As
  an example, we show that for the case of depolarizing noise in
  storage we can obtain secure oblivious transfer as long as the
  quantum bit-error rate of the channel does not exceed 11\%
  {\em and} the noise on the channel is strictly less than the quantum storage noise. This is optimal
for the protocol considered.
  Finally, we show that our analysis easily carries over to
  quantum protocols for secure identification.
\end{abstract}
\vspace{1mm}
\pagestyle{plain}

\section{Introduction}
The noisy-quantum-storage model~\cite{prl:noisy} is based on the
assumption that it is difficult to store quantum states. Based on
current practical and near-future technical limitations, we assume
that any state placed into quantum storage is affected by noise. At
the same time the model assumes that preparation, transmission and
measurement of simple unentangled quantum states can be performed
with much lower levels of noise. The present-day technology of
quantum key distribution with photonic qubits demonstrates this
contrast between a relatively simple technology for
preparation/transmission/measurement versus a limited capability for
quantum storage.

Almost all interesting cryptographic tasks are impossible to realize
without any restrictions on the participating players, neither
classically nor with the help of quantum information, see e.g.
~\cite{lo:insecurity,mayers:trouble,lo&chau:bitcom2,lo&chau:bitcom,mayers:bitcom}.
It is therefore an important task to come up with a cryptographic
model which restricts the capabilities of adversarial players and in
which these tasks become feasible. It turns out that all such
two-party protocols can be based on a simple primitive called 1-2
Oblivious Transfer (1-2 OT)~\cite{kilian:foundingOnOT,GV87}, first
introduced in~\cite{wiesner:conjugate,rabin:ot,even:firstOT}.
In 1-2 OT, the sender Alice starts off with two bit strings $S_0$
and $S_1$, and the receiver Bob holds a choice bit $C$.  The
protocol allows Bob to retrieve $S_C$ in such a way that Alice does
not learn any information about $C$ (thus, Bob cannot simply ask for
$S_C$). At the same time, Alice must be ensured that Bob only learns
$S_C$, and no information about the other string $S_{\ol{C}}$ (thus,
Alice cannot simply send him both $S_0$ and $S_1$). A 1-2 OT
protocol is called unconditionally secure when neither Alice nor Bob
can break these conditions, even when given unlimited resources.

\section{Results}
In this work we focus on the setting where the honest parties are
unable to perform perfect operations and experience errors
themselves, where we analyze individual-storage attacks. These
honest-party errors can be modeled as bit-errors on an effective
channel connecting the honest parties. In unpublished work, we have
shown that for the case of depolarizing noise in storage, security
can be obtained if the actions of the honest parties are noisy
but their error rate does not exceed 2.9\%~\cite{arxiv:noisy}.  This
threshold is too low to be of any practical value. In particular, this result
left open the question whether security can be obtained in a
real-life scenario.

Using a very different analysis, we are now able to show that in the
setting of individual-storage attacks 1-2 oblivious transfer and
secure identification can be achieved in the noisy-storage model
with depolarizing storage noise, as long as the quantum bit-error
rate of the channel does not exceed 11\% and the noise on the
channel is strictly less than the noise during quantum storage. This
is optimal for the protocol considered.

\begin{table}
\begin{center}
\begin{tabular}{|c|c|c|c|}
\hline
& ~~Allowed QBR~~ &~~ Secure 1-2 OT~~ &~~ Secure Identification ~~\\
\hline
\hline
PRL \cite{prl:noisy} & None & Yes & No\\
\hline
Unpublished \cite{arxiv:noisy} & 2.9\% & Yes & No\\
\hline
\multirow{2}{*}{This work} & 11\% & Yes & Yes\\
& (optimal) & & \\
\hline
\end{tabular}
\end{center}
\caption{Summary of previous results and the results in this paper.
The allowed quantum bit-error rate (QBR) is the maximum effective
error-rate on the actions of the honest parties below which we can
prove the security of the cryptographic scheme.}
\end{table}

Our result is of great practical significance, since it paves the
way to achieve security in a real-life implementation. Our main new
Theorems \ref{thm:secure1} and \ref{thm:tradeoffPractical} relate
the security of the 1-2 OT protocol to an uncertainty lower bound on
the conditional Shannon entropy. In order to prove these theorems,
we need to relate the Shannon entropy to the smooth min-entropy and
establish several new properties of the smooth min-entropy, see
Section \ref{sec:propsmooth}.

We evaluate the uncertainty lower bounds on the conditional Shannon
entropy in the practically-interesting case of depolarizing noise
resulting in Theorems \ref{thm:depolarize} and
\ref{thm:depolTradeoff}. From this analysis we obtain the clear-cut
result that, depending on the amount of storage noise, the
adversary's optimal storage attack is to either store the incoming
state as is, or to measure it immediately in one of the two BB84
bases.

\subsection{The Noisy-Quantum-Storage Model and Individual-Storage Attacks}
The noisy-storage model assumes that any quantum state that is placed into
quantum storage is affected by some noise described by a quantum
operation $\mN$.  Practically, noise can arise as a result of
transferring the qubit onto a different physical carrier, for example the transfer of a
photonic qubit onto an atomic ensemble or atomic state. In addition, a
quantum state will undergo noise once it has been transferred into
`storage' if such quantum memory is not $100\%$ reliable.

In principle, one may like to prove security against an adversary
that can perform any operation on the incoming quantum states. Here
however we analyze the restricted case where the adversary Bob
performs \emph{individual-storage} attacks. More precisely, Bob's
actions are of the following form as depicted in
Figure~\ref{figure:individualAttack}.

\begin{enumerate}
\item Bob may choose to (partially) measure (a subset of) his qubits
  immediately upon reception using an error-free {\em product}
  measurement, i.e., when he receives the $j$th qubit, he may apply
  any measurement $\mP_j$ of his choosing.
\item In addition, he can store each incoming qubit, or
  post-measurement state from a prior partial measurement, {\em
  separately} and wait until he gets additional information from Alice (at Step~3
  in Protocol~1). During storage, the $j$th qubit is thereby affected
  by some noise described by a quantum operation $\mN_j$ acting
  independently on each qubit. Note that such quantum operation $\mN_j$ could
  come about from encoding an incoming qubit into an error-correcting code and
  decoding it right before receiving Alice's additional information.
\item Once Bob obtains the additional information he may perform an arbitrary
coherent measurement $\mM$ on his stored qubits and stored classical
data.
\end{enumerate}

We would like to note that we can also derive security if we would
allow Bob to initially perform {\em any}, non-product, destructive
measurement on the incoming qubits. By destructive we mean that
there is no post-measurement quantum data left. The reason is that we
have previously shown in Lemma 2 in \cite{prl:noisy}, that
destructive {\em product} measurements are optimal for Bob if he is not allowed to keep any post-measurement information. 
Hence
this optimality of product measurements reduces such more general
destructive measurement model to our model of individual-storage
attacks. Measurements in present-day technology with single photon
qubits in which photons are detected, are in fact always
destructive, hence our model includes many realistic attacks.
Intuitively, using entangling operations between the incoming qubits
should be of little help in either extracting more information from
these independent, uncorrelated, BB84 qubits or in better preserving
these qubits against noise when the noise is extremely low and more is lost
than gained by measuring some qubits right away and using part of the newly freed space to encode the remaining
qubits. Of course this remains to be proven
(see also Conclusion). What {\em can} help is to entangle an
incoming qubit individually with ancilla qubits in order to store
the incoming qubit in an encoded or other more robust form. This
attack is covered in our model as an effective noisy operation
$\mN_j$ on incoming qubit $j$.

In the following, we use the quantum operation $\mS_i$ to denote the
combined quantum operations of Bob's initial (partial) measurement
and the noise.

\begin{figure}
\begin{center}
 \includegraphics{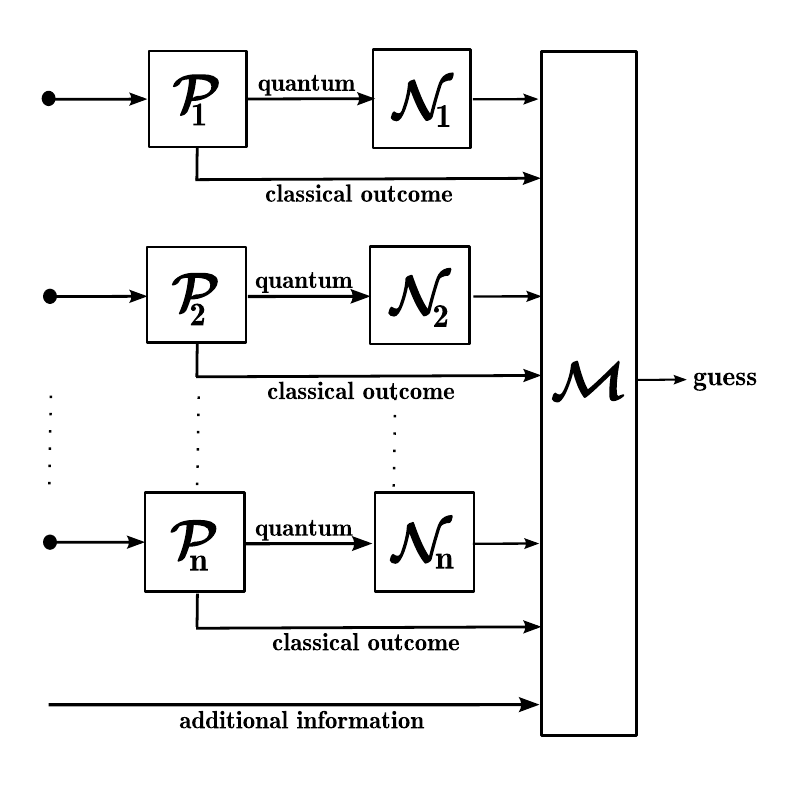}
\caption{Individual-Storage Attacks} \label{figure:individualAttack}
\end{center}
\end{figure}

\cancel{
\subsection{Results}
In~\cite{prl:noisy} it was shown that secure oblivious transfer can
be obtained within our model for any amount of storage noise, as
long as the honest participants can perform \emph{perfect}
noise-free quantum operations. As an explicit example, we consider
the case of depolarizing noise during storage. In unpublished
work~\cite{arxiv:noisy}, we showed that if the honest participants
do experience noise, the protocol is secure as long as the error
rate of this noise does not exceed 2.9.

Here, we provide a very different security analysis for the oblivious
transfer protocol which allows us to obtain greatly improved
parameters.  We first of all note that we can essentially ignore the
erasure errors during transmission for the honest participants (see
Section~\ref{sec:robust} for details).
Any other errors can be modeled as effective errors on the channel
from Alice to Bob giving rise to an effective quantum bit-error rate
as in quantum key distribution \cite{GRTZ:qkd_review}. For such errors
we show that when Bob's quantum storage is affected by depolarizing
noise, the protocol can be made secure as long as the error rate of
the honest participants does not exceed 11\%, and the noise on the
channel is strictly less than the noise during the storage
process.\footnote{It is not a coincidence that the first security
  proof for QKD by Shor and Preskill~\cite{sp:qkdproof} also achieves
  security for a quantum bit-error rate of 11\%, as it essentially
  stems from the same fact that the binary Shannon entropy equals a half at
  probability $p\approx 11\%$, i.e.~$h(0.11)\approx
  \frac{1}{2}$.}

We also prove the security of a scheme for secure
identification~\cite{DFSS07} in our model, and state explicit
security parameters. This scheme achieves password-based
identification and is of great practical relevance for banking
applications.
}

\subsection{Related work}
Our model is closely related to the bounded-quantum-storage model,
which assumes that the adversary has a limited amount of quantum
memory at his disposal~\cite{serge:bounded,serge:new}.  Within this
`bounded-quantum-storage model' OT can be implemented securely as
long as a dishonest receiver Bob can store at most $n/4-O(1)$ qubits
coherently, where $n$ is the number of qubits transmitted from Alice
to Bob. This approach assumes an explicit limit on the physical
number of qubits (or more precisely, the rank of the adversary's
quantum state). However, at present we do not know of any practical
physical situation which enforces such a limit for quantum
information.
As was pointed out in~\cite{chris:diss,DFSS08journal}, the original
bounded-quantum-storage analysis applies in the case of noise levels
which are so large such that the dishonest player's quantum storage
has an effective noise-free Hilbert space with dimension at most
$2^{n/4}$. The advantage of our model is that we can evaluate the
security parameters of a protocol explicitly in terms of the
strength of the noise, even when the noise rate is very low.

Precursors of the idea of basing cryptographic security on
storage-noise are already present in~\cite{crepeau:practicalOT}, but
no rigorous analysis was carried through in that paper. We
furthermore note that our security proof does not exploit the noise
in the communication channel (which has been done in the classical
setting to achieve cryptographic tasks, see
e.g.~\cite{crepeau:weakenedOT,CMW04,crepeau:efficientOT}), but is
solely based on the fact that the dishonest receiver's quantum
storage is noisy. A model based on classical noisy storage is akin
to the setting of a classical noisy channel, if the operations are
noisy, or the classical bounded-storage model, both of which are
difficult to enforce in practice. Another technical limitation has
been considered in \cite{salvail:physical} where a bit-commitment
scheme was shown secure under the assumption that the dishonest
committer can only measure a limited amount of qubits coherently.
Our analysis differs in that we can in fact allow any coherent
destructive measurement at the end of the protocol.

\subsection{Outline}
In Section~\ref{sec:prelim}, we introduce some notation and the
necessary technical tools. In Section~\ref{sec:12OT}, we define the
security of 1-2 OT, present the protocol and prove its security in
the case when honest players do not experience noise. In
Section~\ref{sec:dep} we then consider the example of depolarizing
storage noise explicitly. The lengthy proof of
Theorem~\ref{thm:depolarize} can be found in
Appendix~\ref{app:mainproof}. In Section~\ref{sec:robust} we show
how to obtain security if the honest players are unable to perform
perfect quantum operations. Finally, we point out in
Section~\ref{sec:others} how our analysis carries over to other
protocols.

\section{Preliminaries} \label{sec:prelim}

We start by introducing the necessary definitions, tools and technical
lemmas that we need in the remainder of this text.

\subsection{Basic Concepts}
We use $\in_R$ to denote the uniform random choice of an element from a set.
We further use $x_{|\setT}$ to denote the string $x=x_1,\ldots,x_n$
restricted to the bits indexed by the set $\setT \subseteq
\{1,\ldots,n\}$. For a binary random variable $C$, we denote by
$\ol{C}$ the bit different from $C$.

Let $\bop(\hil)$ denote the set of all bounded operators on a
finite-dimensional Hilbert space $\hil$. Let $\pos(\hil) \subset
\bop(\hil)$ denote the subset of positive semi-definite Hermitian
operators on $\hil$, and let $\states(\hil) \subset \pos(\hil)$
denote the subset of all quantum states on $\hil$, i.e.~$\rho \in
\states(\hil)$ iff $\rho \in \bop(\hil)$ with $\rho \geq 0$ and
$\Tr(\rho)=1$.
$\Tr_A: \bop(\hil_{AB}) \rightarrow \bop(\hil_B)$ is the partial
trace over system $A$. We denote by $\id_A$ the identity operator on
system $A$.
Let $\ket{0}_+, \ket{1}_+, \ket{0}_\times \assign (\ket{0}_+ +
\ket{1}_+)/\sqrt{2}, \ket{1}_\times \assign (\ket{0}_+ - \ket{1}_+)/\sqrt{2}$ denote
the BB84-states corresponding to the encoding of a classical bit into
the computational or Hadamard basis, respectively.

\paragraph{Classical-Quantum States}
A \emph{cq-state} $\rho_{XE}$ is a state
that is partly classical, partly quantum, and can be written as
$$
\rho_{XE}=\sum_{x \in \setX} P_X(x) \proj{x} \otimes \rho_E^x \, .
$$
Here, $X$ is a classical random variable distributed over the finite
set $\setX$ according to distribution $P_{X}$, $\set{\ket{x}}_{x \in \setX}$
is a set of orthonormal states and the register $E$ is in state $\rho_E^x$ when $X$ takes on value $x$.

\paragraph{Distance measures}
The $L_1$-norm of an operator $A \in  \bop(\hil)$ is defined as
$\|A\|_1 \assign \Tr\sqrt{A^\dagger A}$. The fidelity between two
quantum states $\rho, \sigma$ is defined as $F(\rho,\sigma) \assign
\| \sqrt{\rho} \sqrt{\sigma} \|_1$. For pure states it takes on the
easy form $F(\proj{\phi},\proj{\psi}) = | \braket{\phi}{\psi} |$.
The related quantity $C(\rho,\sigma) \assign
\sqrt{1-F^2(\rho,\sigma)}$ is a convenient distance measure on
normalized states~\cite{GLN05}. It is invariant under purifications
and equals the trace distance for pure states,
i.e.~$C(\proj{\psi},\proj{\phi}) = \sqrt{1-|\braket{\psi}{\phi}|^2}
= \frac12 \| \proj{\psi} - \proj{\phi} \|_1$.

\paragraph{Non-uniformity}
We can say that a quantum adversary has little information about $X$
if the distribution $P_X$ given his quantum state is close to
uniform. Formally, this distance is quantified by the {\em
non-uniformity} of $X$ given $\rho_E = \sum_x P_X(x) \rho_E^x$
defined as
\begin{equation}
d(X|E) := \frac{1}{2}\left\|\,\id_X/|\mX| \otimes \rho_E-\sum_{x}P_{X}(x) \proj{x} \otimes \rho_{E}^x\,\right\|_{1} \, .
\end{equation}
Intuitively,
$d(X|E) \leq \eps$ means that the distribution of $X$ is $\eps$-close to
uniform even given $\rho_E$, i.e., $\rho_E$ gives hardly any
information about $X$.
A simple property of the non-uniformity which follows from its
definition is that it does not change given independent information. Formally,
\begin{equation}
d(X|E,D) =d(X|E)
\label{eq:indep}
\end{equation}
for any cqq-state of the form $\rho_{XED}=\rho_{XE} \otimes \rho_D$.

\subsection{Entropic Quantities}
Throughout this paper we use a number of entropic quantities. The
\emph{binary-entropy} function is defined as $h(p) \assign - p \log p
- (1-p) \log (1-p)$, where $\log$ denotes the logarithm base 2
throughout this paper.  The \emph{von Neumann entropy} of a quantum
state $\rho$ is given by
$$
\H(\rho) \assign - \Tr(\rho \log \rho) \, .
$$
For a bipartite state $\rho_{AB} \in \mS(\hil_{AB})$,
we use the shorthand
$$
\H(A|B) \assign \H(\rho_{AB}) - \H(\rho_{B})
$$
to denote the \emph{conditional von Neumann entropy} of
the state $\rho_{AB}$ given the
quantum state $\rho_{B} = \Tr_A(\rho_{AB}) \in
\mS(\hil_B)$. 
Of particular importance to us are the following quantities
introduced by Renner~\cite{renato:diss}. Let $\rho_{AB} \in
\mS(\hil_{AB})$.  Then the \emph{conditional min-entropy of
$\rho_{AB}$ relative to $B$} is defined by the following
semi-definite program
$$
\hmin(A|B)_\rho \assign -\log \min_{\substack{\sigma_B \in
\pos(\hil_B) \\ \rho_{AB} \leq \id_A \otimes \sigma_B}}
\Tr(\sigma_B) \, .
$$
For a cq-state $\rho_{XE}$ one can show~\cite{KRS09} that the
conditional min-entropy is the (negative logarithm of the) guessing
probability
\footnote{Such an ``operational meaning'' of conditional
  min-entropy can also be formulated for general
  qq-states~\cite{KRS09}.}
\begin{equation} \label{eq:duality}
\hmin(X|E)_{\rho} = -\log \pg(X|E)_{\rho} \, ,
\end{equation}
where $\pg(X|E)_{\rho}$ is defined as the maximum success
probability of guessing $X$ by measuring the $E$-register of
$\rho_{XE}$. Formally, for any (not necessarily normalized) cq-state
$\rho_{XE}$, we define
\[ \pg(X|E)_\rho \assign \sup_{\set{M_x}} \sum_x P_X(x) \Tr(M_x \rho_E^x) \, ,
\]
where the supremum ranges over all positive-operator valued
measurements (POVMs) with measurement elements $\set{M_x}_{x \in
  \mX}$, i.e.~$M_x \geq 0$ and $\sum_x M_x = \id_E$. If all information
  in $E$ is classical, we recover the fact that the classical min-entropy
  is the negative logarithm of the average maximum guessing probability.

In our proofs we also need smooth versions of these entropic
quantities. The idea is to no longer consider the min-entropy of a
fixed state $\rho_{AB}$, but take the supremum over the min-entropy
of states $\hat{\rho}_{AB}$ which are close to $\rho_{AB}$, and
which may have considerably larger min-entropy. In a cryptographic
setting, we are often not interested in the min-entropy of a
concrete state $\rho_{AB}$, but in the maximal min-entropy we can
get from states in the neighborhood of $\rho_{AB}$, i.e.~deviating
only slightly from the real situation $\rho_{AB}$. These smooth
quantities have some nice properties which are needed in our
security proof. For $\eps \geq 0$, the \emph{$\eps$-smooth
  min-entropy of $\rho_{AB}$} is given by
$$
\hminee{A|B}_\rho \assign \sup_{\hat{\rho}_{AB} \in
  \ball(\rho_{AB})} \hmin(A|B)_{\hat{\rho}} \, ,
$$
where $\ball(\rho_{AB}) \assign \{\hat{\rho}_{AB} \in \pos(\hil_{AB})\mid C(\rho_{AB},\hat{\rho}_{AB}) \leq \eps \mbox{ and } \Tr(\hat{\rho}_{AB}) \leq 1)\}$. If the quantum states $\rho$ are clear from the context, we drop the subscript of the entropies.

\subsubsection{Properties of The Conditional Smooth Min-Entropy}
\label{sec:propsmooth} In our security analysis we make use of the
following properties of smooth min-entropy. First, we need the chain
rule whose simple proof can be found in Appendix~\ref{app:chain}:
\begin{lemma}[Chain Rule] \label{lem:chain}
For any ccq-state $\rho_{XYE} \in \states(\hil_{XYE})$
and for all $\eps \geq 0$, it holds that
$$
\hmine{\eps}{X | Y E } \geq \hminee{XY | E} - \log|\setY|,
$$
where $|\setY|$ is the alphabet size of the random variable $Y$.
\end{lemma}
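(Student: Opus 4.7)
The plan is to reduce the smooth chain rule to its non-smooth counterpart by exploiting the fact that the $\eps$-balls around $\rho_{XYE}$ appearing on both sides of the inequality are the \emph{same} ball (both entropies are computed on states living in $\hil_{XYE}$ and measured by the same distance $C$). This lets us pick a single $\eps$-close witness state and apply the non-smooth chain rule to it.

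First I would establish the non-smooth version, $\hmin(X|YE)_\rho \geq \hmin(XY|E)_\rho - \log|\setY|$, directly from the semidefinite definition. If $\sigma_E \in \pos(\hil_E)$ satisfies $\rho_{XYE} \leq \id_{XY}\otimes \sigma_E$ with $\Tr(\sigma_E) = 2^{-\hmin(XY|E)_\rho}$, then $\sigma_{YE} \assign \id_Y \otimes \sigma_E$ is feasible for the SDP defining $\hmin(X|YE)_\rho$, since $\rho_{XYE} \leq \id_X \otimes \sigma_{YE}$ and $\Tr(\sigma_{YE}) = |\setY|\cdot 2^{-\hmin(XY|E)_\rho}$. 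Taking logarithms yields the claimed bound.

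For the smooth version, for any $\delta > 0$ pick $\hat{\rho}_{XYE} \in \ball(\rho_{XYE})$ with
$$\hmin(XY|E)_{\hat{\rho}} \geq \hminee{XY|E}_\rho - \delta.$$
Because the $\eps$-ball is defined intrinsically on $\states(\hil_{XYE})$ and does not depend on the conditioning partition, this very same $\hat{\rho}_{XYE}$ is a feasible candidate in the supremum defining $\hmine{\eps}{X|YE}_\rho$. Applying the non-smooth chain rule to $\hat{\rho}$ gives
$$\hmine{\eps}{X|YE}_\rho \;\geq\; \hmin(X|YE)_{\hat{\rho}} \;\geq\; \hmin(XY|E)_{\hat{\rho}} - \log|\setY| \;\geq\; \hminee{XY|E}_\rho - \log|\setY| - \delta.$$
Sending $\delta \to 0$ yields the lemma.

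There is no real obstacle here: the SDP step is immediate from feasibility, and the smoothing step is a one-line observation about shared $\eps$-balls. The only subtle point worth checking is that the supremum in the definition of $\hminee{XY|E}_\rho$ need not be attained, which is why I work with a $\delta$-approximate maximizer and let $\delta \to 0$; since $\hmine{\eps}{X|YE}_\rho$ does not depend on $\delta$, the inequality survives this limit.
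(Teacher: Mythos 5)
Your proof is correct, but it reaches the non-smooth chain rule by a different route than the paper. The paper works on the operational side: via the duality $\hmin(X|E)_\rho = -\log \pg(X|E)_\rho$ it shows $\pg(XY|E) \geq \pg(X|YE)\cdot\frac{1}{|\setY|}$ by the strategy ``guess $y$ uniformly, then apply the optimal POVM for $X$ given that $y$'', and for $\eps>0$ it takes the optimizer $\hat{\rho}_{XYE}$ of $\hminee{XY|E}$, invokes Remark~3.2.4 of Renner's thesis to argue this optimizer can be taken to be a ccq-state (so the guessing argument applies), and then compares guessing probabilities over the ball. You instead prove the non-smooth inequality directly from the SDP definition by exhibiting the feasible point $\sigma_{YE}=\id_Y\otimes\sigma_E$, which costs a factor $\Tr(\id_Y)=\dim\hil_Y$ in the trace; the smoothing step (a single $\delta$-approximate witness in the shared $\eps$-ball, valid since both smooth quantities range over the same neighborhood of $\rho_{XYE}$ in $\pos(\hil_{XYE})$) is essentially the same in both proofs, and your $\delta\to 0$ handling of non-attainment of the supremum is fine. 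Your route buys two things: it needs no appeal to the ccq structure of the smoothed state (the SDP feasibility argument works for arbitrary, even subnormalized, positive operators, so Renner's Remark~3.2.4 is not needed), and it avoids the duality with the guessing probability altogether; the paper's route buys operational intuition. The only point to flag is that your argument literally yields a loss of $\log\dim\hil_Y$ rather than $\log|\setY|$; these coincide under the standard convention (used implicitly in the paper) that the classical register $Y$ of a ccq-state lives in the span of $\set{\ket{y}}_{y\in\setY}$, so this is a matter of stating the convention, not a gap.
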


Secondly, we prove the additivity of the smooth conditional
min-entropy (see Appendix~\ref{app:additivity}):
\begin{lemma}[Additivity] \label{lem:additivity}
Let $\rho_{AB}$ and $\rho_{A'B'}$ be two independent qq-states. For
$\eps \geq 0$, it holds that
\[ \hmine{\eps^2}{AA'|BB'}_{\rho} \leq \hmine{\eps}{A|B}+\hmine{\eps}{A'|B'}\, .
\]
\end{lemma}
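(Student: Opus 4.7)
My plan is to start from any sub-normalized state $\hat\tau_{AA'BB'} \in \mathcal{K}^{\eps^{2}}(\rho_{AB} \otimes \rho_{A'B'})$ that attains the supremum defining the left-hand side, and to bound its conditional min-entropy by the sum of the two smoothed marginal min-entropies on the right.

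The first step exploits the product structure of the reference state through monotonicity of fidelity under the partial trace: since
\[
F(\hat\tau_{AB}, \rho_{AB}) \;\geq\; F(\hat\tau_{AA'BB'},\, \rho_{AB} \otimes \rho_{A'B'}) \;\geq\; \sqrt{1-\eps^{4}},
\]
one obtains $C(\hat\tau_{AB}, \rho_{AB}) \leq \eps^{2} \leq \eps$ and likewise for the primed marginal. Hence both $\hat\tau_{AB}$ and $\hat\tau_{A'B'}$ are admissible candidates in the $\eps$-smoothing balls appearing on the right, so that
\[
\hmin(A|B)_{\hat\tau_{AB}} \leq \hmine{\eps}{A|B}_{\rho_{AB}} \quad\text{and}\quad \hmin(A'|B')_{\hat\tau_{A'B'}} \leq \hmine{\eps}{A'|B'}_{\rho_{A'B'}}.
\]

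The second step is to reduce the lemma to the subadditivity of the non-smooth conditional min-entropy under the partial trace,
\[
\hmin(AA'|BB')_{\hat\tau} \;\leq\; \hmin(A|B)_{\hat\tau_{AB}} + \hmin(A'|B')_{\hat\tau_{A'B'}}.
\]
For cq-states in $A,A'$ I would derive this from the operational characterization $\hmin = -\log\pg$ by using the product POVM $\{M^{a}_{B}\otimes M^{a'}_{B'}\}$ built from the optimal marginal POVMs and showing its joint success probability dominates the product of the two marginal optimal success probabilities. For general qq-states the analogous path goes through the semidefinite program defining $\hmin$: from primal witnesses $\sigma_{B}$ and $\sigma'_{B'}$ for the two marginal SDPs I would form the product candidate $\omega_{BB'} := \sigma_{B}\otimes \sigma'_{B'}$ and establish the operator inequality $\hat\tau_{AA'BB'} \leq \id_{AA'} \otimes \omega_{BB'}$.

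The main obstacle lies exactly in this subadditivity step. It does not hold unconditionally: already the classical cq-state $\hat\tau_{AA'}$ uniform on $\{00,01,10\}$ has $\hmin(AA')=\log 3$ while $\hmin(A)+\hmin(A')=2\log(3/2)<\log 3$. The argument must therefore genuinely use the assumption that $\hat\tau$ lies within $C$-distance $\eps^{2}$ of the \emph{product} state $\rho_{AB} \otimes \rho_{A'B'}$, on which the product witness $\sigma_{B}\otimes\sigma'_{B'}$ is exactly feasible; the controlled violation on $\hat\tau$ is then absorbed into the gap between the tight $\eps^{2}$-smoothing on the left and the looser $\eps$-smoothing on the right. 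Quantifying this absorption is what accounts for the $\eps^{2}$ versus $\eps$ scaling in the statement and will be the heart of the proof.
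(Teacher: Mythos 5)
You have correctly located the obstruction, but the proposal does not overcome it, so there is a genuine gap. Your first step is fine: monotonicity of the fidelity under partial trace shows that the marginals $\hat\tau_{AB}$ and $\hat\tau_{A'B'}$ of any candidate $\hat\tau_{AA'BB'}$ in the $\eps^2$-ball around $\rho_{AB}\otimes\rho_{A'B'}$ lie in the $\eps$-balls around $\rho_{AB}$ and $\rho_{A'B'}$. But your second step, the state-wise subadditivity $\hmin(AA'|BB')_{\hat\tau} \leq \hmin(A|B)_{\hat\tau_{AB}} + \hmin(A'|B')_{\hat\tau_{A'B'}}$, is false in general, as your own counterexample shows, and the announced repair (``absorb the controlled violation into the $\eps^2$-versus-$\eps$ gap'') is precisely the missing content: nothing in the sketch quantifies it, and it is not clear it can be quantified along these lines. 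Proximity of $\hat\tau$ to the product state in the $C$-distance gives only \emph{additive} control, of order $\eps^2$, on overlaps and guessing probabilities, whereas the claimed entropic inequality requires \emph{multiplicative} control of the optimal values of the min-entropy SDPs; when those optimal values are themselves much smaller than $\eps^2$, an additive error of order $\eps^2$ translates into an unbounded error at the level of entropies. As written, the plan therefore reduces the lemma to an unproven statement that is essentially equivalent to the lemma itself.

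The paper avoids fixing a smoothed joint state altogether. It expresses $\hminee{A|B}$ as a semidefinite program over pairs $(\sigma_B,\hat\rho_{ABC})$ with the constraint $\Tr(\hat\rho_{ABC}\proj{\psi_{ABC}}) \geq 1-\oeps$, $\oeps=\eps^2$, passes to the dual (variables $r,s\geq 0$ and $Q_{AB}$ with $\id_B \geq Q_B$ and $Q_{AB}\otimes\id_C \geq r\proj{\psi_{ABC}} - s\,\id_{ABC}$, objective $r(1-\oeps)-s$), and then \emph{tensors the two marginal dual optima}: $\tilde Q = Q_{AB}\otimes Q_{A'B'}$, $\tilde r = rr'$, and a suitable $\tilde s$, which is feasible for the joint problem at smoothing $\tilde\oeps = \oeps+\oeps'-\oeps\oeps'$ and has value at least $\gamma(\oeps)\gamma'(\oeps')$. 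Multiplicativity of the dual values immediately gives the additivity of the entropies, and the $\eps$-versus-$\eps^2$ bookkeeping comes from the product $(1-\oeps)(1-\oeps')$ together with monotonicity in the smoothing parameter (using $2\oeps-\oeps^2 \geq \oeps^2$), not from absorbing a per-state violation of subadditivity. To complete your argument you would need this dual-combination idea (or an equivalent explicit construction of a feasible point for the joint SDP from the two marginal problems); that is the step your proposal is missing.
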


Thirdly, adding a classical register can only increase the smooth
min-entropy (see Appendix~\ref{app:monotonicity}):
\begin{lemma}[Monotonicity] \label{lem:monotone}
For a ccq-state $\rho_{XYE}$ and for all $\eps \geq 0$, it holds that
\begin{equation} \nonumber
\hminee{XY | E}  \geq \hminee{Y | E} \, .
\end{equation}
\end{lemma}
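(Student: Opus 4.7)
The plan is to reduce the smooth monotonicity to the non-smooth inequality $\hmin(XY|E) \geq \hmin(Y|E)$ for ccq-states, and then lift that inequality to the smoothed version by constructing, for every $\hat{\rho}_{YE} \in \ball(\rho_{YE})$, a ccq extension $\hat{\rho}_{XYE} \in \ball(\rho_{XYE})$ whose $YE$-marginal equals $\hat{\rho}_{YE}$. For the non-smooth claim, I would invoke the operational interpretation~\eqref{eq:duality}: any POVM $\set{M_{xy}}$ on $E$ used to guess $(X,Y)$ induces a POVM $\set{N_y \assign \sum_x M_{xy}}$ for guessing $Y$ alone, and since $\sum_{x'} M_{x'y} \geq M_{xy}$ in PSD order, the $Y$-strategy succeeds with probability at least that of the $XY$-strategy. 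Taking suprema yields $\pg(XY|E) \leq \pg(Y|E)$, hence $\hmin(XY|E) \geq \hmin(Y|E)$ for every ccq-state.

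For the smoothing step, I would first observe that the supremum defining $\hminee{Y|E}_\rho$ may be restricted to states classical in $Y$: applying the dephasing channel $\Pi_Y$ to any $\hat{\rho}_{YE} \in \ball(\rho_{YE})$ preserves membership in the ball (because $\rho_{YE}$ is itself classical in $Y$ and $C$ is non-increasing under CPTP maps) and can only increase $\hmin(Y|E)$ (any feasible $\sigma_E$ with $\hat{\rho}_{YE} \leq \id_Y \otimes \sigma_E$ remains feasible after dephasing). Given such a $\hat{\rho}_{YE}$, I would fix a purification $\ket{\psi}_{XYER}$ of $\rho_{XYE}$, which, viewed from the $YE$-side, also purifies $\rho_{YE}$ onto the auxiliary system $XR$. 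Uhlmann's theorem then supplies a purification $\ket{\hat{\psi}}_{XYER}$ of $\hat{\rho}_{YE}$ with $|\inp{\psi}{\hat{\psi}}| = F(\rho_{YE}, \hat{\rho}_{YE})$. Setting $\tilde{\rho}_{XYE} \assign \Tr_R \proj{\hat{\psi}}$ and $\hat{\rho}_{XYE} \assign \Pi_{XY}(\tilde{\rho}_{XYE})$, where $\Pi_{XY}$ dephases $X$ and $Y$ in their computational bases, produces a ccq state with $YE$-marginal $\hat{\rho}_{YE}$. Since $\rho_{XYE}$ is already ccq we have $\Pi_{XY}(\rho_{XYE}) = \rho_{XYE}$, so by CPTP-monotonicity of fidelity, $C(\hat{\rho}_{XYE}, \rho_{XYE}) \leq C(\ket{\hat{\psi}}, \ket{\psi}) = C(\hat{\rho}_{YE}, \rho_{YE}) \leq \eps$, placing $\hat{\rho}_{XYE}$ inside $\ball(\rho_{XYE})$.

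Combining the two steps, the non-smooth inequality applied to the ccq-state $\hat{\rho}_{XYE}$ yields $\hmin(XY|E)_{\hat{\rho}_{XYE}} \geq \hmin(Y|E)_{\hat{\rho}_{YE}}$, hence $\hminee{XY|E}_\rho \geq \hmin(Y|E)_{\hat{\rho}_{YE}}$ for every admissible $\hat{\rho}_{YE}$, and taking the supremum over $\hat{\rho}_{YE}$ delivers the lemma. The main obstacle is the extension step: one must simultaneously arrange for the lifted state to have the prescribed $YE$-marginal, lie within $\ball(\rho_{XYE})$, and retain the classical structure on which the guessing-probability bound depends. Uhlmann handles the marginal, CPTP-monotonicity of $C$ together with the classical structure of $\rho_{XYE}$ keeps the extension inside the ball, and composition with $\Pi_{XY}$ restores the classical structure; the remaining manipulations are routine.
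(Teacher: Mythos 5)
Your proposal is correct and follows essentially the same route as the paper's proof: reduce to the non-smooth case via the guessing-probability duality of Eq.~\eqref{eq:duality}, then lift a (near-)optimal $\hat{\rho}_{YE}$ in the ball to an extension of $\rho_{XYE}$ via Uhlmann's theorem and monotonicity of the fidelity under partial trace, so that the extension stays in $\ball(\rho_{XYE})$ and has the correct $YE$-marginal. The only difference is cosmetic: where the paper restores the ccq structure by noting that all the fidelities in the chain are equal (and cites Remark~3.2.4 of~\cite{renato:diss}), you achieve the same end by explicitly pinching with $\Pi_{XY}$ and using that $\rho_{XYE}$ is $\Pi_{XY}$-invariant, which is an equally valid (and self-contained) way to finish.
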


At last, we deduce a lower bound on the conditional smooth
min-entropy of product states. The following theorem is a
straightforward generalization of Theorem~7 in~\cite{TCR08} (see
also~\cite[Theorem~3.3.6]{renato:diss}) to the case where the states
are independently, but not necessarily identically distributed. The
theorem states that for a large number of independent states, the
conditional smooth min-entropy can be lower-bounded by the
conditional Shannon entropy. We note that it is a common feature of
equipartition theorems for classical or quantum information that the
assumption of i.i.d.~sources can be replaced by the weaker
assumption of non-i.i.d.~but independent sources (see
Appendix~\ref{app:aequi} for the proof).

\begin{theorem}[adapted from~\cite{TCR08}]\label{thm:aequi}
For $i=1,\ldots,n$, let $\rho_{i} \in \mS(\hil_{AB})$ be density operators.
Then, for any $\eps > 0$,
\begin{equation*}
\hminee{A^n|B^n}_{\bigotimes_{i=1}^n \rho_{i}} \geq \sum_{i=1}^n \left[ \H(A_i|B_i)_{\rho_i} \right] -
  \delta(\eps,\gamma)\sqrt{n}
\, ,
\end{equation*}
where, for $n \geq \frac{8}{5} \log \frac{2}{\eps^2}$, the error is given by
\[ \delta(\eps,\gamma) \assign 4 \log \gamma \sqrt{\log \frac{2}{\eps^2}}
\]
and the single-system entropy contribution by
\[ \gamma \leq 2 \max_i \sqrt{\rank(\rho_{A_i})} + 1 \, .
\]
\end{theorem}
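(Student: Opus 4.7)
The plan is to adapt the proof strategy of the i.i.d.\ quantum AEP (Renner, Thm.~3.3.6, and TCR08, Thm.~7), making only the modifications needed to drop the identical-distribution hypothesis. The key observation is that the i.i.d.\ assumption enters the usual proof \emph{only} through additivity of a suitable R\'enyi conditional entropy, not through identity of factors; once we replace the single-letter term $n\,\H(A|B)$ by a sum $\sum_i \H(A_i|B_i)_{\rho_i}$, the rest of the argument goes through unchanged.

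More concretely, I would proceed in four steps. First, recall the operator-Chernoff/smoothing inequality used to prove the AEP: for any density operator $\sigma_{AB}$ and any $\alpha>1$ sufficiently close to $1$,
\[
\hminee{A|B}_{\sigma}\;\geq\;\H_{\alpha}(A|B)_{\sigma}\;-\;\frac{f(\eps)}{\alpha-1},
\]
where $\H_\alpha$ is the conditional R\'enyi entropy of order $\alpha$ and $f(\eps)=O(\log(1/\eps))$. This inequality is proved in Renner's thesis without any i.i.d.\ hypothesis, so it applies verbatim to $\sigma_{A^nB^n}=\bigotimes_{i=1}^n \rho_i$. Second, use additivity of $\H_\alpha$ on tensor products,
\[
\H_{\alpha}(A^n|B^n)_{\bigotimes_i\rho_i}\;=\;\sum_{i=1}^n \H_{\alpha}(A_i|B_i)_{\rho_i},
\]
which holds factor-by-factor and is what replaces the clean $n\,\H_\alpha(A|B)$ term of the i.i.d.\ case. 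Third, Taylor-expand $\H_\alpha$ around $\alpha=1$ on each factor:
\[
\H_{\alpha}(A_i|B_i)_{\rho_i}\;\geq\;\H(A_i|B_i)_{\rho_i}\;-\;(\alpha-1)\,K_i,
\]
where the coefficient $K_i$ can be bounded uniformly by a quantity depending on $\log\sqrt{\rank(\rho_{A_i})}$ (this is where the $\gamma$ factor in the statement originates, cf.\ the single-letter bound in TCR08). Summing over $i$ and putting everything together yields
\[
\hminee{A^n|B^n}\;\geq\;\sum_{i=1}^n \H(A_i|B_i)_{\rho_i}\;-\;n(\alpha-1)\log\gamma\;-\;\frac{f(\eps)}{\alpha-1}.
\]
Fourth, optimize over $\alpha>1$: balancing the two error terms by choosing $\alpha-1\sim \sqrt{f(\eps)/(n\log\gamma)}$ gives a correction of order $\sqrt{n\,\log\gamma\,\log(1/\eps)}$, which matches the stated $\delta(\eps,\gamma)\sqrt{n}$ once constants are tracked, provided $n\geq \tfrac{8}{5}\log(2/\eps^2)$ so that the chosen $\alpha$ lies in the range where the Chernoff/smoothing inequality of Step~1 is valid.

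The main obstacle is purely bookkeeping: one must verify that the Taylor-expansion constant $K_i$ in Step~3 admits a bound in terms of $\rank(\rho_{A_i})$ alone (independent of $\rho_{B_i}$), and that the proof of the smoothing inequality in Step~1 really does not secretly invoke the i.i.d.\ structure (e.g.\ via a typicality argument that would otherwise require identical marginals). Both points can be checked by inspecting the proofs in TCR08 and Renner's thesis: the smoothing lemma is stated for an arbitrary bipartite state, and the derivative bound on $\H_\alpha$ at $\alpha=1$ depends on the support size of the $A$-marginal only. Once these two facts are confirmed, the non-i.i.d.\ generalization follows mechanically, and the proof is deferred to Appendix~\ref{app:aequi} as indicated.
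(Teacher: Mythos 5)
Your proposal follows essentially the same route as the paper's proof: relate the smooth min-entropy to a conditional R\'enyi entropy via the smoothing bound, exploit factor-by-factor additivity of the R\'enyi conditional entropy (the only place where the i.i.d.\ hypothesis was used), bound each single-letter R\'enyi term by the von Neumann entropy with a correction controlled by $\rank(\rho_{A_i})$, and balance the two error terms by taking $\alpha-1\sim 1/\sqrt{n}$. The only cosmetic difference is bookkeeping you explicitly defer to TCR08: the per-factor correction there is $(\alpha-1)\log^2\gamma^i$ (via the function $r_\beta$), so the balanced error comes out as $\log\gamma\,\sqrt{n\log(2/\eps^2)}$, i.e.\ the stated $\delta(\eps,\gamma)\sqrt{n}$, rather than the $\sqrt{n\,\log\gamma\,\log(1/\eps)}$ you wrote.
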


For the case of independent cq-states in Hilbert spaces with the same dimensions, we obtain
\begin{corollary} \label{cor:AEP} % asymptotic equipartition
  For $i=1,\ldots,n$, let $\rho_{X_i B_i}$ be cq-states over (copies
  of) the same space $\hil_X \otimes \hil_B$. Then for every $\eps>0$ and $n \geq \frac{8}{5} \log \frac{2}{\eps^2}$,
\begin{equation} \label{eq:defRow}
\hminee{ X^n| B^n}_{\bigotimes_{i=1}^n \rho_{X_i B_i}} \geq \sum_{i=1}^n
\H(X| B)_{\rho_{X_i B_i}} - \delta n \, ,
\end{equation}
where $\delta \assign \sqrt{\frac{\log(2/\eps^2)}{n}} 4 \log(2\sqrt{\dim{\hil_{X}}}+1)$.
\end{corollary}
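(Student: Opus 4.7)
The plan is to derive the corollary as an immediate instantiation of Theorem~\ref{thm:aequi}, specialized to the situation where the $A$-system is a classical register of fixed dimension and all factors live on copies of the same bipartite Hilbert space. The only real work is tracking how the constants in the error term simplify.

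First, I would identify $A_i := X_i$ and note that for a cq-state on $\hil_X \otimes \hil_B$ the marginal $\rho_{X_i}$ is diagonal in the classical basis of $\hil_X$, so $\rank(\rho_{X_i}) \leq \dim \hil_X$ for every $i$. Plugging this uniform rank bound into the definition of $\gamma$ from Theorem~\ref{thm:aequi} yields
\[
\gamma \;\leq\; 2\max_i \sqrt{\rank(\rho_{X_i})} + 1 \;\leq\; 2\sqrt{\dim \hil_X} + 1.
\]

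Second, I would substitute this bound into the error term of Theorem~\ref{thm:aequi}: the quantity $\delta(\eps,\gamma)\sqrt{n} = 4 \log\gamma \sqrt{\log(2/\eps^2)} \cdot \sqrt{n}$ becomes $4 \log(2\sqrt{\dim \hil_X}+1) \sqrt{\log(2/\eps^2)} \cdot \sqrt{n}$. Writing $\sqrt{n} = n / \sqrt{n}$ regroups this as $\delta n$ with $\delta = \sqrt{\log(2/\eps^2)/n} \cdot 4 \log(2\sqrt{\dim \hil_X}+1)$, which is precisely the constant in the statement of the corollary. The hypothesis $n \geq \tfrac{8}{5}\log(2/\eps^2)$ carries over unchanged from the theorem.

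Finally, because all $\rho_{X_i B_i}$ live on copies of the same space $\hil_X \otimes \hil_B$, the individual conditional von Neumann entropies $\H(X|B)_{\rho_{X_i B_i}}$ are directly summable, and the lower bound $\sum_i \H(A_i|B_i)_{\rho_i}$ from Theorem~\ref{thm:aequi} matches the right-hand side of~\eqref{eq:defRow}. There is no genuine obstacle; the heavy lifting (smoothing, relating min-entropy to Shannon entropy for independent-but-not-i.i.d.\ states) is already done in Theorem~\ref{thm:aequi}. The only points requiring care are the uniform rank bound coming from classicality of $X_i$ and the rearrangement of the error term from a $\sqrt{n}$-scaling into the linear $\delta n$ scaling used in the corollary.
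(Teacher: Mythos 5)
Your derivation is correct and is exactly how the paper obtains the corollary: it is stated as a direct specialization of Theorem~\ref{thm:aequi}, using $\rank(\rho_{X_i}) \leq \dim\hil_X$ (which holds trivially, classicality of $X_i$ not even being needed) to bound $\gamma \leq 2\sqrt{\dim\hil_X}+1$ and rewriting the $\delta(\eps,\gamma)\sqrt{n}$ error as $\delta n$. No gaps; the carry-over of the condition $n \geq \tfrac{8}{5}\log(2/\eps^2)$ and the term-by-term matching of the entropy sum are exactly as in the paper.
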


We use the properties of the smooth min-entropy to prove the
following two lemmas. These lemmas show that the (smooth)
min-entropy of two independent strings can be split.

\begin{lemma} \label{lemma:ESLnew} Let $\varepsilon \geq 0$, and let
  $\rho_{X_0 E_0},\rho_{X_1 E_1}$ be two independent cq-states with
$$  
\mbox{$\hmin^{\eps^2}(X_0 X_1 | E_0 E_1) \geq \alpha$}\ .
$$  
Additionally, let $S_0, S_1$ be
  classical random variables distributed over $\{0,1\}^\ell$.  Then,
  there exists a random variable $D \in \set{0,1}$ such that
  \mbox{$\hmin^{\eps}(X_{\ol{D}} D S_D|E_0 E_1) \geq \alpha/2$}.
\end{lemma}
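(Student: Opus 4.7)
The plan is to first apply additivity (Lemma~\ref{lem:additivity}) to the two independent cq-states $\rho_{X_0 E_0}$ and $\rho_{X_1 E_1}$, which gives
\[
\alpha \;\le\; \hmine{\eps^2}{X_0 X_1 | E_0 E_1} \;\le\; \hmine{\eps}{X_0 | E_0} + \hmine{\eps}{X_1 | E_1}.
\]
Consequently at least one of the two summands is $\ge \alpha/2$; write $d \in \{0,1\}$ for its index, so that $\hmine{\eps}{X_d | E_d} \ge \alpha/2$. I would then simply set $D$ to be the constant bit $\ol{d}$, which makes $X_{\ol{D}} = X_d$ and $S_D = S_{\ol{d}}$, reducing the target inequality to $\hmine{\eps}{X_d D S_{\ol{d}} | E_0 E_1} \ge \alpha/2$.

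The second step is to peel off the two extra classical registers $D$ and $S_{\ol{d}}$ using monotonicity (Lemma~\ref{lem:monotone}): since $X_d$, $D$ and $S_{\ol{d}}$ are all classical, applying the lemma with the combined classical register $(D, S_{\ol{d}})$ on the $A$-side yields
\[
\hmine{\eps}{X_d D S_{\ol{d}} | E_0 E_1} \;\ge\; \hmine{\eps}{X_d | E_0 E_1}.
\]
Note that this bound makes no use of any joint distribution between $(S_0,S_1)$ and the remaining registers, so we do not need to assume that $S_0, S_1$ are independent of the cq-states.

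Finally, I would dispose of the extra conditioning register $E_{\ol{d}}$. By the independence of the two cq-states in the hypothesis, the marginal on $(X_d, E_0, E_1)$ factors as $\rho_{X_d E_d} \otimes \rho_{E_{\ol{d}}}$. Given any $\hat\rho_{X_d E_d} \in \ball(\rho_{X_d E_d})$ that attains the supremum in the definition of $\hmine{\eps}{X_d | E_d}$, the tensor $\hat\rho_{X_d E_d} \otimes \rho_{E_{\ol{d}}}$ lies in $\ball(\rho_{X_d E_0 E_1})$ because the distance $C$ is invariant under tensoring with a common fixed state, and its conditional min-entropy of $X_d$ given $E_0 E_1$ equals the min-entropy of $X_d$ given $E_d$ computed on $\hat\rho_{X_d E_d}$. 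Thus $\hmine{\eps}{X_d | E_0 E_1} \ge \hmine{\eps}{X_d | E_d} \ge \alpha/2$, and combined with the monotonicity step this completes the proof.

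The main obstacle is the last step: the fact that appending an independent fixed register on the conditioning side does not decrease the smooth min-entropy has to be derived directly from the definition of $\hmine{\eps}{\cdot | \cdot}$ and from the behavior of $C$ under tensor products, since neither the chain rule nor additivity provides it for free. Once that is granted, the rest of the argument is a short and clean assembly of the three tools collected in Section~\ref{sec:propsmooth}.
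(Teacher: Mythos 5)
Your proposal is correct and follows essentially the same route as the paper: additivity (Lemma~\ref{lem:additivity}) to split the min-entropy, choice of the larger summand to define $D$, and monotonicity (Lemma~\ref{lem:monotone}) to add the classical registers $D$ and $S_D$. Your explicit final step---showing via the tensor-product structure and the behavior of $C$ that appending the independent register $E_{\ol{d}}$ on the conditioning side cannot decrease the smooth min-entropy---is a detail the paper's proof leaves implicit, and you fill it in correctly.
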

\begin{proof}
  From the additivity of smooth min-entropy
  (Lemma~\ref{lem:additivity}) it follows that we can split the
  min-entropy as
\begin{equation*}
\hmine{\eps}{X_0 | E_0} + \hmine{\eps}{X_1 | E_1} \geq
\hmine{\eps^2}{X_0 X_1 | E_0 E_1} \geq \alpha \, ,
\end{equation*}
and therefore, there exists $D \in \set{0,1}$ such that
$$
\hmine{\eps}{X_{\ol{D}} D S_D| E_0 E_1} \geq \alpha/2 \, ,
$$
where we used the monotonicity of smooth min-entropy
(Lemma~\ref{lem:monotone}).
\end{proof}

\begin{lemma} \label{lemma:ESLident} Let $\eps \geq 0$. Let $\rho_{X
    E} = \bigotimes_{i=0}^{m-1} \rho_{X_i E_i}$ be a cq-state consisting
  of $m$ independent cq-substates such that $\hmine{\eps^2}{X_i X_j|E}
  \geq \alpha$ for all $i \neq j$. Then there exists a random variable
  $V$ over $\set{1,\ldots,m}$ such that for any $v \in
  \set{1,\ldots,m}$ with $P[V\!\neq\!v] > 0$
$$
\hmine{\eps}{X_v|E_v V,V\!\neq\!v} \geq \alpha/2 - \log(m)\, .
$$
\end{lemma}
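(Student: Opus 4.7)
The plan is to pick $V$ as the deterministic index with the smallest individual smooth min-entropy, and use additivity together with the pairwise hypothesis to conclude that every other index retains at least $\alpha/2$ smooth min-entropy. The $-\log m$ slack in the bound is not needed for this choice of $V$, but it accommodates more general (e.g.\ data-dependent) choices of $V$ via the chain rule.

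First, I would exploit independence to drop the irrelevant subsystems from the conditioning. Since $\rho_{XE} = \bigotimes_{i=0}^{m-1} \rho_{X_i E_i}$, the registers $E_k$ for $k \notin \{i,j\}$ are in a product state with $X_i X_j E_i E_j$, so they can be removed from the conditioning without changing the smooth min-entropy --- a fact analogous to \eqref{eq:indep} for the non-uniformity, which follows because the smoothing ball is closed under tensoring with a fixed independent state. Combined with the additivity property (Lemma~\ref{lem:additivity}), this yields
\begin{equation*}
\hmine{\eps}{X_i \mid E_i} + \hmine{\eps}{X_j \mid E_j} \;\geq\; \hmine{\eps^2}{X_i X_j \mid E_i E_j} \;=\; \hmine{\eps^2}{X_i X_j \mid E} \;\geq\; \alpha
\end{equation*}
for every pair $i \neq j$.

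Next, I would set $V \assign \arg\min_i \hmine{\eps}{X_i \mid E_i}$, breaking ties arbitrarily, and regard $V$ as a deterministic element of $\{1,\ldots,m\}$. Applying the above bound to the pair $(V, v)$ for any $v \neq V$, and using that $\hmine{\eps}{X_V \mid E_V}$ is minimal among all indices, immediately gives $\hmine{\eps}{X_v \mid E_v} \geq \alpha/2$.

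Finally, for any $v$ with $P[V \neq v] > 0$: since $V$ is a constant, this forces $v \neq V$ and the event $V \neq v$ has probability $1$. Conditioning on this event together with the constant register $V$ leaves the state unchanged, so
$$\hmine{\eps}{X_v \mid E_v V, V \neq v} \;=\; \hmine{\eps}{X_v \mid E_v} \;\geq\; \alpha/2 \;\geq\; \alpha/2 - \log m,$$
completing the argument. If instead one wishes to allow $V$ to be a random variable derived from Bob's optimal guess on $E$, the $-\log m$ term is exactly what the chain rule (Lemma~\ref{lem:chain}) costs to move $V$ from the primary register to the conditioning, so the stated bound is the ``universal'' version good for any such $V$.

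The main obstacle I anticipate is the first bullet above: rigorously justifying that discarding the independent factors $E_k$ preserves the smooth min-entropy. This should be routine --- pick an optimal smoothing $\hat\rho_{X_i X_j E_i E_j}$ of the marginal, tensor with $\rho_{E_{\text{rest}}}$, and verify that the purified distance is preserved --- but it is the only step not stated as a named property earlier in the excerpt.
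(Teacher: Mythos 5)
Your proposal is correct and follows essentially the same route as the paper: additivity (Lemma~\ref{lem:additivity}) applied to each pair, together with choosing $V$ as the index minimizing $\hmine{\eps}{X_i|E_i}$, yields $\hmine{\eps}{X_v|E_v}\geq \alpha/2$ for every $v\neq V$. The only difference is in the last step, where the paper invokes the chain rule (Lemma~\ref{lem:chain}) to move the register $V$ into the conditioning, which is exactly where the $-\log m$ comes from, while you observe that for this deterministic $V$ the conditioning is free and get the slightly stronger bound $\alpha/2$; both are valid, and your explicit data-processing step for dropping the remaining $E_k$'s is a point the paper leaves implicit.
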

\begin{proof}
  Let $V \in \set{0, \ldots, m-1}$ be the index which achieves
  the minimum of $\hmine{\eps}{X_i | E_i}$, i.e.
  $\hmine{\eps}{X_V | E_V} = \min_i
  \hmine{\eps}{X_i | E_i}$. By the additivity of smooth
  min-entropy (Lemma~\ref{lem:additivity}), we have for all $v \neq V$,
$$\alpha \leq \hmine{\eps^2}{X_v X_V| E} \leq \hmine{\eps}{X_v | E_v}
+ \hmine{\eps}{X_V |  E_V} \, .$$
It follows that $\hmine{\eps}{X_v | E_v, V \neq
  v} \geq \alpha /2$. The chain rule (Lemma~\ref{lem:chain}) then
leads to the claim.
\end{proof}

\subsection{Tools}

We also require the following technical results. This lemma is well-known, see \cite{AS00} or \cite{MR95} for a proof.
\begin{lemma}[Chernoff's inequality] \label{lem:chernoff}
  Let $X_1,\ldots,X_n$ be identically and independently distributed
  random variables with Bernoulli distribution, i.e.~$X_i=1$ with
  probability $p$ and $X_i=0$ with probability $1-p$. Then $S \assign
  \sum_{i=1}^n X_i$ has a binomial distribution with parameters $(n,p)$
  and it holds that
\[ \Pr\left[ \: |S - pn| > \eps n \: \right] \leq 2 e^{- 2 \eps^2 n} \, .
\]
\end{lemma}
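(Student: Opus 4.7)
The plan is to prove the two-sided tail bound by the standard Chernoff / Hoeffding argument: bound each one-sided tail by applying Markov's inequality to the moment generating function, use independence to factor the MGF, bound each factor by Hoeffding's lemma, optimize over the free parameter, and then union-bound the two tails.

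More concretely, I would first fix $t>0$ and write
\[
\Pr[\,S-pn > \eps n\,] \;=\; \Pr[\,e^{t(S-pn)} > e^{t\eps n}\,] \;\leq\; e^{-t\eps n}\,\E\!\left[e^{t(S-pn)}\right],
\]
using Markov's inequality. By independence of the $X_i$ this factors as $\prod_{i=1}^n \E[e^{t(X_i-p)}]$. Each $Y_i \assign X_i - p$ is a zero-mean random variable supported in $[-p,\,1-p]$, which is an interval of length $1$, so Hoeffding's lemma gives $\E[e^{t Y_i}] \leq e^{t^2/8}$. Substituting back,
\[
\Pr[\,S-pn > \eps n\,] \;\leq\; \exp\!\left(-t\eps n + \tfrac{n t^2}{8}\right).
\]
Choosing $t \assign 4\eps$ to minimize the exponent yields the bound $\exp(-2\eps^2 n)$ for the upper tail.

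Applying the identical argument to the random variables $1-X_i$ (which are i.i.d.\ Bernoulli$(1-p)$) gives the matching bound $\Pr[\,pn - S > \eps n\,] \leq e^{-2\eps^2 n}$ for the lower tail. A union bound over the two events then gives the claimed two-sided inequality $\Pr[\,|S-pn|>\eps n\,] \leq 2e^{-2\eps^2 n}$.

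The only non-trivial ingredient is Hoeffding's lemma, which I would prove in a short self-contained step: for any zero-mean $Y$ bounded in $[a,b]$, convexity of the exponential gives $e^{tY} \leq \tfrac{b-Y}{b-a}e^{ta} + \tfrac{Y-a}{b-a}e^{tb}$; taking expectations eliminates the linear term and leaves a function of $t$ whose logarithm is easily shown, by bounding its second derivative by $(b-a)^2/4$ and invoking Taylor's theorem around $t=0$, to be at most $t^2(b-a)^2/8$. This is the only calculational step worth spelling out; everything else is algebraic optimization and a union bound.
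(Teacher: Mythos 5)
Your proof is correct: the Markov--MGF argument with Hoeffding's lemma, the choice $t=4\eps$ giving the exponent $-2\eps^2 n$, and the union bound over the two tails exactly yield the stated inequality. The paper does not prove this lemma at all but simply cites standard references (\cite{AS00,MR95}), and your argument is precisely the standard Chernoff--Hoeffding proof found there, so nothing further is needed.
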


\paragraph{Privacy Amplification}
The OT protocol makes use of two-universal hash functions.  These hash
functions are used for privacy amplification similar as in quantum key
distribution. A class $\setF$ of functions $f: \01^n \rightarrow
\01^\ell$ is called two-universal, if for all $x\neq y \in \01^n$ and
$f \in \setF$ chosen uniformly at random from $\setF$, we have
$\Pr[f(x) = f(y)] \leq 2^{-\ell}$~\cite{CarWeg79}.
The following theorem expresses how the application of hash functions
can increase the privacy of a random variable X given a quantum
adversary holding $\rho_E$, the function $F$ and a classical random
variable $U$:

\begin{theorem}[\cite{renato:diss,serge:new}] \label{thm:PA} Let
  $\setF$ be a class of two-universal hash functions from $\01^n$ to
  $\01^\ell$.  Let $F$ be a random variable that is uniformly and
  independently distributed over $\setF$, and let $\rho_{XUE}$ be a
  ccq-state. Then, for any $\eps \geq 0$,
$$
d(F(X)|F,U,E) \leq 2^{-\frac{1}{2}\left(\hminee{X|UE} - \ell
    \right)-1}+ \eps \, .
$$
\end{theorem}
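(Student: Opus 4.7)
The plan is the standard two-step reduction for smooth leftover hashing: first absorb the smoothing parameter $\eps$ via a triangle inequality that reduces the claim to the non-smooth leftover hash lemma, then prove the non-smooth bound by combining two-universality with an $L_2$-to-$L_1$ Cauchy--Schwarz argument in a $\sigma$-weighted Hilbert--Schmidt norm.

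For the first step, let $\hat{\rho}_{XUE} \in \ball(\rho_{XUE})$ be a (near-)maximizer in the definition of $\hminee{X|UE}_\rho$, so $\hmin(X|UE)_{\hat\rho} \geq \hminee{X|UE}_\rho$ and $C(\rho_{XUE},\hat\rho_{XUE}) \leq \eps$. Appending $F$ uniformly and copying $F(X)$ into a classical register is a trace-preserving quantum operation, so the $C$-distance between $\rho_{F(X)FUE}$ and $\hat\rho_{F(X)FUE}$ remains at most $\eps$, and hence so does the trace distance. By the triangle inequality applied to the non-uniformity $d(\cdot|\cdot)$, it suffices to prove the stronger non-smooth statement
\[ d(F(X)|F,U,E)_{\hat\rho} \leq 2^{-\frac{1}{2}(\hmin(X|UE)_{\hat\rho} - \ell) - 1}\, . \]

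For the second step, let $\sigma_{UE} \in \pos(\hil_{UE})$ achieve the minimum in the SDP defining $\hmin(X|UE)_{\hat\rho}$, so $\hat\rho_{XUE} \leq \id_X \otimes \sigma_{UE}$ and $\Tr(\sigma_{UE}) = 2^{-\hmin(X|UE)_{\hat\rho}}$. The core estimate is to expand the $\sigma$-weighted Hilbert--Schmidt distance
\[ \E_F \bigl\| (\id \otimes \sigma_{UE}^{-1/4})\bigl(\hat\rho_{F(X)FUE} - \tfrac{\id_{F(X)F}}{|\setF|\,2^\ell} \otimes \hat\rho_{UE}\bigr)(\id \otimes \sigma_{UE}^{-1/4}) \bigr\|_2^2\, . \]
Expanding the square produces a sum over pairs $(x,y)$ of terms containing $\Pr_F[F(x)=F(y)]$; two-universality bounds the $x\neq y$ contributions by $2^{-\ell}$, which cancels the subtracted uniform term and leaves a residual bounded by $2^{-\ell}\,\Tr\!\bigl((\sigma_{UE}^{-1/4}\hat\rho_{XUE}\sigma_{UE}^{-1/4})^2\bigr) \leq 2^{-\ell}\,\Tr(\sigma_{UE})$, where the last inequality uses $\hat\rho_{XUE}\leq \id_X\otimes \sigma_{UE}$. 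A weighted Cauchy--Schwarz step (inserting back $\id\otimes \sigma_{UE}^{1/4}$) together with Jensen's inequality to pull $\E_F$ inside the square root yields
\[ \E_F \bigl\|\hat\rho_{F(X)FUE} - \tfrac{\id}{|\setF|\,2^\ell} \otimes \hat\rho_{UE}\bigr\|_1 \leq \sqrt{2^{-\ell}\,\Tr(\sigma_{UE})} = 2^{-\frac{1}{2}(\hmin(X|UE)_{\hat\rho} - \ell)}\, . \]
Dividing by two (the prefactor in the definition of $d$) gives the non-smooth bound, and combining with the smoothing step adds the $+\eps$.

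The main obstacle is the $\sigma$-weighted $L_2$-norm manipulation in step two: one must carefully insert $\sigma_{UE}^{-1/4}$ on both sides so that the two-universality of $\setF$ cancels exactly the subtracted uniform term, and then recover an inequality in the unweighted trace norm without losing the $2^{-\hmin/2}$ factor. Once this computation is set up correctly, the reduction of the smooth case to the non-smooth case via the monotonicity of trace distance under quantum operations is routine.
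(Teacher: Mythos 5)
Your proposal reconstructs the standard proof from the cited sources; note that the paper itself does not prove Theorem~\ref{thm:PA} but imports it from \cite{renato:diss,serge:new}, so the relevant comparison is with that standard argument, whose two-step architecture (smoothing reduction, then a two-universality computation in a $\sigma$-weighted $L_2$ norm) you reproduce correctly in outline. However, the core estimate is wrong as written, not merely up to constants. Your final chain asserts $\E_F\|\hat\rho_{F(X)FUE}-\tfrac{\id}{|\setF|2^{\ell}}\otimes\hat\rho_{UE}\|_1\le\sqrt{2^{-\ell}\Tr(\sigma_{UE})}=2^{-\frac12(\hmin(X|UE)_{\hat\rho}-\ell)}$; the asserted equality is false (the left-hand side equals $2^{-\frac12(\hmin(X|UE)_{\hat\rho}+\ell)}$), and a bound of that shape cannot hold, since it would \emph{improve} as the output length $\ell$ grows. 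The bookkeeping of $2^{\pm\ell}$ is off in two places. First, the factor $2^{\ell/2}$ in the true bound comes from the Cauchy--Schwarz/weighting step: the weight operator on the joint output-plus-side-information system is $\id_{F(X)}\otimes\sigma_{UE}$, whose trace is $2^{\ell}\Tr(\sigma_{UE})$ (equivalently, from a Cauchy--Schwarz over the $2^{\ell}$ hash values), not from the two-universality term. Second, after two-universality cancels the subtracted uniform part, the residual is the collision term $\Tr\bigl((\sigma_{UE}^{-1/4}\hat\rho_{XUE}\sigma_{UE}^{-1/4})^2\bigr)$ with \emph{no} extra $2^{-\ell}$, and the operator inequality $\hat\rho_{XUE}\le\id_X\otimes\sigma_{UE}$ bounds it by $\Tr(\hat\rho_{XUE})\le 1$, not by $\Tr(\sigma_{UE})$: your claimed inequality $\Tr\bigl((\sigma_{UE}^{-1/4}\hat\rho_{XUE}\sigma_{UE}^{-1/4})^2\bigr)\le\Tr(\sigma_{UE})$ is false in general (take $X$ uniform on $n$ bits with trivial $U,E$: the left side is $1$, while $\Tr(\sigma_{UE})=2^{-n}$). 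With the corrected accounting one gets $\E_F\|\cdot\|_1\le\sqrt{2^{\ell}\Tr(\sigma_{UE})\cdot 1}=2^{-\frac12(\hmin(X|UE)_{\hat\rho}-\ell)}$, and the prefactor $\tfrac12$ then yields the stated non-smooth bound.

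A smaller issue is the smoothing step: as you set it up, the triangle inequality produces \emph{two} error terms, one from replacing $\hat\rho_{F(X)FUE}$ by $\rho_{F(X)FUE}$ and one from replacing the reference state $\tfrac{\id}{2^{\ell}}\otimes\hat\rho_{FUE}$ by $\tfrac{\id}{2^{\ell}}\otimes\rho_{FUE}$, each bounded by the trace distance $\le C(\rho,\hat\rho)\le\eps$. So your reduction as written delivers $+2\eps$, not the claimed $+\eps$; recovering the additive $\eps$ of the statement requires a more careful treatment of the smoothing (this is a convention-sensitive constant in the cited works, but your argument does not reach the bound you claim).
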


\section{1-2 Oblivious Transfer} \label{sec:12OT}

\subsection{Security Definition and Protocol}

In this section we prove the security of a randomized version of 1-2
OT (Theorem \ref{thm:secure1}) from which we can easily obtain 1-2
OT. In such a randomized 1-2 OT protocol, Alice does not input two
strings herself, but instead receives two strings $S_0$, $S_1 \in
\01^\ell$ chosen uniformly at random. Randomized OT (ROT) can easily
be converted into OT. After the ROT protocol is completed, Alice uses
her strings $S_0,S_1$ obtained from ROT as one-time pads to encrypt
her original inputs $\hat{S_0}$ and $\hat{S_1}$, i.e.~she sends an
additional classical message consisting of $\hat{S_0} \oplus S_0$ and
$\hat{S_1} \oplus S_1$ to Bob. Bob can retrieve the message of his
choice by computing $S_C \oplus (\hat{S}_C \oplus S_C) =
\hat{S}_C$. He stays completely ignorant about the other message
$\hat{S}_{\ol{C}}$ since he is ignorant about $S_{\ol{C}}$. The
security of a quantum protocol implementing ROT is formally defined in
\cite{serge:new} and justified in~\cite{FS09} (see
also~\cite{WW08:compose}).

\begin{definition} \label{def:ROT}
An $\eps$-secure 1-2 $\mbox{ROT}^\ell$ is a protocol between Alice
and Bob, where Bob has input $C \in \01$, and Alice has no input.
\begin{itemize}
\item (Correctness) If both parties are honest, then for any
  distribution of Bob's input $C$, Alice gets outputs $S_0,S_1 \in
  \01^\ell$ which are $\eps$-close to uniform and independent of $C$
  and Bob learns $Y = S_C$ except with probability $\eps$.
\item (Security against dishonest Alice) If Bob is honest and obtains output $Y$,
  then for any cheating strategy of Alice resulting in her state
  $\rho_A$, there exist random variables $S'_0$ and $S'_1$ such that
  $\Pr[Y=S'_C] \geq 1- \eps$ and $C$ is independent of $S'_0$,$S'_1$
  and $\rho_A$\footnote{Existence of the random variables
    $S'_0,S'_1$ has to be understood as follows: given the cq-state
    $\rho_{Y \! A}$ of honest Bob and dishonest Alice, there exists a
    cccq-state $\rho_{Y S'_0 S'_1 A}$ such that tracing out the
    registers of $S'_0,S'_1$ yields the original state $\rho_{Y A}$
    and the stated properties hold.}.
\item (Security against dishonest Bob) If Alice is honest, then for any cheating strategy of Bob resulting in his state $\rho_B$,
there exists a random variable $D \in \01$ such that $d(S_{\ol{D}}|S_{D}D\rho_B) \leq \eps$.
\end{itemize}
\end{definition}

For convenience, we choose $\{+,\times\}$ instead of $\01$ as domain
of Bob's choice bit $C$. We consider the same protocol for ROT as
in~\cite{serge:new}.
\begin{protocol}[\cite{serge:new}]1-2 $\mbox{ROT}^\ell$ \label{prot:nonoise}
\begin{enumerate}
\item Alice picks $X \in_R \01^n$ and $\Theta \in_R \{+,\times\}^n$.
  Let $\setI_b = \{i\mid \Theta_i = b\}$ for $b \in \{+,\times\}$. At
  time $t=0$, she sends $\ket{X_1}_{\Theta_1},\ldots,\ket{X_n}_{\Theta_n}$ to Bob.
\item Bob measures all qubits in the basis corresponding to his choice
  bit $C \in \{+,\times\}$.  He obtains outcome $X' \in \01^n$.
\item Alice picks two hash functions $F_+,F_\times \in_R
  \setF$, where $\setF$ is a class of two-universal hash functions.
   At the reveal time $t=T_{\rm rev}$, she sends $\setI_+$,$\setI_\times$, $F_+$,$F_\times$ to Bob.
Alice outputs $S_+ = F_+(X_{|\setI_+})$ and $S_{\times}
  = F_\times(X_{|\setI_\times})$ \footnote{If $X_{|\setI_b}$ is less
    than $n$ bits long Alice pads the string $X_{|\setI_b}$ with 0's
    to get an $n$ bit-string in order to apply the hash function to
    $n$ bits.}.
\item Bob outputs $S_C = F_C(X'_{|\setI_C})$.
\end{enumerate}
\end{protocol}

\subsection{Security Analysis}
We show in this section that Protocol~\ref{prot:nonoise} is secure
according to Definition~\ref{def:ROT}, in case the dishonest receiver
is restricted to individual-storage attacks. 
\begin{figure}
\begin{center}
 \includegraphics{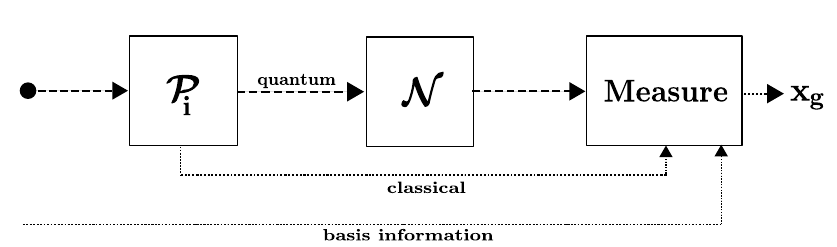}
  \caption{Bob performs a partial measurement $\mP_i$, followed by
    noise $\mN$, and outputs a guess bit $x_g$ depending on his
    classical measurement outcome, the remaining quantum state, and
    the additional basis
    information. \label{figure:depolModel}}
\end{center}
\end{figure}

\paragraph{Correctness}
First of all, note that it is clear that the protocol fulfills its task correctly.  Bob can
determine the string $X_{|\setI_C}$ (except with negligible
probability $2^{-n}$ the set ${\cal I}_C$ is non-empty) and hence
obtains $S_C$.  Alice's outputs $S_+,S_\times$ are
  perfectly independent of each other and of $C$.

\paragraph{Security against Dishonest Alice}
Security holds in the same way as
shown in~\cite{serge:new}. As the protocol is non-interactive, Alice
never receives any information from Bob at all, and Alice's input
strings can be extracted by letting her interact with an unbounded
receiver.

\paragraph{Security against Dishonest Bob}

Proving that the protocol is secure against Bob requires more work.
Our goal is to show that there
exists a $D \in \{+,\times\}$ such that Bob is completely ignorant
about $S_{\ol{D}}$.

Recall that in round $i$, honest Alice picks $X_i \in_R \set{0,1}$
and $\Theta_i \in_R \set{+,\times}$ and sends $\ket{X_i}_{\Theta_i}$
to dishonest Bob. Bob can subsequently do a partial measurement to
obtain the classical outcome $K_i$ and store the remaining quantum
state in register $E_i$ which is then subject to noise (see
Figure~\ref{figure:depolModel}). It is important to note that Bob's
initial partial measurement does not depend on the basis information
$\Theta$. Since we are modeling individual-storage attacks, the
overall state (as viewed by Bob) for Alice and Bob right before time
$T_{\rm rev}$ is of the form
$$
\rho_{X \Theta K E } = \bigotimes_{i=1}^n \rho_{X_i \Theta_i K_i E_i}
\, ,
$$
with
\begin{equation}\label{eq:bobState}
\rho_{X_i \Theta_i K_i E_i} = \frac{1}{4} \sum_{x_i, \theta_i,k_i } p_{k_i | x_i \theta_i}
\underbrace{\proj{x_i}}_{\regX_i} \otimes \underbrace{\proj{\theta_i}}_{\regTheta_i} \otimes \underbrace{\proj{k_i}}_{\regK_i} \otimes
\underbrace{\mN_i \left( \rho_{x_i \theta_i}^{k_i}  \right)}_{\regE_i}
\, ,
\end{equation}
where we use $\regX_i$ to denote Alice's system corresponding to her
choice of bit $x_i$, $\regTheta_i$ for the system corresponding to her
choice of basis $\theta_i$, and $\regK_i$ and $\regE_i$ for Bob's
systems corresponding to the classical outcome $k_i$ (with probability
$p_{k_i|x_i \theta_i}$) of his partial measurement and his remaining
quantum system respectively.

It is clear that a dishonest receiver will have some uncertainty about
the bit $X_i$, given that he either measured the register $E$ without the correct basis information and that
storage noise occurred on the post-measurement quantum state. To formalize this uncertainty, let us call
$t$ an {\em uncertainty lower bound} on the conditional
Shannon entropy if, for all $i=1,\ldots,n$, we have
\begin{equation} \label{eq:vNlower}
\H(X_i|\Theta_i K_i E_i) = \H(\rho_{X_i \Theta_i K_i E_i}) - \H(\rho_{\Theta_i K_i E_i}) \geq t \, .
\end{equation}
The parameter $t$ thereby depends on the specific kind of noise in the quantum storage.
In Section~\ref{sec:dep}, we evaluate the uncertainty lower-bound $t$ for the case of depolarizing noise.

The following theorem shows that as long as $\ell
\lesssim tn/4$, the protocol is secure except with probability $\eps$.
Since we are performing $1$-out-of-$2$ oblivious transfer of $\ell$-bit strings, $\ell$ corresponds
to the ``amount'' of oblivious transfer we can perform for a given security parameter
$\eps$ and number of qubits $n$. In QKD, $\ell$ corresponds to the length of the key
generated.

\begin{theorem}
Protocol~\ref{prot:nonoise} is $2\eps$-secure against a dishonest
receiver Bob according to Definition~\ref{def:ROT}, if $n \geq \frac{8}{5} \log \frac{2}{\eps^4}$ and
$$
\ell \leq \frac{1}{4} \left(t - \delta \right)n + \frac{1}{2} -
\log\left(\frac{1}{\eps}\right) \, ,
$$
where $\delta = 8\sqrt{\log(2/\eps^4)/n}$, and $t$ is the
uncertainty lower bound on the conditional Shannon entropy
fulfilling Eq.~\eqref{eq:vNlower}. \label{thm:secure1}
\end{theorem}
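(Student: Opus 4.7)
}
Correctness is immediate (the set $\setI_C$ is empty with probability only $2^{-n}$, and $S_+,S_\times$ are uniform and independent of $C$ by construction), and security against dishonest Alice carries over verbatim from~\cite{serge:new} since the protocol is non-interactive. The entire content of the theorem therefore lies in the bound against dishonest Bob, which is what I focus on.

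Because we restrict to individual-storage attacks, the state just before $T_{\rm rev}$ (conditioned on Bob's side information) is the product state $\rho_{X\Theta KE}=\bigotimes_{i=1}^{n}\rho_{X_i\Theta_i K_i E_i}$ displayed in Eq.~\eqref{eq:bobState}, and by hypothesis each factor satisfies $\H(X_i|\Theta_i K_i E_i)\geq t$. The first step is to pass from this Shannon-entropy bound to an $\eps^2$-smooth min-entropy bound on the whole product by invoking Corollary~\ref{cor:AEP} with the classical register $X_i\in\{0,1\}$ (so that $\dim\hil_X=2$ and $4\log(2\sqrt 2+1)<8$), yielding
\[
\hmine{\eps^2}{X^n\,|\,\Theta^n K^n E^n}\;\geq\;nt-\delta n,\qquad \delta=8\sqrt{\log(2/\eps^4)/n}.
\]

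The second step is to split this joint min-entropy across the two index sets $\setI_+$ and $\setI_\times$. Since the single-round state is a product in $i$, grouping the factors by the value of $\Theta_i$ shows that the two halves
$(X_{|\setI_+},\Theta_{|\setI_+},K_{|\setI_+},E_{|\setI_+})$ and $(X_{|\setI_\times},\Theta_{|\setI_\times},K_{|\setI_\times},E_{|\setI_\times})$ are independent cq-states. I can therefore apply Lemma~\ref{lemma:ESLnew} with $S_+=F_+(X_{|\setI_+})$ and $S_\times=F_\times(X_{|\setI_\times})$ in the roles of the two classical strings, producing a random variable $D\in\{+,\times\}$ with
\[
\hmine{\eps}{X_{|\setI_{\ol D}}\,D\,S_D\,|\,\Theta K E}\;\geq\;(nt-\delta n)/2.
\]
Applying the chain rule (Lemma~\ref{lem:chain}) to move the classical registers $D$ and $S_D$ (with combined alphabet size $2\cdot 2^\ell$) into the conditioning, and using that $F_D$ is sampled independently of everything else so it can be added for free to the conditioning system, I obtain
\[
\hmine{\eps}{X_{|\setI_{\ol D}}\,|\,D\,S_D\,F_D\,\Theta K E}\;\geq\;(nt-\delta n)/2-\ell-1.
\]

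The final step feeds this into the privacy-amplification Theorem~\ref{thm:PA} with hash function $F_{\ol D}$ and auxiliary classical side information $U=(D,S_D,F_D)$, yielding
\[
d\!\left(S_{\ol D}\,\big|\,F_+,F_\times,D,S_D,\Theta K E\right)\;\leq\;2^{-\frac12\left((nt-\delta n)/2-\ell-1-\ell\right)-1}+\eps.
\]
Solving the inequality for the exponential term to be at most $\eps$ gives precisely the bound $\ell\leq\tfrac14(t-\delta)n+\tfrac12-\log(1/\eps)$ in the theorem statement, and the total security parameter against Bob becomes $2\eps$. The requirement $n\geq\tfrac{8}{5}\log(2/\eps^4)$ is exactly the hypothesis under which Corollary~\ref{cor:AEP} applies with smoothness parameter $\eps^2$.

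The only delicate step is the splitting: one must verify that after conditioning on the classical basis information $\Theta$ (equivalently, on the partition $(\setI_+,\setI_\times)$) the two halves really are independent cq-states in the sense required by Lemma~\ref{lemma:ESLnew}, and that $F_D$ may be dropped from/re-inserted into the conditioning without changing the smooth min-entropy because it is product-independent of all other registers. Everything else is a bookkeeping exercise using the properties of smooth min-entropy collected in Section~\ref{sec:propsmooth}.
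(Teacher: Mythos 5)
Your proposal reproduces the paper's own proof step for step: the equipartition bound via Corollary~\ref{cor:AEP}, the splitting via Lemma~\ref{lemma:ESLnew}, the chain rule with $\log(2\cdot 2^{\ell})=\ell+1$, and privacy amplification via Theorem~\ref{thm:PA}, with matching arithmetic throughout. The only cosmetic difference is that you spell out explicitly (rather than leave implicit, as the paper does) that $F_D$ can be appended to the conditioning for free because it is independent of everything else, and that the two halves are independent cq-states once one conditions on the partition $(\setI_+,\setI_\times)$ induced by $\Theta$.
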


\begin{proof}
  We need to show the existence of a binary random variable $D$ such
  that $S_{\bar{D}}$ is $\eps$-close to uniform from Bob's point of
  view. As noted above, the overall state of Alice and Bob before time
  $T_{\rm rev}$ has a product form. After time $T_{\rm rev}$, dishonest Bob holds the
  classical registers $\Theta,K$, the quantum register $E$ as well as
  classical information about the hash functions $F_+,F_\times$. To
  prove security, we first lower-bound Bob's uncertainty about $X$ in
  terms of min-entropy, use Lemma~\ref{lemma:ESLnew} to obtain
  $D$ and then apply the privacy amplification theorem.

  First of all, we know from Corollary~\ref{cor:AEP} that
  the smooth min-entropy of an $n$-fold tensor state is roughly equal
  to $n$ times the von Neumann entropy of its substates.  Hence,
  applying Corollary~\ref{cor:AEP} to our setting with $B_i \assign
  \Theta_i K_i E_i$ and $\log(2\sqrt{\dim\hil_{X_i}}+1)=\log(2\sqrt{2}+1) \leq 2$ we obtain for $n \geq \frac{8}{5} \log \frac{2}{\eps^4}$ that
\begin{equation*}
  \hmine{\eps^2}{X | \Theta K E} \geq \sum_{i=1}^n \H(X_i | \Theta_i K_i E_i)  - \delta n \geq (t-\delta)n \, ,
\end{equation*}
with $\delta = 8\sqrt{\log(2/\eps^4)/n}$. We used Equation~(\ref{eq:defRow}) in the first inequality and the last
follows by Definition~(\ref{eq:vNlower}) of the uncertainty bound $t$.

For ease of notation, we use $X_+$ and $X_\times$ to denote
$X_{|\setI_+}$ and $X_{|\setI_\times}$, the sequences of bits $X_i$
where $\Theta_i=+$ and $\Theta_i=\times$, respectively. From
$\hmine{\eps^2}{X_+ X_\times | \Theta K E} \geq (t-\delta)n$ and
Lemma~\ref{lemma:ESLnew} it follows that $D \in \set{+,\times}$ exists
such that
$$
\hmine{\eps}{X_{\ol{D}} D S_D| \Theta K E} \geq
(t - \delta)\frac{n}{2} \, .
$$
The rest of the security proof is analogous to the derivation
in~\cite{serge:new}: It follows from the chain rule
(Lemma~\ref{lem:chain}) and the monotonicity~(Lemma~\ref{lem:monotone})
of the smooth min-entropy that
\begin{align*}
\hmine{\eps}{X_{\ol{D}}|\Theta D S_D K E}
&\geq \hmine{\eps}{X_{\ol{D}}D S_D|\Theta K E} - (\ell + 1) \\
&\geq (t - \delta)\frac{n}{2} - 1 - \ell.
\end{align*}
The privacy amplification Theorem~\ref{thm:PA} yields
\begin{equation}\label{eq:explicitTradeoff}
d(F_{\ol{D}}(X_{\ol{D}}) \mid \Theta F_D D S_D K E) \leq
2^{-\frac12((t - \delta)\frac{n}{2} + 1 - 2\ell) }+ \eps
\,
\end{equation}
which is smaller than $2 \eps$ as long as
$$
(t - \delta)\frac{n}{4} + \frac{1}{2} -
\ell \geq \log\left(\frac{1}{\eps}\right) \, .
$$
from which our claim follows.
\end{proof}
We note that one can improve on the extractable length $\ell$ by using
a quantum version of Wullschleger's distributed-privacy-amplification
theorem~\cite{Wullsc07}. Since this technique is specific to
oblivious transfer and does not apply to our extension to the case of
secure identification, we do not go into the details here.

\section{Example: Depolarizing Noise} \label{sec:dep}

In this section, we consider the case when Bob's storage is affected by
depolarizing noise as described by the quantum operation
\begin{equation} \label{eq:depolChannel}
\mN(\rho) = r \rho + (1-r) \frac{\id}{2}.
\end{equation}
Depolarization noise will leave the input state $\rho$ intact with
probability $r$, but replace it with the completely mixed state with
probability $1-r$.  In order to give explicit security parameters
for this setting, our goal is to prove an uncertainty bound $t$ for
the conditional von Neumann entropy $\H(X_i|\Theta_i K_i E_i)$ as in
Eq.~(\ref{eq:vNlower}). Exploiting the symmetries in the
setting, we derive in Appendix~\ref{app:mainproof} the following
result. We drop the index $i$ in this Theorem.

\begin{theorem} \label{thm:depolarize} Let $\mN$ be the depolarizing
quantum operation given by Eq.~\eqref{eq:depolChannel} and let $\H(X
| \Theta
  K E)$ be the conditional von Neumann entropy of one qubit. Then
$$
\H(X | \Theta K E) \geq \left\{
\begin{array}{ll}
h(\frac{1+r}{2}) & \mbox{ for } r \geq \hat{r} \, ,  \\
1/2  & \mbox{ for } r < \hat{r} \, ,
\end{array}\right.
$$
where $\hat{r} \assign 2 h^{-1}(1/2)-1 \approx 0.7798$.
\end{theorem}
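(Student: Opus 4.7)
The plan is to identify Bob's optimal individual-storage strategy against the depolarising channel and directly evaluate $\H(X|\Theta K E)$ at that optimum. I expect the two candidate optima to be (a) the trivial instrument that performs no initial measurement, storing the qubit through the noise, and (b) an immediate destructive projective measurement in one of the two BB84 bases. The threshold $\hat{r}$ in the theorem should then be exactly the value of $r$ at which the two strategies produce equal entropy, i.e.~the unique solution of $h((1+r)/2) = 1/2$.

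First I would directly evaluate the two candidate strategies. For strategy (a), $K$ is trivial and the register $E$ carries $\mN(\ket{x}_\theta\bra{x}_\theta) = r\ket{x}_\theta\bra{x}_\theta + (1-r)\id/2$. For fixed $\theta$ the resulting cq-state has conditional spectra $((1+r)/2,(1-r)/2)$ and a maximally mixed marginal on $E$, so a direct computation yields $\H(X|E,\Theta=\theta) = h((1+r)/2)$, which averages to the same value over $\Theta$. For strategy (b) with a $+$-basis measurement, $\Theta=+$ lets Bob read $X$ off $K$, hence $\H(X|K,E,\Theta=+)=0$; when $\Theta=\times$ the outcome $K$ is uniform and independent of $X$ and the post-noise state in $E$ is determined entirely by $K$, hence $\H(X|K,E,\Theta=\times)=1$. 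Averaging gives $1/2$.

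The main step is to show that no other instrument $\mP$ achieves a smaller value of $\H(X|\Theta K E)$. I would parametrise a generic $\mP$ through its Kraus operators $\{A_k\}$ and then use the symmetries of the setting---unitary covariance of $\mN$, bit-flip invariance of the uniform BB84 ensemble, and the $+\leftrightarrow\times$ symmetry of the two-basis preparation---to reduce to an essentially one-parameter family of instruments interpolating between the identity (strategy (a)) and a BB84 projective measurement (strategy (b)). Within this family I would express $\H(X|\Theta K E)$ in closed form and locate its minimum by a derivative or convexity argument that exploits the explicit form $\mN(\ket{\phi}\bra{\phi}) = r\ket{\phi}\bra{\phi} + (1-r)\id/2$ of a depolarised pure state.

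The hard part will be this last step. The conditional entropy captures a delicate trade-off: a partial measurement extracts classical information into $K$ now but disturbs the pure post-measurement state that must then survive the noise in $E$ before being used at reveal time, and the optimal balance switches between the two endpoints precisely at $r=\hat{r}$. Making this switch rigorous will probably require both the symmetry reduction above and an explicit analysis of the entropy of $X$ conditioned on a depolarised pure qubit state as a function of its Bloch-sphere position. Once the minimum over $\mP$ is shown to equal $\min\{h((1+r)/2),\,1/2\}$, the piecewise bound of the theorem follows immediately because $h((1+r)/2)\le 1/2$ iff $r\ge 2h^{-1}(1/2)-1 = \hat{r}$.
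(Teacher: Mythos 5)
Your high-level plan coincides with the paper's: compute the two extremal strategies (store the qubit, giving $h((1+r)/2)$; measure immediately in a BB84 basis, giving $1/2$), then argue by symmetry that no other instrument does better, with $\hat{r}$ the crossing point $h((1+\hat{r})/2)=1/2$. Your endpoint evaluations are correct. The gap is that the middle step --- ``use the symmetries to reduce to an essentially one-parameter family interpolating between the identity and a BB84 projective measurement'' --- is asserted rather than proved, and it is precisely where all the content of the theorem lies; the symmetries you list do not by themselves deliver that family. Two things have to be established. First, that an arbitrary instrument, with many and generally non-Hermitian Kraus operators, can be replaced without increasing $\H(X|\Theta K E)$ by one of the covariant form $\{gFg^\dagger : g\in\{\id,X,Z,XZ\}\}$ for a single positive operator $F$; the paper gets this from an affinity property of the conditional entropy under probabilistic mixtures of instruments (so the Pauli twirl cannot help Bob), invariance under Pauli conjugation, and a polar-decomposition argument. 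Second --- and this is where your sketch is genuinely at risk --- that the eigenbasis of the optimal $F$ is a computational (or Hadamard) basis rather than an intermediate direction. The $+\leftrightarrow\times$ symmetry you invoke has the Breidbart basis as its fixed point, and the Breidbart measurement is in fact Bob's optimal attack for the guessing probability (min-entropy), as the authors note; so it is a serious contender that symmetry alone cannot exclude. The paper eliminates it only by explicit computation: writing $F=\alpha\proj{\phi}+\sqrt{1/2-\alpha^2}\,(\id-\proj{\phi})$, showing the relevant binary-entropy terms are concave and decreasing in the Bloch components $\hat{x},\hat{z}$ of $\ket{\phi}$, and then a Lagrange-multiplier/corner analysis showing that $\hat{x}=0,\hat{z}=1$ beats the Breidbart point $\hat{x}=\hat{z}=1/\sqrt{2}$.

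Even after that reduction, the final one-parameter optimization over the measurement strength $\alpha$ is not a routine endpoint comparison: the objective has an interior critical point at $\alpha=1/2$ whose character switches from maximum to minimum at $r=1/\sqrt{2}$, and further stationary points appear beyond that transition, so one must examine derivatives (the paper computes $\partial_\alpha$ and the second derivative $8(2r^2-1)/\ln 2$ at $\alpha=1/2$) before concluding that the minimum sits at $\alpha=1/2$ (store) or $\alpha\to 0$ (project) and reading off the threshold $2h^{-1}(1/2)-1$. So your proposal points in the right direction and would, if completed, follow essentially the paper's route, but as it stands the reduction to the two named strategies --- the heart of the proof --- is missing.
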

Our result shows that when the probability of retaining the input
state $\rho$ is small, $r < 0.7798$, the best attack for Bob is to
measure everything right away in the computational basis.  For this
measurement, we have $\H(X|\Theta K E) \geq 1/2$.  If the depolarizing
rate is low, i.e.~$r \geq 0.7798$, our result says that the best
strategy for Bob is to simply store the qubit as is.

Our result may seem contradictory to our previous error trade-off
obtained in~\cite{arxiv:noisy}, where Bob's best strategy was to
either store the qubit as is or measure it in the \emph{Breidbart
  basis} depending on the amount of depolarizing noise.  Note,
however, that the quantity we optimize in this work is the \emph{von
  Neumann entropy} and not the guessing probability considered
in~\cite{arxiv:noisy}. This phenomenon is similar to the setting of
QKD, where Eve's strategy that optimizes her guessing probability is
different from the one that optimizes the entropy~\cite{GRTZ:qkd_review}.
In general, the von Neumann entropy is larger than the min-entropy
(which corresponds to the guessing probability).
Corollary~\ref{cor:AEP} provides the explanation why the von Neumann
entropy is the relevant quantity in the setting of
individual-storage attacks.

\section{Robust Oblivious Transfer}
\label{sec:robust} In a practical setting, honest Alice and honest
Bob are not able to perform perfect quantum operations or transmit
qubits through a noiseless channel. We must therefore modify the ROT
protocol to make it robust against noise for the honest parties. The
protocol we consider is a small modification of the protocol
considered in~\cite{chris:diss}. The idea is to let Alice send
additional error-correcting information which can help honest Bob to
retrieve $S_C$ as desired.  The main difficulty in the analysis of
the extended protocol is the fact that we have to assume a
worst-case scenario: If Bob is dishonest, we give him access to a
perfect noise-free quantum channel with Alice and he only
experiences noise during storage.

We can divide the noise on the channel into two categories, which we
consider separately: First, we consider erasure noise (in practice
corresponding to photon loss) during preparation, transmission and
measurement of the qubits by the honest parties. Let $1-p_{\rm erase}$
be the total probability for an honest Bob to measure and detect a
photon in the $\{+,\times\}$-basis given that an honest Alice prepares
a weak pulse in her lab and sends it to him. The probability $p_{\rm
  erase}$ is determined, among other things, by the mean photon number in the
pulse, the loss on the channel and the quantum efficiency of the
detector. In our protocol we assume that the erasure rate
$p_{\rm erase}$ is {\em independent} for every pulse and independent
of whether qubits were encoded or measured in the $+$- or
$\times$-basis whenever Bob is honest. 
This assumption is necessary to guarantee the
correctness and the security against a \emph{cheating Alice} only.
Fortunately, this assumption is well matched with the possible
physical implementations of the protocol.

Any other noise source during preparation, transmission and
measurement can be characterized as an effective classical noisy
channel resulting in the output bits $X'$ that Bob obtains at
Step~\ref{step:reception} of Protocol~\ref{prot:practical}.  For
simplicity, we model this compound noise source as a classical
binary symmetric channel acting independently on each bit of $X$.
Typical noise sources for polarization-encoded qubits are
depolarization during transmission, dark counts in Bob's detector
and misaligned polarizing beam-splitters. Let the effective
bit-error probability, called the quantum bit-error rate in quantum
key distribution, of this binary symmetric channel be $p_{\rm error}
< 1/2$.

\subsection{Protocol}
In this section we present the modified version of the ROT protocol.
Before engaging in the actual protocol, Alice and Bob agree on a
small enough security-error probability $\eps>0$ that they are
willing to tolerate. In addition, they determine the system
parameters $p_{\rm erase}$ and $p_{\rm error}$ similarly to Step~1
of the protocol in~\cite{crepeau:practicalOT}. Furthermore, they
agree on a family $\set{C_n}$ of linear error-correcting codes of
length $n$ capable of efficiently correcting $n \cdot p_{\rm error}$
errors~\cite{crepeau:efficientOT}. For any string $x \in
\set{0,1}^n$, error-correction is done by sending the syndrome
information $\syn(x)$ to Bob from which he can correctly recover $x$
if he holds an output $x' \in \set{0,1}^n$ obtained by flipping each
bit of $x$ independently with probability $p_{\rm error}$. It is
known that for large enough $n$, the code $C_n$ can be chosen such
that its rate is arbitrarily close to $1-h(p_{\rm error})$ and the
syndrome length (the number of parity check bits) is asymptotically
bounded by $|\syn(x)| < h(p_{\rm error})
n$~\cite{crepeau:efficientOT}.  We assume that the players have
synchronized clocks. In each time slot, Alice sends one qubit to
Bob.

\begin{protocol}Robust 1-2 $\mbox{ROT}^\ell(C,T,\eps)$ \label{prot:practical}
\begin{enumerate}
\item Alice picks $X \in_R \01^n$ and $\Theta \in_R \{+,\times\}^n$.
\item For $i=1,\ldots,n$: In time slot $t=i$, Alice sends
  $\ket{X_i}_{\Theta_i}$ as a phase- or polarization-encoded weak pulse
  of light to Bob.
\item \label{step:reception} In each time slot, Bob measures the
  incoming qubit in the basis corresponding to his choice bit $C \in
  \{+,\times\}$ and records whether he detects a photon or not.  He
  obtains some bit-string $X' \in \01^m$ with $m \leq n$.
\item Bob reports back to Alice in which time slots he received a
  qubit. Alice restricts herself to the set of $m < n$ bits that Bob
  did not report as missing. Let this set of qubits be $S_{\rm
    remain}$ with $|S_{\rm remain}|=m$.
\item \label{step:alicecheck} Let $\setI_b = \{i \in S_{\rm
    remain}\mid \Theta_i = b\}$ for $b \in \{+,\times\}$ and let
  $m_b=|\setI_b|$.  Alice aborts the protocol if either $m_+$ or
  $m_\times$ are outside the interval $[(1-p_{\rm erase}-\eps)n/2,(1-p_{\rm erase}+\eps)n/2]$.
  If this is not the case, Alice picks two two-universal hash functions
  $F_+,F_\times \in_R \setF$. At time $t=n+T_{\rm rev}$, Alice sends
  $\setI_+$,$\setI_\times$, $F_+$,$F_\times$, and the syndromes $\syn(X_{|\setI_+})$
  and $\syn(X_{|\setI_\times})$ according to codes of appropriate length $m_b$ to Bob.  Alice outputs $S_+ =
  F_+(X_{|\setI_+})$ and $S_{\times} = F_\times(X_{|\setI_\times})$.
\item Bob uses $\syn(X_{|\setI_C})$ to correct the errors on his output $X'_{|\setI_C}$. He obtains the corrected bit-string
$X_{\rm cor}$ and outputs $S'_C = F_C(X_{\rm cor})$.
\end{enumerate}
\end{protocol}

\subsection{Security Analysis}

\paragraph{Correctness} By assumption, $p_{\rm erase}$ is independent
for every pulse and independent of the basis in which Alice sent the
qubits. Thus, by Chernoff's Inequality (Lemma~\ref{lem:chernoff}),
$S_{\rm remain}$ is, except with negligible probability, a random
subset of $m$ qubits independent of the value of $\Theta$ and such
that $(1-p_{\rm
  erase} - \eps)n \leq m \leq (1-p_{\rm erase} + \eps)n$ . This
implies that in Step~\ref{step:alicecheck} the protocol is aborted
with a probability only exponentially small in $n$. The codes are
chosen such that Bob can decode except with negligible probability.
These facts imply that if both parties are honest, the protocol is
correct (i.e. $S_C=S'_C$) with exponentially small probability of
error.

\paragraph{Security against Dishonest Alice}
Even though in this scenario Bob {\em does} communicate to Alice,
the information about which qubits were erased is (by assumption)
independent of the basis in which he measured and thus of his choice
bit $C$. Hence Alice does not learn anything about his choice bit
$C$. Her input strings can be extracted as in the analysis of
Protocol~\ref{prot:nonoise}.

\paragraph{Security against Dishonest Bob}

We prove the following:

\begin{theorem}\label{thm:tradeoffPractical}
Protocol~\ref{prot:practical} is secure against a dishonest receiver
Bob with error of at most $2\eps$, if $n \geq \frac{8}{5} \log
\frac{2}{\eps^4}$ and \beq \ell \leq  \left(t -
\delta -h(p_{\rm
    error})\right)(1-p_{\rm erase})\frac{n}{4} - \eps \frac{n}{2} +
\frac{1}{2} - \log\left(\frac{1}{\eps}\right) \, , \label{eq:lbound}
\eeq where $\delta = 8\sqrt{\log(2/\eps^4)/((1-p_{\rm
erase}-\eps)n)}$, and $t$ is the uncertainty bound on the
conditional Shannon entropy fulfilling Eq.~\eqref{eq:vNlower}.
\end{theorem}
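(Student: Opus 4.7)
The plan is to mirror the proof of Theorem~\ref{thm:secure1}, treating the surviving $m$ qubits in $S_{\rm remain}$ as the new effective input and then accounting for the extra syndrome information that honest Alice transmits in Step~\ref{step:alicecheck}. The two genuinely new ingredients relative to Theorem~\ref{thm:secure1} are the randomness of the subset $S_{\rm remain}$ (handled by Chernoff) and the syndromes $\syn(X_{|\setI_+})$, $\syn(X_{|\setI_\times})$ which leak additional classical information about $X$ (handled by the chain rule).

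First I would argue that, conditioned on Alice not aborting in Step~\ref{step:alicecheck}, both $m_+$ and $m_\times$ lie in the interval $[(1-p_{\rm erase}-\eps)n/2,(1-p_{\rm erase}+\eps)n/2]$, so $m = m_+ + m_\times \geq (1-p_{\rm erase}-\eps)n$. Since the events outside this interval contribute only negligible probability, it is enough to bound Bob's non-uniformity conditioned on the non-abort event. On this conditional state, the joint state of Alice and dishonest Bob restricted to the indices in $S_{\rm remain}$ still has the product form of Eq.~\eqref{eq:bobState} (the individual-storage assumption is what makes this step clean, even though Bob may choose which qubits to declare received as a function of his measurement outcomes).

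Next I would apply Corollary~\ref{cor:AEP} to the $m$ independent cq-states indexed by $S_{\rm remain}$ with $B_i := \Theta_i K_i E_i$, obtaining
\[
\hmine{\eps^2}{X_+ X_\times \,|\, \Theta K E} \;\geq\; (t-\delta)\,m,
\]
where $\delta = 8\sqrt{\log(2/\eps^4)/((1-p_{\rm erase}-\eps)n)}$ by using $m \geq (1-p_{\rm erase}-\eps)n$ inside the $\sqrt{\log(2/\eps^2)/m}$ factor. Then Lemma~\ref{lemma:ESLnew} supplies a bit $D \in \{+,\times\}$ with
\[
\hmine{\eps}{X_{\ol{D}} D S_D \,|\, \Theta K E} \;\geq\; (t-\delta)\,m/2.
\]
At this stage honest Alice additionally sends the syndrome $\syn(X_{|\setI_{\ol D}})$, whose length is at most $h(p_{\rm error})\,m_{\ol D} \leq h(p_{\rm error})(1-p_{\rm erase}+\eps)n/2$. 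Folding this side information into Bob's register via the chain rule (Lemma~\ref{lem:chain}) and monotonicity (Lemma~\ref{lem:monotone}) costs exactly this many bits of min-entropy, yielding
\[
\hmine{\eps}{X_{\ol D} \,|\, \Theta\, D\, S_D\, \syn(X_{|\setI_{\ol D}})\, K\, E} \;\geq\; (t-\delta)\tfrac{m}{2} - h(p_{\rm error})\tfrac{(1-p_{\rm erase}+\eps)n}{2} - \ell - 1.
\]

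Finally I would invoke the privacy amplification Theorem~\ref{thm:PA} with the uniformly chosen hash $F_{\ol D}$, which bounds $d(F_{\ol D}(X_{\ol D}) \,|\, \text{everything Bob holds})$ by $2\eps$ whenever the resulting smooth min-entropy exceeds $2\ell + 2\log(1/\eps) - 2$. Substituting the lower bounds on $m$ and upper bound on $m_{\ol D}$ and using the crude $(t-\delta) + h(p_{\rm error}) \leq 2$ to collect the $\eps n$ terms gives exactly the bound~\eqref{eq:lbound} of the theorem. The main obstacle is not conceptual but bookkeeping: one has to verify simultaneously that the Chernoff event and the smoothing event can be absorbed into the overall $2\eps$ error budget, and that combining the optimistic $m \geq (1-p_{\rm erase}-\eps)n$ (in the positive term) with the pessimistic $m_{\ol D} \leq (1-p_{\rm erase}+\eps)n/2$ (in the syndrome term) collapses to the stated $-\eps n/2$ correction.
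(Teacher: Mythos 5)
Your proposal follows essentially the same route as the paper's own proof: restrict to the $m$ surviving qubits whose split $m_+,m_\times$ is controlled by Alice's abort check, apply Corollary~\ref{cor:AEP} with the modified $\delta$, extract $D$ via Lemma~\ref{lemma:ESLnew}, subtract the syndrome length $h(p_{\rm error})(1-p_{\rm erase}+\eps)n/2$ with the chain rule and monotonicity, finish with Theorem~\ref{thm:PA}, and collapse the $\eps n$ terms using $t\leq 1$. The only small remark is that Chernoff is needed only for correctness with an honest Bob; against a dishonest Bob the bounds on $m_\pm$ follow directly from Alice's explicit abort condition (as you in fact use), so that ingredient is not part of this security argument.
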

\begin{sketch}
  First of all, we note that Bob can always make Alice abort the
  protocol by reporting back an insufficient number of received
  qubits. If Alice does not abort the protocol in
  Step~\ref{step:alicecheck}, we have that $(1-p_{\rm erase} -
  \eps)n/2 \leq m_+,m_\times \leq (1-p_{\rm erase} + \eps)n/2$. We
  define $D$ as in the security proof of Protocol~\ref{prot:nonoise}. The
  security analysis is the same, but we need to subtract the amount
  of error correcting information $|\syn(X_{|\setI_{\ol{D}}})|$ from
  the entropy of the dishonest receiver. If Alice does not abort the
  protocol in Step~\ref{step:alicecheck}, we have that
  $|\syn(X_{|\setI_{\ol{D}}})| \leq h(p_{\rm error}) (1-p_{\rm erase}
  + \eps)n/2$. Hence,
\begin{align*}
&\hmine{\eps}{X_{\ol{D}}|\Theta F_D D S_D \syn(X_{|\setI_{\ol{D}}}) K E} \\
&\geq \hmine{\eps}{X_{\ol{D}}D S_D \syn(X_{|\setI_{\ol{D}}}) |\Theta F_D K E} -
(\ell + 1) - h(p_{\rm error}) m/2
\\
&\geq (t - \delta)(1-p_{\rm erase}-\eps)n/2 - (\ell+1) - h(p_{\rm
  error})(1-p_{\rm erase}+\eps)n/2 - 1 - \ell\\
&\geq (t-\delta - h(p_{\rm error})(1-p_{\rm erase})n/2 -
\underbrace{(t-\delta+h(p_{\rm error}))}_{\leq 2}\eps n/2 - 1 - \ell
\, ,
\end{align*}
where $(t - \delta+h(p_{\rm error})) \leq 2$ since $t \leq 1$. Using
this inequality to bound the security parameter via the privacy
amplification Theorem \ref{thm:PA} gives the claimed bound on
$\ell$, Eq.~(\ref{eq:lbound}).
\end{sketch}

\paragraph{Remarks} Note that it is only possible to choose a code $C$
that satisfies the stated parameters {\em asymptotically}. For a
real---finite block-length---code, deviations from this asymptotic
behavior need to be taken into account. For the sake of clarity we
have omitted these details in the analysis above. Secondly, the
dishonest parties need to obtain an estimate for $p_{\rm error}$
prior to the protocol. One approach would be to use a worst case
estimate based what is possible with present-day technology.
Alternatively, one could follow Step~1 of the protocol
in~\cite{crepeau:practicalOT} as suggested above. However, one needs
to analyze this estimation procedure in a practical setting.
Thirdly, when weak photon sources are used in this protocol, one
needs to analyze the security threat due to the presence of
multi-photon emissions which Bob can exploit in
photon-number-splitting attacks as in QKD. See~\cite{arxiv:noisy}
for a first discussion of the effect of such attacks.

\subsection{Depolarizing Noise}
As an example, we again consider the security trade-off when Bob's
storage is affected by depolarizing noise. It follows directly from Theorems~\ref{thm:PA},
\ref{thm:tradeoffPractical} and \ref{thm:depolarize} that
\begin{corollary}\label{thm:depolTradeoff}
  Let $\mN$ be the depolarizing quantum operation given by
  Eq.~(\ref{eq:depolChannel}).  Then the protocol can be made secure
  (by choosing a sufficiently large $n$) as long as
\begin{eqnarray*}
h\left(\frac{1+r}{2}\right) > h(p_{\rm error}) && \mbox{ for } r \geq \hat{r}, \\
1/2  > h(p_{\rm error})
&& \mbox{ for } r < \hat{r},
\end{eqnarray*}
where $\hat{r} \assign 2 h^{-1}(1/2)-1 \approx 0.7798$.
\end{corollary}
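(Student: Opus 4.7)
The plan is to combine Theorem~\ref{thm:tradeoffPractical} and Theorem~\ref{thm:depolarize} and then perform an asymptotic analysis of the resulting bound on the extractable output length $\ell$. Specifically, Theorem~\ref{thm:depolarize} provides the explicit value of the uncertainty lower bound $t$ for depolarizing storage noise, namely $t = h((1+r)/2)$ when $r \geq \hat{r}$ and $t = 1/2$ when $r < \hat{r}$. Substituting either of these values into the bound~\eqref{eq:lbound} of Theorem~\ref{thm:tradeoffPractical} yields an explicit expression for the largest $\ell$ for which Protocol~\ref{prot:practical} is provably $2\eps$-secure against an individually-attacking receiver.

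Next, I would isolate the part of this bound that scales with $n$. The right-hand side of~\eqref{eq:lbound} has the form
\[
\ell \leq \bigl(t-\delta-h(p_{\rm error})\bigr)(1-p_{\rm erase})\frac{n}{4}-\eps\frac{n}{2}+\frac{1}{2}-\log(1/\eps),
\]
with $\delta = 8\sqrt{\log(2/\eps^4)/((1-p_{\rm erase}-\eps)n)}$. The strategy is to let $\eps$ go to zero slowly in $n$ (for instance $\eps = n^{-1/8}$) so that simultaneously $\delta \to 0$, $\eps n/2 \to 0$ in comparison with the leading term, and $\log(1/\eps) = o(n)$. Consequently, for sufficiently large $n$ the protocol admits a positive (indeed, linearly growing in $n$) extractable length $\ell$ provided that the coefficient of $n/4$ is strictly positive in the limit, i.e.
\[
(t-h(p_{\rm error}))(1-p_{\rm erase}) > 0.
\]
Since the erasure rate satisfies $p_{\rm erase} < 1$, this reduces to the single condition $t > h(p_{\rm error})$.

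Finally, I would substitute the two cases of Theorem~\ref{thm:depolarize} into $t > h(p_{\rm error})$: for $r \geq \hat{r}$ this becomes $h((1+r)/2) > h(p_{\rm error})$, and for $r < \hat{r}$ this becomes $1/2 > h(p_{\rm error})$, which are exactly the two conditions stated in the corollary. For each fixed $\eps > 0$ smaller than this positive gap, Theorem~\ref{thm:tradeoffPractical} then yields $2\eps$-security for all sufficiently large $n$, as required. There is no real obstacle here: the content of the corollary is already contained in the two theorems, and the only subtlety is the asymptotic book-keeping showing that the $o(n)$ correction terms $\delta n$, $\eps n/2$ and $\log(1/\eps)$ can all be absorbed by choosing $n$ large enough relative to a slowly-vanishing $\eps$. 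The one point worth flagging is that the choice of the error-correcting code $C_n$ enters only asymptotically through the rate $1-h(p_{\rm error})$, so the same care about finite-blocklength deviations noted after the proof of Theorem~\ref{thm:tradeoffPractical} applies here.
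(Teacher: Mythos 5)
Your proposal is correct and matches the paper's own (implicit) argument: the paper derives the corollary directly by substituting the two cases of Theorem~\ref{thm:depolarize} for $t$ into the bound~\eqref{eq:lbound} of Theorem~\ref{thm:tradeoffPractical} and noting that, since $p_{\rm erase}<1$ and the terms $\delta n$, $\eps n/2$ and $\log(1/\eps)$ are lower order once $\eps$ is chosen small relative to the gap $t-h(p_{\rm error})$, a positive $\ell$ exists for sufficiently large $n$ exactly when $t>h(p_{\rm error})$. Your asymptotic bookkeeping (fixing $\eps$ below the gap, or letting it vanish slowly with $n$) is just a more explicit rendering of the same step.
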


Hence, our security parameters are greatly improved from our
previous analysis~\cite{arxiv:noisy}.  For $r < \hat{r}$ we can now
obtain security as long as the quantum bit error rate $p_{\rm error}
\lessapprox 0.11$, compared to 0.029 before. For the case of $r \geq
\hat{r}$, we can essentially show security as long as the noise on
the channel is strictly less than the noise in Bob's quantum
storage.  Note that we cannot hope to construct a protocol that is
both correct and secure when the noise of the channel exceeds the
noise in Bob's quantum storage.
However, it remains an open question whether it is possible to
construct a protocol or improve the analysis of the current protocol
such that security can be achieved even for very small $n$.

Corollary~\ref{thm:depolTradeoff} puts a restriction on the noise
rate of the honest protocol. Yet, since our protocols are
particularly interesting at short distances (e.g.~in the case of
secure identification we describe below), we can imagine free-space
implementations over very short distances such that depolarization
noise during transmission is negligible and the main noise source is
due to Bob's honest measurements.

In the near-future, if good photonic memories become available (see
e.g.~\cite{julsgaard+:mem, boozer+:qmem, chaneliere+:qmem,
  eisaman+:qmem, rosenfeld+:qmem, pittman_franson:memory} for recent
progress), we may anticipate that storing the qubit is a better
attack than a direct measurement.  Note, however, that we are free
in our protocol to stretch the reveal time $T_{\rm rev}$ between
Bob's reception of the qubits and his reception of the classical
basis information, say, to seconds, which means that one has to
consider the overall noise rate on a qubit that is stored for
seconds.

In terms of long-term security, {\em fault-tolerant} photonic
computation (e.g., with the KLM scheme \cite{KLM:lo}) might allow a
dishonest Bob to encode the incoming quantum information into a
fault-tolerant quantum memory. Such an encoding would guarantee that
the effective noise rate in storage can be made arbitrarily small. The
encoding of a single unknown state is \emph{not} a fault-tolerant
quantum operation however. Hence, even in the presence of a quantum
computer, there is a residual storage noise rate due to the
unprotected encoding operation. The question of security then becomes
a question of a trade-off between this residual noise rate versus the
intrinsic noise rate for honest parties.
Intuitively, it might be possible to arrange the setting such that
tasks of honest players are always technically easier (and/or cheaper)
to perform than the ones for dishonest players. Possibly, this
intrinsic gap can be exploited for cryptographic purposes. The current
paper can be appreciated as a first step in this direction.

\section{Extension to Secure Identification}\label{sec:others}
In this section, we like to point out how our model of noisy quantum
storage with individual-storage attacks also applies to protocols
that achieve more advanced tasks such as secure identification. The
protocol from~\cite{DFSS07} allows a user $U$ to identify
him/herself to a server $S$ by means of a personal identification
number (PIN). This task can be achieved by securely evaluating the
equality function on the player's inputs. In other words, both $U$
and $S$ input passwords $W_U$ and $W_S$ into the protocol and the
server learns as output whether $W_U = W_S$ or not. The protocol
proposed in~\cite{DFSS07} is secure against an unbounded user $U$
and a quantum-memory bounded server $S$ in the sense that it is
guaranteed that if a dishonest player starts with quantum side
information which is uncorrelated with the honest player's password
$W$, the only thing the dishonest player can do is guess a possible
$W'$ and learn whether $W=W'$ or not while not learning anything
more than this mere bit of information about the honest user's
password $W$. This protocol can also be (non-trivially) extended to
additionally withstand man-in-the-middle attacks.

The security proof against a quantum-memory bounded dishonest server
(and man-in-the-middle attacks) relies heavily on the uncertainty
relation first derived in~\cite{serge:new} and used for proving the
security of 1-2 OT. This uncertainty relation guarantees a lower
bound on the smooth min-entropy of the encoded string $X$ from the
dishonest player's point of view. As we establish a similar type of
lower bound (Cor.~\ref{cor:AEP} and Eq.~(\ref{eq:vNlower})) on the
smooth min-entropy in the noisy-storage model, the security proof
for the identification scheme (and its extension) translates to our
model.

In terms of the proof of Proposition~3.1 of~\cite{DFSS07},
the pair $X_i,X_j$ has essentially $t \cdot d$ bits of min-entropy
given $\Theta,K,$ and $E$, where $t$ is the uncertainty lower bound on the
conditional Shannon entropy from Eq.~(\ref{eq:vNlower}) and $d$ is
the minimal distance of the code used in the identification scheme.
Lemma~\ref{lemma:ESLident} implies that there exists $W'$ (called $V$ in
Lemma~\ref{lemma:ESLident}) such that if $W \neq W'$ then $X_W$ has
essentially $t d/2 - \log(m)$ bits of min-entropy given
$W,W',\Theta,K,E$. Privacy amplification then guarantees that
$F(X_W)$ is $\eps'$-close to uniform and independent of
$F,W,W',\Theta,K,E$, conditioned on $W \neq W'$, where $\eps' =
\frac12 2^{-\frac12(t d/2 - \log(m) - \ell)}$.
Security against a dishonest server with noisy quantum storage
follows as in~\cite{DFSS07} for an error parameter $\eps$
which is exponentially small in $t d - 2\log(m) - 2 \ell$.

\section{Conclusion}
We have obtained improved security parameters for oblivious transfer
in the noisy-quantum-storage model. Yet, it remains to prove security
against general coherent noisy attacks.  The problem with analyzing a
coherent attack of Bob described by some quantum operation ${\cal S}$
affecting all his incoming qubits is not merely a technical one: one
first needs to determine a realistic noise model in this
setting.
Symmetrizing the protocol as in the proof of QKD \cite{renato:diss}
and using de Finetti type arguments does not immediately work here.
However, one can analyze a specific type of coherent noise, one that
essentially corresponds to an eavesdropping attack in QKD. Note that
the 1-2 OT protocol can be seen as two runs of QKD interleaved with
each other. The strings $f(x_{|\setI_+})$ and $f(x_{|\setI_\times})$
are then the two keys generated. The noise must be such that it
leaves Bob with exactly the same information as the eavesdropper Eve
in QKD. In this case, it follows from the security of QKD that the
dishonest Bob (learning exactly the same information as the
eavesdropper Eve) does not learn anything about the two keys.

Clearly, there is a strong relation between QKD and the protocol for
1-2 OT, and one may wonder whether other QKD protocols can be used
to perform oblivious transfer in our model. Intuitively, this is indeed
the case, but it remains to evaluate explicit parameters for the security
of the resulting protocols.

It will be interesting to extend our results to a security analysis of
a noise-robust protocol in a realistic physical setting, where, for
example, the use of weak laser pulses allows the possibility of
photon-number-splitting attacks. Such a comprehensive security
analysis has been carried out in \cite{GLLP:qkd} for quantum key
distribution.

\section*{Acknowledgments}
We thank Robert K\"onig and Renato Renner for useful discussions about
the additivity of the smooth min-entropy and the permission to include
Lemma~\ref{app:additivity}. CS is supported by EU fifth framework
project QAP IST 015848 and the NWO VICI project 2004-2009. SW is
supported by NSF grant number PHY-04056720.

\bibliographystyle{alpha}
\bibliography{ot_aa_long}

\appendix

\section{Appendix: Properties of The Conditional Smooth Min-Entropy}
\label{app:a}

In this Appendix we provide the technical proofs of the Lemmas and the Theorem in Section \ref{sec:propsmooth}.
We restate the claims for convenience.

\subsection{Proof of Lemma~\ref{lem:chain} (Chain Rule)} \label{app:chain}
\begin{lemma}[Chain Rule]
For any ccq-state $\rho_{XYE} \in \states(\hil_{XYE})$
and for all $\eps \geq 0$, it holds that
$$
\hmine{\eps}{X | Y E } \geq \hminee{XY | E} - \log|\setY|,
$$
where $|\setY|$ is the alphabet size of the random variable $Y$.
\end{lemma}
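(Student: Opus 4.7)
The plan is to prove the chain rule in two stages: first the non-smooth version, then lift it to the smooth setting by exploiting the fact that both smooth min-entropies are suprema over the same ball of nearby subnormalized states.

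For the non-smooth step, I would work directly with the semidefinite-programming characterization
\[
\hmin(A|B)_\rho = -\log \min_{\sigma_B \in \pos(\hil_B),\, \rho_{AB} \leq \id_A \otimes \sigma_B} \Tr(\sigma_B),
\]
rather than with the guessing-probability interpretation of Equation~\eqref{eq:duality}. The reason is that the operational characterization only applies to cq-states, whereas after smoothing we may encounter arbitrary subnormalized operators in the ball. The key observation is a simple feasibility lift: if $\sigma_E^\star$ is optimal for the SDP defining $\hmin(XY|E)_\rho$, so that $\rho_{XYE} \leq \id_{XY} \otimes \sigma_E^\star$ and $\Tr(\sigma_E^\star) = 2^{-\hmin(XY|E)_\rho}$, then $\sigma_{YE} \assign \id_Y \otimes \sigma_E^\star$ is feasible for the SDP defining $\hmin(X|YE)_\rho$, since $\rho_{XYE} \leq \id_X \otimes (\id_Y \otimes \sigma_E^\star) = \id_X \otimes \sigma_{YE}$. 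Its trace is $\Tr(\sigma_{YE}) = |\setY|\cdot \Tr(\sigma_E^\star)$, so taking $-\log$ on both sides yields the non-smooth chain rule
\[
\hmin(X|YE)_\rho \geq \hmin(XY|E)_\rho - \log|\setY|.
\]
This step works for arbitrary (not necessarily ccq) $\rho_{XYE}$, which is exactly what I need for the smoothing to go through.

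For the smooth step, let $\hat{\rho}_{XYE} \in \ball(\rho_{XYE})$ be a state that (up to an arbitrarily small slack) achieves the supremum in
\[
\hminee{XY|E}_\rho = \sup_{\hat{\rho} \in \ball(\rho_{XYE})} \hmin(XY|E)_{\hat{\rho}}.
\]
Applying the non-smooth chain rule to $\hat{\rho}_{XYE}$ gives $\hmin(X|YE)_{\hat{\rho}} \geq \hmin(XY|E)_{\hat{\rho}} - \log|\setY|$. But the ball $\ball(\rho_{XYE})$ is defined purely in terms of the joint state $\rho_{XYE}$ and is the same ball used in the definition of $\hmine{\eps}{X|YE}_\rho$, so $\hat{\rho}$ is also a candidate for that supremum. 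Hence
\[
\hmine{\eps}{X|YE}_\rho \geq \hmin(X|YE)_{\hat{\rho}} \geq \hmin(XY|E)_{\hat{\rho}} - \log|\setY|,
\]
and letting the slack go to zero yields the claimed bound $\hmine{\eps}{X|YE}_\rho \geq \hminee{XY|E}_\rho - \log|\setY|$.

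The main subtlety, and the only thing one has to be careful about, is precisely the one avoided above: one must not try to reuse the cq-specific guessing-probability argument (where one would average the POVM $\{M_x^y\}_{x}$ used to guess $X$ from $E$ given $Y=y$ into the POVM $\{M_x^y/|\setY|\}_{x,y}$ for guessing $XY$ from $E$), because the optimal smoothed state $\hat{\rho}$ need not be classical on $Y$. The SDP route sidesteps this entirely and makes the proof a two-line feasibility argument plus a transparent smoothing step.
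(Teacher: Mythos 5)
Your proof is correct, and it takes a genuinely different route from the paper. The paper proves the $\eps=0$ case operationally, via the guessing-probability characterization of Eq.~\eqref{eq:duality}: guessing $Y$ uniformly at random and then applying the optimal POVM $\set{M_x^y}_x$ for $X$ given $Y=y$ shows $\pg(XY|E) \geq \pg(X|YE)/|\setY|$; to smooth this, it must invoke Remark~3.2.4 of~\cite{renato:diss} to argue that the optimal state $\hat{\rho}_{XYE}$ in the ball can be taken to be a ccq-state, since the guessing-probability interpretation is tied to classicality of $X$ (and $Y$). Your SDP feasibility argument --- lifting an optimal $\sigma_E^\star$ with $\rho_{XYE} \leq \id_{XY} \otimes \sigma_E^\star$ to the feasible point $\id_Y \otimes \sigma_E^\star$ for the conditioning on $YE$, at the cost of a factor $\Tr(\id_Y) = |\setY|$ in the trace --- holds for arbitrary (even subnormalized, non-ccq) operators, so the smoothing step becomes immediate: both smooth quantities are suprema over the very same ball $\ball(\rho_{XYE})$, and you just apply the non-smooth inequality to a near-optimal element of that ball. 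What your approach buys is exactly what you identify: no appeal to the cq-structure of the smoothed state and hence no need for the external Remark~3.2.4; what the paper's approach buys is an operational reading consistent with how min-entropy is used elsewhere in the text (everything phrased through $\pg$). One small point to keep clean in your write-up: the $\log|\setY|$ in the statement is the alphabet size of the classical $Y$, which coincides with $\dim \hil_Y$ because $\hil_Y$ is spanned by the orthonormal states $\set{\ket{y}}_{y \in \setY}$; your trace computation implicitly uses this identification, and it is worth stating explicitly since after smoothing the operator on $\hil_Y$ need not be diagonal in that basis --- the bound still only involves $\dim\hil_Y = |\setY|$, so nothing breaks.
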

\begin{proof}
 For $\eps=0$, it follows from Eq.~\eqref{eq:duality} that we need to show that
\begin{equation} \label{eq:chain}
 \pg(XY|E) \geq \pg(X|YE) \cdot \frac{1}{|\setY|} \, .
\end{equation}
For a given value $y$, let $\set{M_x^y}_x$ be the POVM on register
$E$ which optimally guesses $X$ given $Y$. A particular strategy of
guessing \emph{$X$ and $Y$} from $E$ is to guess a value of $y$
uniformly at random from $\setY$ and subsequently measure $E$ with
the POVM $\set{M_x^y}_x$. The success probability of this strategy
is exactly the r.h.s of~\eqref{eq:chain}. Clearly, the optimal
guessing probability $\pg(XY|E)$ can only be better than this
particular strategy. For $\eps >0$, let $\hat{\rho}_{XYE} \in
\ball(\rho_{XYE})$ be the state in the $\eps$-ball around
$\rho_{XYE}$ that maximizes the min-entropy $\hminee{XY|E}$. The
technique from Remark~3.2.4 in~\cite{renato:diss} can be used to
show that $\hat{\rho}_{XYE}$ is a ccq-state. By the derivation above
for $\eps=0$, we obtain that
\begin{align*}
\pg(XY|E)_{\hat{\rho}} &\geq \pg(X|YE)_{\hat{\rho}} \cdot \frac{1}{|\setY|}  \\
&\geq \min_{\tilde{\rho}_{XYE} \in \ball(\rho_{XYE})} \pg(X|YE)_{\tilde{\rho}} \cdot \frac{1}{|\setY|} \, ,
\end{align*}
which proves the lemma by taking the negative logarithms and using
Eq.~\eqref{eq:duality}.
\end{proof}

\subsection{Proof of Lemma~\ref{lem:additivity}
  (Additivity)} \label{app:additivity}

To show additivity of the smooth min-entropy we will employ semidefinite programming, where we refer to~\cite{boyd:convex} for 
in-depth information. Here, we will use semidefinite programming in the language of~\cite{KRS09} to express
the primal and dual optimization problem given by parameters $c \in \mV_1$ and $b \in \mV_2$ in vector spaces $\mV_1$ and $\mV_2$
with inner products $\spr{\cdot}{\cdot}_1$ and $\spr{\cdot}{\cdot}_2$. We will optimize over variables $v_1 \in K_1$ and $v_2 \in K_2$, 
where $K_1 \subset \mV_1$ and $K_2 \subset \mV_2$ are convex cones in the respective vector spaces. In our application below, these will
simply be the cones of positive-semidefinite matrices. 
We can then write
\begin{align}
\gamma^{\textrm{primal}}=\min_{\substack{
v_1\geq 0 \\
Av_1\geq  b
}} \spr{v_1}{c}_1\qquad\textrm{ and }\qquad
\gamma^{\textrm{dual}}=\max_{\substack{
v_2\geq 0 \\
A^*v_2 \leq c
}} \spr{b}{v_2}_2\label{eq:primalanddual},
\end{align}
where $A: \mV_1 \rightarrow \mV_2$ is a linear map defining the particular problem we wish to solve.
We use $A^*:\mV_2 \rightarrow \mV_1$ to denote its dual map satisfying
\begin{align*}
\spr{A v_1}{v_2}_2&=\spr{v_1}{A^*v_2}_1\qquad\textrm{ for all }v_1\in\mV_1, v_2\in\mV_2\ .
\end{align*}
Note that we have $\gamma^{\textrm{primal}} \geq
\gamma^{\textrm{dual}}$ by weak duality. In this case our SDPs will be strongly feasible, 
giving us $\gamma^{\textrm{primal}} = \gamma^{\textrm{dual}}$ known as strong duality.
Our proof is based on the same idea as~\cite[Lemma 2]{arxiv:noisy} applied
to the smoothed setting.
We thank Robert K{\"o}nig for allowing us to include the following.

\begin{lemma}[Additivity (K{\"o}nig and Wehner)]
Let $\rho_{AB}$ and $\rho_{A'B'}$ be two independent qq-states. For $\eps
\geq 0$, it holds that
\[ \hmine{\eps^2}{AA'|BB'} \leq \hmine{\eps}{A|B}+\hmine{\eps}{A'|B'}\, .
\]
\end{lemma}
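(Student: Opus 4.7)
My plan is to prove this inequality via semidefinite-programming duality, following the non-smooth additivity proof of \cite[Lemma 2]{arxiv:noisy} and adapting the argument to the smoothed setting. The first step is to express the smoothed conditional min-entropy as a linear SDP. The constraint $C(\hat\rho,\rho)\le\eps$ is equivalent to $F(\hat\rho,\rho)\ge\sqrt{1-\eps^2}$; combined with the block-matrix SDP representation $F(\rho,\hat\rho)=\max\Re\Tr X$ subject to $\bigl(\begin{smallmatrix}\rho & X\\ X^\dagger & \hat\rho\end{smallmatrix}\bigr)\ge 0$, this yields a linear SDP in variables $(\tau_B,\hat\rho,X)$: minimize $\Tr\tau_B$ subject to $\hat\rho\le\id_A\otimes\tau_B$, $\tau_B,\hat\rho\ge 0$, $\Tr\hat\rho\le 1$, the block-matrix constraint, and $\Re\Tr X\ge\sqrt{1-\eps^2}$, with optimum $2^{-\hmine{\eps}{A|B}_\rho}$. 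Slater's condition is easily verified, so strong duality applies.

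Second, I would compute the Lagrangian dual SDP and construct a dual-feasible witness for the joint $\eps^2$-smoothed problem as the tensor product of dual-optimal witnesses for the two individual $\eps$-smoothed problems. Tensor products preserve $Y\ge 0$ and $\Tr_{AA'}Y\le\id_{BB'}$ trivially, and the block-matrix positivity constraint tensorizes via a Schur-complement computation: if $\bigl(\begin{smallmatrix}A & B\\ B^\dagger & C\end{smallmatrix}\bigr)\ge 0$ and the primed analog both hold, then $\bigl(\begin{smallmatrix}A\otimes A' & B\otimes B'\\ (B\otimes B')^\dagger & C\otimes C'\end{smallmatrix}\bigr)\ge 0$. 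If the tensor-product dual witness attains dual value $2^{-\hmine{\eps}{A|B}_\rho}\cdot 2^{-\hmine{\eps}{A'|B'}_{\rho'}}$, weak duality yields $2^{-\hmine{\eps^2}{AA'|BB'}_{\rho\otimes\rho'}}\ge 2^{-\hmine{\eps}{A|B}_\rho}\cdot 2^{-\hmine{\eps}{A'|B'}_{\rho'}}$, which is the claimed inequality after taking logarithms.

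The main obstacle I anticipate is the precise bookkeeping of the smoothing parameter. Because fidelity multiplies on tensor products, $F(\hat\rho\otimes\hat\rho',\rho\otimes\rho')=F(\hat\rho,\rho)F(\hat\rho',\rho')$, a direct \emph{primal} tensor-product witness yields joint fidelity only $\ge 1-\eps^2$, short of the $\sqrt{1-\eps^4}>1-\eps^2$ required for $\eps^2$-smoothing in the purified-distance sense; so a primal tensor-product argument alone cannot give the claim. The dual perspective is what makes the $\eps^2$-parameter attainable: the Lagrange multiplier for the fidelity constraint enters the dual objective multiplicatively with coefficient $\sqrt{1-\eps^2}$, and by rescaling the tensor-product dual variable the factor $\sqrt{1-\eps^4}=\sqrt{(1-\eps^2)(1+\eps^2)}$ appearing in the joint dual can be matched against $(\sqrt{1-\eps^2})^2=1-\eps^2$ from the individual duals. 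Executing this rescaling explicitly while simultaneously preserving the remaining dual-feasibility constraints under tensor products is the crux of the argument.
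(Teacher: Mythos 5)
You have correctly identified the route the paper itself takes --- express the smooth conditional min-entropy as a semidefinite program, pass to the dual, and tensor dual witnesses so that weak duality yields the upper bound on $\hmine{\eps^2}{AA'|BB'}$ --- but the proposal stops short of the proof's actual content, and the missing part is exactly what you yourself label the crux. You never derive the dual of your smoothed SDP (the block-matrix fidelity constraint carries a matrix-valued multiplier, so the dual does not have the simple shape you implicitly assume when you write $Y\ge 0$, $\Tr_{AA'}Y\le\id_{BB'}$), and the key step --- exhibiting a \emph{feasible} dual point for the joint problem whose value equals the product of the two individual dual optima --- is only stated conditionally (``if the tensor-product dual witness attains\dots''). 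In the argument given in Appendix~\ref{app:additivity}, the naive tensor product of dual solutions does \emph{not} suffice: with individual dual variables $(r,s,Q_{AB})$ and $(r',s',Q_{A'B'})$ one takes $\tilde Q=Q_{AB}\otimes Q_{A'B'}$ and $\tilde r=rr'$, but the modified $\tilde s=rs'(1-\oeps)+sr'(1-\oeps')-ss'$, chosen precisely so that the joint value factorizes as $(r(1-\oeps)-s)(r'(1-\oeps')-s')$, and one must then verify the operator inequality $\tilde Q\otimes\id_{CC'}\ge \tilde r\,\proj{\psi_{ABC}}\otimes\proj{\psi_{A'B'C'}}-\tilde s\,\id$ including the cross terms. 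None of this construction or verification appears in your outline, so as written there is a genuine gap rather than a complete alternative proof.

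A second remark: the smoothing-parameter difficulty you anticipate largely dissolves if you linearize the constraint the way the paper does. Fix a purification $\ket{\psi}_{ABC}$ of $\rho_{AB}$ and smooth over extensions $\hat{\rho}_{ABC}\ge 0$ with $\Tr(\hat{\rho}_{ABC}\proj{\psi_{ABC}})\ge 1-\oeps$ and $\Tr(\hat{\rho}_{ABC})\le 1$, where $\oeps=\eps^2$; because the reference state is pure this constraint is already linear, so no block-matrix fidelity SDP is needed, and the dual is simply the maximization of $r(1-\oeps)-s$ over $r,s\ge 0$, $Q_B\le\id_B$, $Q_{AB}\otimes\id_C\ge r\proj{\psi_{ABC}}-s\,\id_{ABC}$. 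The tensor construction then naturally produces the joint parameter $\tilde{\oeps}=\oeps+\oeps'-\oeps\oeps'$, and since $\tilde{\oeps}\ge\oeps^2$ the claimed $\eps^2$-versus-$\eps$ statement follows from monotonicity of the smooth min-entropy in the smoothing parameter; no rescaling of dual variables is required. If you insist on your block-matrix formulation, the favorable inequality $\sqrt{1-\eps^4}\ge 1-\eps^2$ indeed means the constant works in your favor at a feasible point, but dual feasibility of the tensored witness is precisely what remains to be proved.
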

\begin{proof}
In order to prove additivity, it is important to realize that the smooth conditional min-entropy can be
written as semi-definite program:

\begin{align} \hminee{A|B}
&= \max_{\hat{\rho}_{AB} \in \ball(\rho_{AB})}
\hmin(A|B)_{\hat{\rho}} \nonumber\\
&=  \max_{{\hat{\rho}_{AB} \in \ball(\rho_{AB})}}
-\log \;
\min_{\substack{
\sigma_B \geq 0\\
\hat{\rho_{ABC}} \geq 0\\
\id_A \otimes \sigma_B \geq
\hat{\rho}_{AB}
}} \Tr(\sigma_B) \\
&=  -\log \;
\min_{\substack{
\sigma_B \geq 0\\
\hat{\rho_{ABC}} \geq 0\\
\hat{\rho}_{AB} \in \ball(\rho_{AB}) \\ 
\id_A \otimes \sigma_B
\geq \hat{\rho}_{AB}
}} \Tr(\sigma_B) \, .
\end{align}
where $\sigma_B \in \pos(\hil_B)$ throughout.
Let $\ket{\psi}_{ABC}$ be a purification of $\rho_{AB}$. Then, all
states $\hat{\rho}_{AB} \in \ball(\rho_{AB})$ can be obtained by an
extension $\hat{\rho}_{ABC} \geq 0$ such that $\Tr(\hat{\rho}_{ABC})
\leq 1$, and $\Tr(\hat{\rho}_{ABC} \proj{\psi_{ABC}}) \geq 1-\oeps$
with $\oeps = \eps^2$.
Therefore, we can write
\begin{align*} \hminee{A|B}
&=  -\log \;
\min_{\substack{
\Tr(\hat{\rho}_{ABC} \proj{\psi_{ABC}}) \geq 1-\oeps \\ 
1 \geq \Tr(\hat{\rho}_{ABC}) \\ \id_A \otimes \sigma_B \geq \hat{\rho}_{AB} 
}} \Tr(\sigma_B) 
\ ,
\end{align*}
where the minimum is taken over all $\sigma_B \in \pos(\hil_B)$ and
$\hat{\rho}_{ABC} \in \pos(\hil_{ABC})$, which is a semi-definite
program (SDP). Our goal will be to determine the dual of this
semidefinite program which will then allow us to put an upper bound on
the smooth min-entropy as desired.

\newcommand{\herm}{\operatorname{Herm}}
We now first show how to convert the primal of this semidefinite
program into the form of Eq.~(\ref{eq:primalanddual}).  Let $\mV_1 =
\herm(\hil_B) \oplus \herm(\hil_{ABC})$ where $\herm(\hil)$ is the
  (real) vector space of Hermitian operators on $\hil$. Let $K_1
  \subset \mV_1$ be the cone of positive semi-definite operators.
% $\mV_1 =
% \bop(\hil_B) \oplus \bop(\hil_{ABC})$ 
% % \Real \oplus \spann\{\sigma_B \oplus \hat{\rho}_{ABC}\}
% , $K_1$ be the non-negative operators on $\mV_1$,
  Let $c = \id_B \oplus 0_{ABC}$ where $0_{ABC}$ is the zero-operator
  on $\hil_{ABC}$. Let the inner product be defined as
  $\spr{v_1}{v_1'}_{1}=\Tr(v_1^\dagger v_1')$. Note that this allows
  us to express our objective function as
$$
\spr{\sigma_B\oplus\hat{\rho}_{ABC}}{c}_1=\Tr(\sigma_B).
$$
It remains to rewrite the constraints in the appropriate form. To this
end, we need to define $\mV_2 = \Real \oplus \Real \oplus
\herm(\hil_A) \oplus \herm(\hil_B)$, $K_2 \subset \mV_2$ the cone of
positive semi-definite operators and take the inner product to have
the same form $\spr{v_2}{v_2'}_{2}=\Tr(v_2^\dagger v_2')$. We then let
$b \in \mV_2$ be given as
$$
b = (1-\oeps) \oplus (-1) \oplus 0_{AB}\ ,
$$
and define the map
$$
A(\sigma_B \oplus \hat{\rho}_{ABC}) = 
\Tr(\hat{\rho}_{ABC}\proj{\psi_{ABC}}) \oplus
(-\Tr(\hat{\rho}_{ABC})) \oplus (\id_A\otimes\sigma_B-\hat{\rho}_{AB})\ .
$$
Note that $v_1 = \sigma_B \oplus \hat{\rho}_{ABC} \geq 0$ and $A(v_1) \geq b$ now exactly represent our constraints.

We now use this formalism to find the dual. Note that we may write any
$v_2 \in \mV_2$ with $v_2 \geq 0$ as $v_2 = r \oplus s \oplus Q_{AB}$ where
$Q_{AB} \in \pos(\hil_A \otimes \hil_B)$ and $r,s \in \Real$. To find
the dual map $A^*$ note that
\begin{align*}
\spr{Av_1}{v_2}_2 &=r\Tr(\hat{\rho}_{ABC}\proj{\psi_{ABC}})-s
\Tr(\hat{\rho}_{ABC}) 
+\Tr(Q_{AB}(\id_A\otimes\sigma_B-\hat{\rho}_{AB}))\\
  &=r\Tr(\hat{\rho}_{ABC}\proj{\psi_{ABC}})- s
\Tr(\hat{\rho}_{ABC}) +\Tr(Q_{B}\sigma_B)-\Tr((Q_{AB}\otimes
  \id_C)\hat{\rho}_{ABC}),
\end{align*}
and we therefore have
\begin{align*}
A^*(v_2) &=(0_B\oplus r \proj{\psi_{ABC}})-(s \id_{ABC})+(Q_B\oplus 0_{ABC})-(0_B\oplus Q_{AB}\otimes\id_C)  \ ,
\end{align*}
which is all we require using Eq.~(\ref{eq:primalanddual}). To find a more intuitive interpretation of the dual note
that $A^*(v_2) \leq c$ is equivalent to
\begin{align}
\id_B &\geq Q_B\ ,\\
Q_{AB}\otimes \id_C&\geq r\proj{\psi_{ABC}} - s \id_{ABC}\ ,\label{eq:qbm}
\end{align}
and $\spr{b}{v_2}_2 =r(1-\oeps)-s$. The dual can thus be written as
\begin{align*}
\gamma^{\textrm{dual}}&=\max_{\substack{r\geq 0, s \geq 0\\ \id_B \geq Q_B\\
    Q_{AB}\otimes \id_C\geq r\proj{\psi_{ABC}}-s \id_{ABC}} }r(1-\oeps) - s\ .
\end{align*}

We now use the dual formulation to upper bound the smooth min-entropy
of the combined state $\rho_{AB} \otimes \rho_{A'B'}$ and parameter
$\tilde{\oeps}$ by finding a lower bound to the dual semidefinite
program.  Let $\tilde{\gamma}(\tilde{\oeps})$ denote the optimal
solution of the dual of the SDP for the combined state for error $\tilde{\oeps}$.  For each individual
state, we may solve the above SDP, where we let $Q_{AB}, r$ and $s$
denote the optimal solution for state $\rho_{AB}$ with parameter
$\oeps$ and optimal value $\gamma(\oeps)$, and let $Q_{A'B'},r'$ and
$s'$ denote the optimal solution for state $\rho_{A'B'}$ with
parameter $\oeps'$ and optimal value $\gamma(\oeps')$.  We now use
these solutions to construct a solution (not necessarily the optimal
one) for the combined state $\rho_{AB} \otimes \rho_{A'B'}$.  Let
$\tilde{Q}=Q_{AB}\otimes Q_{A'B'}$, $\tilde{r}=rr'$ and
$\tilde{s}=rs'(1-\delta)+sr'(1-\delta')-ss'$. Note that $rs' \geq 0$
and $r'(1-\delta')-s' \geq 0$ for the optimal $r',s'$ and hence
\begin{align*}
\tilde{r}&\geq 0\ ,\ \tilde{s}\geq 0\ ,\\
\id_{BB'} &\geq
\tilde{Q}_{BB'}\ ,\\
\tilde{Q}_{AA'BB'}\otimes\id_{CC'}&\geq (r\, \proj{\psi_{ABC}} -
s\, \id_{ABC})\otimes (r'\,\proj{\psi_{A'B'C'}} - s\, \id_{A'B'C'}) \\
&\geq \tilde{r}\, \proj{\psi_{ABC}} \otimes \proj{\psi_{A'B'C'}} -
\tilde{s}\, \id_{ABC} \otimes \id_{A'B'C'} \ ,
\end{align*}
and thus $\tilde{Q}$ is indeed a feasible solution for the combined problem. 
Choosing $\tilde{\oeps}$ as
\begin{align*}
\tilde{\oeps} &=\oeps+\oeps'-\oeps\oeps'
\end{align*} we have
\begin{align*}
\tilde{\gamma}(\tilde{\oeps})&\geq \tilde{r}(1-\tilde{\oeps}) - \tilde{s} =
\gamma(\delta) \gamma'(\delta')\ .
\end{align*}
We hence obtain
\begin{align*}
  \hmin^{\sqrt{\tilde{\oeps}}}(\tilde{A}|\tilde{B})&\leq
  \hmin^{\sqrt{\oeps}}(A|B)+ \hmin^{\sqrt{\oeps'}}(A'|B') \ . \end{align*} 
For $\oeps=\oeps'$, we have
\begin{align*}
\tilde{\oeps}&=2\oeps-\oeps^2\geq \oeps^2\ .
\end{align*}
Putting everything together we thus have
\begin{align*}
\hmin^{\oeps}(\tilde{A}|\tilde{B})&\leq \hmin^{\sqrt{\oeps}}(A|B)+ \hmin^{\sqrt{\oeps}}(A'|B')\ ,
\end{align*}
from which the result follows since $\oeps = \eps^2$.
\end{proof}

\subsection{Proof of Lemma~\ref{lem:monotone}   (Monotonicity)}  \label{app:monotonicity}
\begin{lemma}[Monotonicity]
For a ccq-state $\rho_{XYE}$ and for all $\eps \geq 0$, it holds that
\begin{equation} \nonumber
\hmine{\eps}{XY | E}  \geq \hminee{Y | E} \, .
\end{equation}
\end{lemma}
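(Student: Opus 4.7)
The plan is to reduce the smoothed inequality to its non-smoothed version, which for ccq-states is elementary from the defining semidefinite program, and then to handle the smoothing via Uhlmann's theorem followed by dephasing. For the non-smoothed case I would write any ccq-state $\sigma_{XYE}=\sum_{x,y}P_{XY}(x,y)\proj{xy}\otimes\sigma_E^{xy}$ and observe that the SDP constraint $\sigma_{XYE}\le \id_{XY}\otimes\tau_E$ defining $\hmin(XY|E)_\sigma$ decouples into the block conditions $P_{XY}(x,y)\sigma_E^{xy}\le \tau_E$ for every $(x,y)$, whereas $\hmin(Y|E)_\sigma$ is defined by $\sum_x P_{XY}(x,y)\sigma_E^{xy}\le \tau_E$ for every $y$. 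Since the sum dominates each individual summand in the positive-semidefinite order, any $\tau_E$ feasible for the $Y$-problem is automatically feasible for the $XY$-problem, which yields the non-smoothed bound $\hmin(XY|E)_\sigma\ge \hmin(Y|E)_\sigma$ on ccq-states.

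To smooth this, I would fix a ccq optimizer $\hat\rho_{YE}\in\ball(\rho_{YE})$ for $\hminee{Y|E}$ (existence being the content of Remark~3.2.4 in~\cite{renato:diss}, already used in the chain-rule proof in Appendix~\ref{app:chain}) and invoke Uhlmann's theorem to build an extension $\hat\rho_{XYE}$ of $\hat\rho_{YE}$ with $F(\rho_{XYE},\hat\rho_{XYE})=F(\rho_{YE},\hat\rho_{YE})$, so that $\hat\rho_{XYE}\in\ball(\rho_{XYE})$; any subnormalization of $\hat\rho_{YE}$ is handled by padding with a rank-one ancilla in the usual way. Since the Uhlmann extension need not be classical on $X$, I would then apply the joint dephasing channel $\mathcal{D}_X\otimes\mathcal{D}_Y$ to obtain $\hat\rho'_{XYE}$. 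Because $\rho_{XYE}$ is already ccq, it is fixed by this dephasing, so monotonicity of $C$ under CPTP maps keeps $\hat\rho'_{XYE}\in\ball(\rho_{XYE})$; because $\hat\rho_{YE}$ is already cq on $Y$, the dephasing also preserves $\Tr_X\hat\rho'_{XYE}=\hat\rho_{YE}$. The resulting $\hat\rho'_{XYE}$ is therefore a ccq state in $\ball(\rho_{XYE})$ whose reduction on $YE$ coincides with $\hat\rho_{YE}$.

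Applying the non-smoothed inequality from the first step to $\hat\rho'_{XYE}$ then gives
\[ \hminee{XY|E} \ \ge\ \hmin(XY|E)_{\hat\rho'_{XYE}}\ \ge\ \hmin(Y|E)_{\hat\rho_{YE}}\ =\ \hminee{Y|E}, \]
which is the desired monotonicity. The main obstacle is precisely the smoothing step: a naive construction of $\hat\rho_{XYE}$ from $\hat\rho_{YE}$, e.g.\ by setting its $(x,y)$-block equal to $P_{X|Y}(x\mid y)$ times the $y$-block of $\hat\rho_{YE}$, only gives via the triangle inequality the wrong-direction bound $\|\rho_{XYE}-\hat\rho_{XYE}\|_1\ge\|\rho_{YE}-\hat\rho_{YE}\|_1$, so ball membership is not automatic. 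Uhlmann's theorem is what delivers the correct fidelity bound, and the subsequent dephasing is exactly what is needed for both the PSD constraint manipulation of the first step and the partial-trace identity to survive simultaneously.
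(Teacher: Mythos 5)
Your proof is correct and follows essentially the same route as the paper: the heart of the argument --- producing a ccq extension $\hat\rho_{XYE}\in\ball(\rho_{XYE})$ of the optimizer $\hat\rho_{YE}$ via Uhlmann's theorem, monotonicity of the fidelity under the partial trace over $X$, and a dephasing step to restore classicality while preserving both ball membership and the reduced state on $YE$ --- is the paper's argument. The only variation is the unsmoothed inequality, which you get from the block-diagonal structure of the min-entropy SDP (any $\sigma_E$ feasible for the $Y$-problem is feasible for the $XY$-problem) rather than from the paper's guessing-probability duality $\pg(XY|E)\leq\pg(Y|E)$; for ccq-states these are equivalent one-line arguments.
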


\begin{proof}
For $\eps=0$, the lemma follows from Eq.~(\ref{eq:duality}), that
is, guessing $XY$ from $E$ is harder than guessing only $Y$ from $E$
and therefore, $\pg(XY|E) \leq \pg(Y|E)$.

For $\eps > 0$ the idea behind the argument is similar. Let the
maximum in $\hminee{Y|E}$ be achieved by a density matrix
$\hat{\rho}_{YE}$, i.e. $\hminee{Y|E}= \hmin(Y|E)_{\hat{\rho}}$ such that $C(\rho_{YE},\hat{\rho}_{YE}) \leq \eps$ and $\Tr(\hat{\rho}_{YE}) \leq 1$. Remark~3.2.4 in~\cite{renato:diss} shows that
$\hat{\rho}_{YE}$ is a cq-state. We can express this min-entropy in
terms of the guessing probability, Eq.~(\ref{eq:duality}), and thus
\begin{equation}
\hminee{Y|E}_{\hat{\rho}}=-\log \pg(Y|E)_{\hat{\rho}_{YE}} \leq -\log
\pg(XY|E)_{\hat{\rho}_{XYE}} \, ,
\label{eq:bounds}
\end{equation}
where $\hat{\rho}_{XYE}$ is any ccq-state which has $\hat{\rho}_{YE}$ as
its reduced state, i.e $\Tr_X(\hat{\rho}_{XYE})=\hat{\rho}_{YE}$.  Now
we would like to show that one can choose an extension
$\hat{\rho}_{XYE}$ such that $C(\rho_{XYE},\hat{\rho}_{XYE}) = \sqrt{1-F(\rho_{XYE},\hat{\rho}_{XYE})^2} \leq \eps$ and $\Tr (\hat{\rho}_{XYE}) \leq 1$.
If we can determine such an extension,
we can upper-bound the r.h.s. in Eq.~(\ref{eq:bounds}) by $\hminee{XY|E}$ which is the supremum of $-\log \pg(XY|E)$ over
states in the $\eps$-neighborhood of $\rho_{XYE}$. This would
prove the Lemma.

Let $\ket{\Psi}_{XYEC}$ be a purification of $\rho_{XYE}$ and hence also a purification of $\rho_{YE}$. By
Uhlmann's theorem (see e.g. \cite{nielsen&chuang:qc}), we have for
the fidelity $F(\rho_{YE},\hat{\rho}_{YE})$ between $\rho_{YE}$ and
$\hat{\rho}_{YE}$ that
$$
F(\rho_{YE}, \hat{\rho}_{YE})=\max_{{\ket{\Psi'}}_{XYEC}} | \braket{\Psi}{\Psi'} | \assign F(\proj{\Psi}, \proj{\hat{\Psi}}) \, ,
$$
where $\ket{\hat{\Psi}}_{XYEC}$ is the purification of $\hat{\rho}_{YE}$ achieving the maximum.
The monotonicity property of the fidelity under taking the partial trace gives
$$
F(\proj{\Psi} ,\proj{\hat{\Psi}} ) \leq
F(\Tr_C(\proj{\Psi}), \Tr_C(\proj{\hat{\Psi}}))=F(\rho_{XYE},\hat{\rho}_{XYE}) \, ,
$$
where $\hat{\rho}_{XYE} \assign \Tr_C(\proj{\hat{\Psi}_{XYEC}})$. Hence
\begin{equation}\label{eq:fidLower}
\sqrt{1-\eps^2} \leq F(\rho_{YE}, \hat{\rho}_{YE}) \leq F(\rho_{XYE},\hat{\rho}_{XYE}) \, ,
\end{equation}
and therefore, $C(\rho_{XYE},\hat{\rho}_{XYE}) \leq \eps$. If $\Tr(\hat{\rho}_{XYE}) > 1$, it follows that also $\Tr(\hat{\rho}_{YE}) >1$ which contradicts the assumption. Therefore, it must be the case that $\Tr(\hat{\rho}_{XYE}) \leq 1$.

It remains to show that $\hat{\rho}_{XYE}$ is a ccq-state. Because of
\begin{align*}
F(\rho_{YE}, \hat{\rho}_{YE}) &= F(\proj{\Psi} ,\proj{\hat{\Psi}} )\\
& \leq
F(\Tr_C(\proj{\Psi}), \Tr_C(\proj{\hat{\Psi}}))=F(\rho_{XYE},\hat{\rho}_{XYE}) \leq F(\rho_{YE},\hat{\rho}_{YE}) \, ,
\end{align*}
these quantities are all equal and in particular, we could do a measurement on the $X$-register of $\hat{\rho}_{XYE}$ without increasing the fidelity. Hence, we can assume the optimal purification $\proj{\hat{\Psi}_{XYEC}}$ is such that $\hat{\rho}_{XYE}$ is a ccq-state.
\end{proof}

\subsection{Proof of Theorem~\ref{thm:aequi}}\label{app:aequi}
\begin{theorem}
For $i=1,\ldots,n$, let $\rho_{i} \in \mS(\hil_{AB})$ be density operators.
Then, for any $\eps > 0$,
\begin{equation*}
\hminee{A^n|B^n}_{\bigotimes_{i=1}^n \rho_{i}} \geq \sum_{i=1}^n \left[ \H(A_i|B_i)_{\rho_i} \right] -
  \delta(\eps,\gamma)\sqrt{n}
\, ,
\end{equation*}
where, for $n \geq \frac{8}{5} \log \frac{2}{\eps^2}$, the error is given by
\[ \delta(\eps,\gamma) \assign 4 \log \gamma \sqrt{\log \frac{2}{\eps^2}}
\]
and the single-system entropy contribution by
\[ \gamma \leq 2 \max_i \sqrt{\rank(\rho_{A_i})} + 1 \, .
\]
\end{theorem}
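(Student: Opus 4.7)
The plan is to adapt the proof of Theorem~7 in~\cite{TCR08} by noting that the only place where the i.i.d.\ assumption is really used is in passing from the conditional Rényi entropy of a tensor power to $n$ times the single-copy Rényi entropy, and that this step can be replaced by additivity of conditional Rényi entropies on products of independent (not necessarily identical) states.

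First, I would invoke the inequality (see~\cite[Lemma 3.2.12]{renato:diss} or the analogous step in~\cite{TCR08}) that bounds the smooth min-entropy by the conditional Rényi entropy of order $1+\mu$ for small $\mu>0$,
\[
\hminee{A^n|B^n}_{\rho} \;\geq\; H_{1+\mu}(A^n|B^n)_{\rho} \;-\; \frac{\log(2/\eps^2)}{\mu}.
\]
Next, I would apply additivity of the conditional Rényi entropy across product states: for $\rho=\bigotimes_i \rho_i$ with independent factors,
\[
H_{1+\mu}(A^n|B^n)_{\bigotimes_i \rho_i} \;=\; \sum_{i=1}^n H_{1+\mu}(A_i|B_i)_{\rho_i}.
\]
Crucially, this identity holds whether or not the $\rho_i$ are identical; it is a purely algebraic fact about Rényi entropies on tensor products. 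This is the only place where iid is replaced by mere independence.

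Third, I would feed in the single-system Taylor expansion around $\mu=0$, which is the heart of the quantum AEP in~\cite{TCR08}: for each $i$,
\[
H_{1+\mu}(A_i|B_i)_{\rho_i} \;\geq\; \H(A_i|B_i)_{\rho_i} \;-\; \mu\cdot c\,(\log \gamma_i)^2,
\]
where $\gamma_i \leq 2\sqrt{\rank(\rho_{A_i})}+1$ and $c$ is an absolute constant. This bound is derived for an individual state, so it applies to each $\rho_i$ separately. Summing over $i$ and bounding $\gamma_i \leq \gamma := 2\max_i \sqrt{\rank(\rho_{A_i})}+1$ yields
\[
\sum_i H_{1+\mu}(A_i|B_i)_{\rho_i} \;\geq\; \sum_i \H(A_i|B_i)_{\rho_i} \;-\; c\,n\,\mu\,(\log\gamma)^2.
\]

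Finally, combining the three displays and optimizing the choice $\mu = \Theta\!\bigl(\sqrt{\log(2/\eps^2)/n}\,\bigr)/\log\gamma$ yields an error of order $\log\gamma\sqrt{n\log(2/\eps^2)}$, matching the stated $\delta(\eps,\gamma)\sqrt{n}$ with $\delta(\eps,\gamma)=4\log\gamma\sqrt{\log(2/\eps^2)}$. The hypothesis $n\geq \tfrac85\log(2/\eps^2)$ is exactly what is needed to keep the optimal $\mu$ small enough for the Taylor-expansion step to be valid. The main technical obstacle is verifying that the Rényi/Taylor-expansion bound of~\cite{TCR08} is genuinely single-system (so it can be applied to each $\rho_i$ in turn) and that tracking the state-dependent constants produces exactly the $\max_i \sqrt{\rank(\rho_{A_i})}$ form; beyond this bookkeeping, the argument is essentially the same as the iid case.
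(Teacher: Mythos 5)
Your proposal is correct and follows essentially the same route as the paper: both adapt the proof of Theorem~7 in~\cite{TCR08} by bounding the smooth min-entropy via the conditional R\'enyi ($\alpha$-)entropy, exploiting that this quantity is additive over independent (not necessarily identical) tensor factors, applying the single-system Taylor-type lower bound $\H_\alpha(A|B)_{\rho|\rho}\geq \H(A|B)_\rho - O(\beta\log^2\gamma)$ to each factor, and optimizing the R\'enyi parameter of order $\sqrt{\log(2/\eps^2)/n}/\log\gamma$, with the rank bound on $\gamma$ coming from $\H_{1/2}(A|B)_{\rho|\rho}\leq \log\rank(\rho_A)$. The only difference is that the paper carries out explicitly the bookkeeping (the quantities $r_\beta$, $X^i$, $\gamma^i$ and the choice $\beta=1/(2\mu\sqrt n)$) that you defer as "tracking the state-dependent constants."
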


\begin{proof}
The proof is analogous to the proof of Theorem~7 in~\cite{TCR08}.
For convenience, we point out where their proof needs to be adapted.
We need the following definitions. Let $\hil'_{AB}$ be a copy of
$\hil_{AB}$ and let $\ket{\gamma} \assign \sum_i \ket{i} \otimes
\ket{i}$ be the unnormalized fully entangled state on $\hil_{AB}
\otimes \hil'_{AB}$. Define the purification $\ket{\phi} \assign
(\sqrt{\rho_{AB}} \otimes \id_{AB}) \ket{\gamma}$ of $\rho_{AB}$ and
let $1< \alpha \leq 2$, $\beta \assign \alpha -1$, and $X \assign
\rho_{AB} \otimes (\id_A \otimes \rho_B^{-1})^T$. The conditional
$\alpha$-entropy is defined as $\H_\alpha(A|B)_{\rho|\sigma} \assign
\frac{1}{1-\alpha} \log \Tr (\rho_{AB}^\alpha (\id_A \otimes
\sigma_B)^{1-\alpha})$. The authors of \cite{TCR08} prove the
following lower bound
\begin{equation} \label{eq:AEP1}
\H_\alpha(A|B)_{\rho|\rho} \geq H(A|B)_\rho - \frac{1}{\beta \ln 2} \bra{\phi} r_\beta(X) \ket{\phi} \, ,
\end{equation}
where $r_\beta(t) \assign t^\beta - \beta \ln t -1$.

Let $\hat{\rho} = \rho_{AB}^1 \otimes \ldots \otimes \rho_{AB}^n$. Then, as in Equation~(27) of~\cite{TCR08}, we have
\begin{align}
\hminee{A^n|B^n}_{\hat{\rho}} \geq \hminee{A^n|B^n}_{\hat{\rho}|\hat{\rho}} \nonumber
&\geq \H_{\alpha}(A^n|B^n)_{\hat{\rho}|\hat{\rho}} - \frac{1}{\beta} \log \frac{2}{\eps^2}\\
&= \sum_{i=1}^n \H_\alpha(A|B)_{\rho^i|\rho^i} - \frac{1}{\beta} \log \frac{2}{\eps^2} \nonumber \\
&\geq \sum_{i=1}^n \left( \H(A|B)_{\rho^i} - \frac{1}{\beta \ln 2} \bra{\phi} r_\beta(X^i) \ket{\phi} \right)  - \frac{1}{\beta} \log \frac{2}{\eps^2} \, , \label{eq:AEP2}
\end{align}
where we used~\eqref{eq:AEP1} in the last step.

Let us define the single-system entropy contributions $\gamma^i \assign \bra{\phi} \sqrt{X^i} + 1/\sqrt{X^i} + \id \ket{\phi}$ of which we know that they are all $\geq 3$ and let $\gammamax$ be the largest of them. By choosing an appropriate $\mu \geq 0$ such that
$$ \beta = \frac{1}{2 \mu \sqrt{n}} \leq \sqrt{\frac{5}{8}} \; \frac{1}{2 \log \gammamax} \leq \min \left\{ \frac{1}{4}, \frac{1}{2 \log \gammamax} \right\} \, ,
$$
we can bound
$$
\frac{1}{\beta \ln 2} \bra{\phi} r_\beta(X^i) \ket{\phi} \leq \frac{2}{ \mu \sqrt{n} } \log^2(\gamma^i) \leq \frac{2}{ \mu \sqrt{n} } \log^2(\gammamax)
$$

Therefore, we can further lower bound~\eqref{eq:AEP2} as
\begin{align*}
\hminee{A^n|B^n}_{\hat{\rho}} &\geq \sum_{i=1}^n H(A|B)_{\rho^i} - \sum_{i=1}^n \frac{2}{\mu \sqrt{n}} \log^2(\gammamax) - 2 \mu \sqrt{n} \log \frac{2}{\eps^2} \\
&\geq \sum_{i=1}^n H(A|B)_{\rho^i} - 2 \sqrt{n} \left( \frac{1}{\mu} \log^2(\gammamax) + \mu \log \frac{2}{\eps^2} \right) \, .
\end{align*}
and the rest of the derivation goes as after Equation~(28) in~\cite{TCR08}.

In order to obtain the upper bound on $\gamma$, we notice that $\H_{1/2}(A|B)_{\rho|\rho} \leq H_{1/2}(A)_{\rho} \leq H_0(A)_{\rho} = \log (\rank(\rho_A))$.
\end{proof}

\section{Appendix: Proof of Theorem~\ref{thm:depolarize}} \label{app:mainproof}

\subsection{Setting the Stage}
We use the symmetries inherent in our problem to prove Theorem~\ref{thm:depolarize} in a
series of steps.

\begin{theorem}
Let $\mN$ be the depolarizing
quantum operation given by Eq.~\eqref{eq:depolChannel} and let $\H(X
| \Theta
  K E)$ be the conditional von Neumann entropy of one qubit. Then
$$
\H(X | \Theta K E) \geq \left\{
\begin{array}{ll}
h(\frac{1+r}{2}) & \mbox{ for } r \geq \hat{r} \, ,  \\
1/2  & \mbox{ for } r < \hat{r} \, ,
\end{array}\right.
$$
where $\hat{r} \assign 2 h^{-1}(1/2)-1 \approx 0.7798$.
\end{theorem}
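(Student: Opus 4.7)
The goal is to show that for any of Bob's individual-storage attacks on a single qubit followed by the depolarizing channel $\mN(\rho)=r\rho+(1-r)\id/2$, the conditional von Neumann entropy $\H(X\mid\Theta KE)$ matches the claimed piecewise bound, whose two branches correspond to the two natural attacks: storing the qubit as-is (giving $h(\tfrac{1+r}{2})$, since Bob holds $\mN(\proj{x}_\theta)$ with eigenvalues $(1\pm r)/2$) or measuring immediately in a BB84 basis (giving $\tfrac12$, since one basis reveals $X$ perfectly and the other is uncorrelated). The crossover value $\hat r=2h^{-1}(1/2)-1$ is precisely the solution of $h(\tfrac{1+r}{2})=\tfrac12$, so the theorem asserts that these are the only two attacks Bob ever needs.

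I would first place Bob's attack in a canonical form. His most general individual action on a qubit is a quantum instrument $\{\Phi_k\}$, which I may refine until each $\Phi_k$ has a single Kraus operator $N_k$ with $\sum_k N_k^\dagger N_k=\id$; such refinement turns classical outcomes into Bob's side information and therefore only \emph{decreases} $\H(X\mid\Theta KE)$, so it suffices to minimize over rank-one-Kraus instruments. A useful structural observation is that $P(k\mid\theta)=\tfrac12\Tr(N_k^\dagger N_k)$ is independent of $\theta$, which lets $\H(X\mid\Theta KE)$ split cleanly as $\sum_k P(k)\cdot\tfrac12\sum_\theta \H(X\mid E,K=k,\theta)$, reducing the problem to a per-outcome lower bound.

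Next I would exploit the symmetries. The depolarizing channel is $U$-covariant for every unitary, and the four BB84 states are permuted by the Clifford subgroup $G$ generated by $\sigma_x,\sigma_z,H$. Since relabelling $(x,\theta)$ leaves the entropy invariant, I would symmetrize Bob's instrument over $G$ by adding to $K$ a classical record of the group element applied to the input: this produces a $G$-covariant instrument with the same value of $\H(X\mid\Theta KE)$, at which point $\{N_k\}$ is forced into a low-parameter family. Concretely, after absorbing a unitary commuting with $\mN$, each Kraus operator may be assumed of the form $N_k=\alpha_k P_{\psi_k}+\beta_k P_{\psi_k^\perp}$ with $\psi_k$ on the great circle spanned by the two BB84 bases, leaving essentially one angular parameter per outcome and a ``sharpness'' parameter interpolating between the identity and a projective BB84 measurement.

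Finally, I would compute the per-outcome entropy on this family explicitly. For each $(k,\theta)$ the conditional state $\mN(N_k\proj{x}_\theta N_k^\dagger/p_{k\mid x\theta})$ is a qubit with Bloch vector whose length is controlled by $r$ and by the overlap between the two post-measurement states for $x=0,1$; the binary cq-state formula $\H(X\mid E)=h(q)-\chi$ then gives a closed-form expression for $\H(X\mid E,K=k,\theta)$. Minimizing the resulting explicit function of $r$ and the angular/sharpness parameters is the main technical obstacle, since the entropy is not jointly convex; I would analyse the boundary and stationarity conditions to show that the infimum is attained either at the identity instrument or at a projective $+$- (equivalently $\times$-) measurement, giving the two branches. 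Taking the minimum of these two extremal values yields exactly $\min\{h(\tfrac{1+r}{2}),\tfrac12\}$, and comparing which branch wins in terms of $r$ produces the threshold $\hat r$ and completes the proof.
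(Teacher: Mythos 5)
Your high-level route is the same as the paper's: exploit the covariance of the depolarizing channel and the symmetry group permuting the BB84 states to reduce Bob's instrument to a few-parameter family of Hermitian operators, then minimize an explicit entropy expression and show the minimum sits at one of the two trivial attacks. The setup observations (rank-one Kraus refinement, $P(k\mid\theta)$ independent of $\theta$, the values $h(\tfrac{1+r}{2})$ and $\tfrac12$ for the two extremal attacks, and $\hat r$ as their crossover) are all correct.

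However, two decisive steps are missing or wrongly justified. First, you assert that symmetrization forces each Kraus operator into the form $\alpha_k P_{\psi_k}+\beta_k P_{\psi_k^\perp}$ with $\psi_k$ on the great circle spanned by the two BB84 bases. Covariance does not give this: the group $\{\id,X,Z,XZ\}$ (and also $H$) maps the $Y$ axis of the Bloch sphere to itself up to sign, so a covariant instrument built from an operator diagonal in the $Y$ eigenbasis is perfectly admissible and is not excluded by symmetry. Ruling out a $Y$ component requires a separate analytic argument; the paper proves it by writing $\proj{\phi}=\tfrac12(\id+\hat x X+\hat y Y+\hat z Z)$ and showing $h(2\Tr(F\rho_{0+}F))$ is concave and non-increasing in $\hat z$ (similarly in $\hat x$), so the minimizer puts all Bloch weight in the $XZ$ plane. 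Second, the heart of the theorem --- that the optimum over the remaining ``sharpness'' and angular parameters is attained either at $F\propto\id$ or at a rank-one projector onto a BB84 state, with the switch exactly at $\hat r=2h^{-1}(1/2)-1$ --- is precisely the step you defer to ``boundary and stationarity conditions.'' This is where intermediate-sharpness operators and the Breidbart-type direction $\hat x=\hat z=1/\sqrt 2$ (which \emph{is} optimal for the guessing probability, so it cannot be dismissed by intuition) must be excluded. The paper does this concretely: Lagrange multipliers reduce the angle to either the Breidbart point or an axis point and show the axis wins; then a one-parameter function $t(r,\alpha)$ is differentiated, its extremal points $\alpha\to 0$ and $\alpha=1/2$ identified, the second derivative shown to change sign at $r=1/\sqrt 2$, and finally $t(r,0)\le t(r,1/2)$ shown to be equivalent to $\tfrac12\le h(\tfrac{1+r}{2})$, which is what actually produces $\hat r$. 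Without these computations your plan asserts the conclusion rather than deriving it, so as it stands the proposal is an outline with a genuine gap at its critical steps.
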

In order to prove the theorem, we find Bob's strategy which
minimizes $\H(X|\Theta K E)$ as a function of the depolarizing noise
parameter $r$. As depicted in Figure~\ref{figure:depolModel}, in
each round the dishonest receiver Bob receives one of the four
possible BB84 states $\rho_{x\theta}$ at random. On such state he
may then perform any (partial) measurement $M$ given by measurement
operators $M=\{F_k\}$ such that $\sum_k F_k^\dagger F_k = \id$. For
clarity of notation, we do not use a subscript to indicate the round
$i$ as in the Figure. We denote by $\regE$ the register containing
the renormalized post-measurement state
$$
\rho_{x\theta}^{k,M}= \frac{F_k \rho_{x,\theta} F_k^\dag}{p_{k | x
\theta}},
$$
to which the depolarizing quantum operation ${\cal N}$ is applied. Here
$$
p_{k | x \theta}^M = \Tr( F_k \rho_{x \theta} F_k^\dag )
$$
is the probability to measure outcome $k$ when given state $\rho_{x
\theta}$. We omit the superscript $M$ if it is clear which
measurement is used. Note that we may write
$$
p_{x\theta k}^M = \frac{1}{4}p_{k|x \theta}^M \, ,
$$
$$
\sum_x p_{x\theta k}^M=p_{\theta k}^M  = \frac{1}{4}\Tr\left(F_k
\left(\rho_{0\theta} +
    \rho_{1\theta}\right)F_k^\dagger\right) =
\frac{1}{4}\Tr(F_kF_k^\dagger) \, ,
$$
and
$$
p_{x|\theta k}^M = \frac{p_{k|x\theta}^M}{4 p_{\theta k}^M}.
$$
Here we have used the fact that Alice chooses the basis and bit in
each round uniformly and independently at random.

First of all, note that for a cq-state $\rho_{YE} = \sum_y P_Y(y)
\ket{y}\bra{y} \otimes \rho_y^E$, the von Neumann entropy can be
expanded as
$$
\H(YE) = \H(Y) + \sum_y P_Y(y) \H(\rho_y^E) \, .
$$

Using this expansion, we can write
\begin{align} \nonumber
\H(X| \Theta K E)_M &= \H(X \Theta K E)_M - \H(\Theta K E)_M\\
&= \H(X \Theta
K)_M + \sum_{x \theta k} p_{x \theta k}^M \H\left(\mN\left(\rho_{x \theta}^{k,M}\right)\right) -
\H(\Theta K)_M - \sum_{\theta k} p_{\theta k}^M \H \left(\mN\left( \sum_x p_{x|\theta k}^M
\rho_{x \theta}^{k,M} \right)\right) \nonumber \\
&= \H(X | \Theta K)_M + \sum_{x \theta k} p_{x \theta k}^M \H\left(\mN\left(\rho_{x
  \theta}^{k,M}\right)\right) - \sum_{\theta k} p_{\theta k}^M \H \left(\sum_x p_{x|\theta k}^M
\mN\left(\rho_{x \theta}^{k,M}\right)\right). \label{eq:merit}
\end{align}
We use the notation $\H(X|\Theta K E)_M$ to emphasize that we
consider the conditional von Neumann entropy when Bob performed a
partial measurement $M$. In the following, we use the shorthand
$$
\B(M) \assign \H(X|\Theta K E)_M \, .
$$

\subsection{Using Symmetries to Reduce Degrees of Freedom}
Our goal is to minimize $\B(M)$ over all possible measurements $M =
\set{F_k}$ as a function of $r$. We proceed in three steps. First,
we simplify our problem considerably until we are left with a single
Hermitian measurement operator over which we need to minimize the
entropy. Second, we show that the optimal measurement operator is
diagonal in the computational basis. And finally, we show that
depending on the amount of noise, this measurement operator is
either proportional to the identity, or proportional to a rank one
projector.

First, we prove a property of the function $B(M)$ for a composition
of two measurements. Intuitively, the following statement uses the
fact that if we choose one measurement with probability $\alpha$ and
another measurement with probability $\beta$ our average success
probability is the average of the success probabilities obtained via
the individual measurements:
\begin{claim}\label{convexity}
Let $F=\{F_k\}_{k=1}^f$ and $G=\{G_k\}_{k=f+1}^{f+g}$ be two
measurements. Then, for $0 \leq \alpha \leq 1$ and a combined
measurement  $M = \alpha F + (1-\alpha) G \assign \set{\sqrt{\alpha}
    F_k}_{k=1}^{f} \cup \set{\sqrt{1-\alpha} G_k}_{k=f+1}^{f+g}$, we have
$$\B(\alpha F + (1-\alpha) G) = \alpha \B(F) + (1-\alpha) \B(G) \, .$$
\end{claim}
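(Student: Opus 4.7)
The plan is to prove the claim by direct calculation, using the decomposition of $\B(M)$ given in Eq.~(\ref{eq:merit}). First I would verify that $M = \alpha F + (1-\alpha) G$ is a legitimate measurement: its completeness relation
$$\sum_{k=1}^f (\sqrt{\alpha}F_k)^\dagger(\sqrt{\alpha}F_k) + \sum_{k=f+1}^{f+g} (\sqrt{1-\alpha}G_k)^\dagger(\sqrt{1-\alpha}G_k) = \alpha\,\id + (1-\alpha)\,\id = \id$$
follows from $F$ and $G$ being measurements. Then I would record the ingredients entering $\B(M)$. For $k\in\{1,\dots,f\}$ one has $p^M_{k|x\theta} = \alpha\,p^F_{k|x\theta}$, and for $k\in\{f+1,\dots,f+g\}$ one has $p^M_{k|x\theta} = (1-\alpha)\,p^G_{k|x\theta}$. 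Crucially, the scalars $\sqrt{\alpha}$ and $\sqrt{1-\alpha}$ cancel in the renormalization of the post-measurement states, so $\rho^{k,M}_{x\theta} = \rho^{k,F}_{x\theta}$ on the first block of indices and $\rho^{k,M}_{x\theta} = \rho^{k,G}_{x\theta}$ on the second, and analogously for the conditional distributions $p^M_{x|\theta k}$.

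The key structural observation is that the outcome index sets $\{1,\dots,f\}$ and $\{f+1,\dots,f+g\}$ are disjoint, so the value of $\regK$ deterministically identifies whether the outcome came from the $F$-block or the $G$-block. Consequently, the two quantum sums in Eq.~(\ref{eq:merit}) split cleanly:
$$\sum_{x\theta k} p^M_{x\theta k}\,\H\!\left(\mN(\rho^{k,M}_{x\theta})\right)
= \alpha\sum_{x\theta,\,k\leq f} p^F_{x\theta k}\,\H\!\left(\mN(\rho^{k,F}_{x\theta})\right)
+ (1-\alpha)\sum_{x\theta,\,k>f} p^G_{x\theta k}\,\H\!\left(\mN(\rho^{k,G}_{x\theta})\right),$$
and similarly for the third term of Eq.~(\ref{eq:merit}), in each case yielding $\alpha$ times the corresponding quantity for $F$ plus $(1-\alpha)$ times that for $G$.

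For the classical term $\H(X|\Theta \regK)_M$, I would introduce the indicator $I\in\{0,1\}$ of ``$\regK$ lies in the $G$-block''. Since $\regK$ determines $I$, conditioning on $\regK$ is the same as conditioning on $(\regK,I)$, and the marginal of $I$ is $(\alpha,1-\alpha)$. Because, conditioned on $I=0$, the joint distribution of $(X,\Theta,\regK)$ coincides with the one induced by $F$ (and similarly for $I=1$ and $G$), I get
$$\H(X|\Theta \regK)_M = \alpha\,\H(X|\Theta \regK)_F + (1-\alpha)\,\H(X|\Theta \regK)_G.$$
Adding the three contributions yields the stated identity $\B(M) = \alpha\,\B(F) + (1-\alpha)\,\B(G)$.

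There is no real obstacle: the claim is essentially bookkeeping, reflecting the interpretation of $M$ as ``flip a biased coin with bias $\alpha$ and then apply $F$ or $G$ without discarding the coin outcome''. The only point to be careful about is to verify that the scalar factors $\sqrt{\alpha}$ and $\sqrt{1-\alpha}$ do cancel in the renormalized post-measurement states, so that $\rho^{k,M}_{x\theta}$ exactly matches the post-measurement state of the sub-measurement it came from; otherwise the splitting above would acquire spurious factors.
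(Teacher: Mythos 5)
Your proposal is correct and follows essentially the same route as the paper: verify that the scalars $\sqrt{\alpha},\sqrt{1-\alpha}$ scale the outcome probabilities but cancel in the renormalized post-measurement states and conditional distributions, then split each of the three terms of Eq.~\eqref{eq:merit} over the two disjoint blocks of outcomes. Your indicator-variable phrasing of the classical term is just a repackaging of the paper's direct splitting of the sum $\sum_{\theta k} p_{\theta k}\,h(p_{0|\theta k})$, so there is no substantive difference.
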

\begin{proof}
  Let $F=\set{F_k}_{k=1}^f$ and $G=\set{G_k}_{k=1}^g$ be measurements,
  $0 \leq \alpha \leq 1$ and let $M \assign \set{\sqrt{\alpha}
    F_k}_{k=1}^{f} \cup \set{\sqrt{1-\alpha} G_k}_{k=f+1}^{f+g}$.

It is easy to verify that we have the following relations for $1\leq k
\leq f$: $p_{x \theta k}^M= \alpha p_{x \theta k}^F$, $p_{x|\theta
k}^M=\frac{\alpha p_{x \theta k}^F}{\alpha p_{\theta k}^F} =
p_{x|\theta k}^F$, $p_{k | x \theta}^M = \alpha p_{k | x \theta}^F$
and $\rho_{x \theta}^{k,M} = \frac{\alpha F_k \rho_{x \theta}
  F_k^\dag}{p_{k|x \theta}^M} = \frac{F_k \rho_{x \theta}
  F_k^\dagger}{p_{k|x \theta}^F} = \rho_{x \theta}^{k,F}$
and analogously for $f+1 \leq k \leq f+g$.

We consider the three summands in Eq.~\eqref{eq:merit} separately.
For the first term we get
\begin{align*}
\H(X | \Theta K)_M &= \sum_{\theta k} p_{\theta k}^M
h\left(p_{0|\theta k}^M\right) \\
&= \sum_{\theta} \sum_{k=1}^f \alpha p_{\theta k}^F h\left(p_{0|\theta
  k}^F\right) + \sum_{\theta} \sum_{k=f+1}^{f+g} (1-\alpha) p_{\theta k}^G h\left(p_{0|\theta
  k}^G\right) \\
&=\alpha \H(X|\Theta K)_F + (1-\alpha) \H(X|\Theta K)_G \, .
\end{align*}
For the second term, we obtain
\begin{align*}
\sum_{x \theta k} p_{x \theta k}^M \H\left( \mN\left(\rho_{x
\theta}^{k,M}\right) \right) = \alpha \sum_{x \theta} \sum_{k=1}^f p_{x \theta k}^F
\H\left( \mN\left(\rho_{x \theta}^{k,F}\right) \right) + (1-\alpha) \sum_{x \theta} \sum_{k=f+1}^{f+g} p_{x
\theta k}^G \H\left(\mN\left(\rho_{x \theta}^{k,G}\right) \right) \, .
\end{align*}
The third term yields
\begin{align*}
\sum_{\theta k} p_{\theta k}^M &\H\left( \sum_x p_{x|\theta k}^M \mN\left(\rho_{x
\theta}^{k,M}\right) \right)\\
&= \alpha \sum_{\theta} \sum_{k=1}^f p_{\theta k}^F
\H\left(  \sum_x p_{x|\theta k}^F \mN\left(\rho_{x
\theta}^{k,F}\right) \right) + (1-\alpha) \sum_{\theta} \sum_{k=f+1}^{f+g} p_{x
\theta k}^G \H\left(  \sum_x p_{x|\theta k}^G \mN\left(\rho_{x
\theta}^{k,G}\right) \right) \, .
\end{align*}
\end{proof}

We can now make a series of observations.
\begin{claim}\label{invarianceOfBunderG}
Let $M = \{F_k\}$ and $G = \{\id,X,Z,XZ\}$. Then for all $g \in G$ we have $\B(M) = \B(gMg^\dagger)$.
\end{claim}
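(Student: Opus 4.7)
The plan is to exploit two symmetries of the setup. First, each $g \in G$ acts by conjugation on the four BB84 states in a clean way: $g^\dag \rho_{x\theta} g$ is again a BB84 state in the same basis $\theta$, possibly with the bit $x$ flipped. Second, the depolarizing channel $\mN$ commutes with unitary conjugation and the von Neumann entropy is invariant under it. Combining these two facts should make each of the three summands in the expansion~(\ref{eq:merit}) of $\B$ unchanged when $M$ is replaced by $M' \assign gMg^\dag$.

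Concretely, I would first verify that $M' = \set{g F_k g^\dag}$ is a valid measurement: since every $g \in G$ is unitary (for $g = XZ$ one checks $(XZ)^\dag(XZ) = ZX\!\cdot\! XZ = \id$), one has $\sum_k (gF_kg^\dag)^\dag (gF_kg^\dag) = g\,(\sum_k F_k^\dag F_k)\,g^\dag = \id$. Next I would establish the combinatorial step: for every $g \in G$ there is a bijection $\pi_g$ on $\set{0,1}\times\set{+,\times}$ that preserves the basis coordinate $\theta$ and only (possibly) flips $x$, such that $g^\dag \rho_{x\theta}\, g = \rho_{\pi_g(x,\theta)}$. A short case-check gives $\pi_\id$ the identity, $\pi_X$ flipping $x$ only for $\theta=+$, $\pi_Z$ flipping only for $\theta=\times$, and $\pi_{XZ}$ flipping $x$ in both bases.

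Direct computation then yields
\begin{align*}
p^{M'}_{k|x\theta} &= \Tr(F_k\,g^\dag \rho_{x\theta}\, g\, F_k^\dag) = p^M_{k|\pi_g(x,\theta)}\, ,\\
\rho^{k,M'}_{x\theta} &= g\,\rho^{k,M}_{\pi_g(x,\theta)}\,g^\dag \, .
\end{align*}
Plugging these into the three-term expansion~(\ref{eq:merit}) of $\B(M')$, I would use $\mN(g \sigma g^\dag) = g\,\mN(\sigma)\,g^\dag$ (immediate from $\mN(\sigma) = r\sigma + (1-r)\id/2$ and unitarity of $g$) together with the unitary invariance of $\H$ to strip every $g$ from inside the entropies. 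A change of summation variable $y = \pi_g(x,\theta)$ (a bijection on the joint index) then identifies each of the three terms in $\B(M')$ with its $\B(M)$ counterpart, giving $\B(M') = \B(M)$.

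The main obstacle is purely bookkeeping. Because $\pi_g$ only permutes $x$ within each fixed $\theta$, the marginals $p^{M'}_{\theta k}$ coincide with $p^{M}_{\theta k}$, and the inner weighted sum $\sum_x p_{x|\theta k}\,\mN(\rho^{k,M}_{x\theta})$ transforms consistently with the conditional structure of the third term. One has to track this $\theta$-preservation carefully, but there is no genuine analytic obstacle.
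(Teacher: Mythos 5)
Your proof is correct and follows essentially the same route as the paper's: identify that conjugation by $g\in G$ only permutes the BB84 states within each basis (a bit flip correctable by relabeling), then use $\mN(g\sigma g^\dag)=g\,\mN(\sigma)\,g^\dag$ and the unitary invariance of the von Neumann entropy to match the three summands of Eq.~\eqref{eq:merit} term by term. Your explicit bookkeeping of the $\theta$-preserving bijection $\pi_g$ (and checking unitarity of $XZ$ directly rather than invoking Hermiticity) is, if anything, slightly more careful than the paper's statement that ``there exists a bijection $f:\01\rightarrow\01$,'' but it is the same argument.
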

\begin{proof}
First of all, note
that for all $g \in G$, $g$ can at most exchange the roles of
$0$ and $1$. That is, we can perform a bit flip before the
measurement which we can correct for afterwards by applying
classical post-processing. Furthermore, since $g \in G$
is Hermitian and unitary we have
$$
p_{\theta k}^{gMg^\dag}=\frac{1}{4}
\Tr(g F_k g^\dag (\rho_{0 \theta} + \rho_{1 \theta}) g^\dag F_k^\dag g) =
\frac{1}{4} \Tr( F_k F_k^\dag ) = p_{\theta k}^M \, ,
$$
and hence there exists a bijection $f: \01 \rightarrow \01$ such that
$$
p_{x | \theta k}^{g M g^\dag} = p_{f(x)|\theta k}^M \, .
$$

Again, we consider the three summands in Eq.~\eqref{eq:merit}
separately. For the first term, observe that $h(p_{0|\theta
k}^M)=h(p_{1|\theta k}^M)$.
\begin{align*}
\H(X | \Theta K)_{g M g^\dag} &= \sum_{\theta k} p_{\theta k}^{g M
g^\dag}  h\left(p_{0|\theta k} ^{g M g^\dag}\right) = \sum_{\theta
k} p_{\theta k}^{M}  h\left(p_{0|\theta k}^M\right) = \H(X | \Theta
K)_{M}\, .
\end{align*}
To analyze the second term, note that we can write
$$
p_{x \theta k}^{g M g^\dagger} = p_{f(x) \theta k}^{ M} \, ,
$$
and for depolarizing noise $\mN\left(U \rho U^\dagger\right) = U
\mN(\rho) U^\dagger$, in addition the von Neumann entropy itself is
invariant under unitary operations $\H(g \mN(\rho) g^\dag) =
\H(\mN(\rho))$. Putting everything together, we obtain
\begin{align*}
\sum_{x \theta k} p_{x \theta k}^\gMg \H\left( \mN\left(\rho_{x
\theta}^{k,\gMg}\right) \right) = \sum_{x \theta k} p_{x \theta k}^M \H\left( \mN\left(\rho_{x
\theta}^{k,M}\right) \right).
\end{align*}
By a similar argument, we derive the equality for the third term
\begin{align*}
\sum_{\theta k} p_{\theta k}^\gMg \H\left( \sum_x p_{x|\theta k}^\gMg \mN\left(\rho_{x
\theta}^{k,\gMg}\right) \right) = \sum_{\theta k} p_{\theta k}^M \H\left( \sum_x
p_{x|\theta k}^M \mN\left(\rho_{x \theta}^{k,M}\right) \right).
\end{align*}
\end{proof}

\begin{claim}\label{maximumInvariant}
  Let $G = \{\id,X,Z,XZ\}$.  There exists a measurement operator $F$
  such that the minimum of $\B(M)$ over all measurements $M$ is
  achieved by a measurement proportional to $\{gFg^\dagger \mid g \in
  G\}$.
\end{claim}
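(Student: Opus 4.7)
The plan is to apply the standard ``symmetrize-and-decompose'' argument built from the two preceding claims. First I would fix any optimal measurement $M^\star = \{F_k^\star\}_k$ minimizing $\B$. For every $g\in G$ the conjugated POVM $gM^\star g^\dagger = \{gF_k^\star g^\dagger\}_k$ is again a valid measurement (by unitarity of $g$), and by Claim~\ref{invarianceOfBunderG} it satisfies $\B(gM^\star g^\dagger) = \B(M^\star)$. Iterating Claim~\ref{convexity} across the four elements of $G$, the uniformly mixed measurement
$$
M' \assign \tfrac{1}{4}\sum_{g\in G} gM^\star g^\dagger = \Bigl\{\tfrac{1}{2}\, gF_k^\star g^\dagger : g\in G,\ k\Bigr\}
$$
is still optimal: $\B(M') = \tfrac{1}{4}\sum_g \B(gM^\star g^\dagger) = \B(M^\star)$.

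The key algebraic input I will need is the Pauli-averaging identity $\sum_{g\in G} gAg^\dagger = 2\Tr(A)\,\id$, valid for every $A\in\bop(\hil)$ on a qubit; this follows by expanding $A$ in the basis $\{\id,X,Y,Z\}$ (note $Y\propto XZ$) and checking the four cases. Setting $t_k \assign \Tr(F_k^{\star\dagger}F_k^\star)$, so that $\sum_k t_k = 2$, I use this identity with $A = F_k^{\star\dagger}F_k^\star$ to verify that, for each $k$ with $t_k>0$, the single-orbit collection
$$
M_k \assign \Bigl\{\tfrac{1}{\sqrt{2t_k}}\, gF_k^\star g^\dagger : g\in G\Bigr\}
$$
is itself a bona fide four-outcome POVM, whose elements are proportional to $\{gF_k^\star g^\dagger : g\in G\}$ in exactly the sense demanded by the claim. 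Moreover, with weights $w_k\assign t_k/2$ (which sum to $1$), the convex combination $\sum_k w_k M_k$ in the sense of Claim~\ref{convexity} produces precisely the operators $\{\sqrt{w_k/(2t_k)}\, gF_k^\star g^\dagger\} = \{\tfrac{1}{2}\, gF_k^\star g^\dagger\}$ of $M'$.

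Putting the pieces together, Claim~\ref{convexity} applied iteratively gives $\B(M') = \sum_k w_k\,\B(M_k)$. Since $\B(M') = \B(M^\star)$ equals the global minimum and each $\B(M_k) \geq \B(M^\star)$, every term with $w_k>0$ must in fact satisfy $\B(M_k) = \B(M^\star)$. There is at least one such $k_0$ because $\sum_k w_k = 1$, and choosing $F\assign F_{k_0}^\star$ yields an optimal measurement $M_{k_0}$ of exactly the claimed form. The only non-routine step is the Pauli-averaging identity; everything else is bookkeeping on how the normalizations $t_k$ propagate through Claims~\ref{convexity} and~\ref{invarianceOfBunderG}, so I expect the main obstacle to be notational rather than conceptual.
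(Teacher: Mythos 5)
Your proposal is correct and follows essentially the same route as the paper: symmetrize the optimal measurement over $G$, use the Pauli-twirl identity $\sum_{g\in G} gAg^\dagger = 2\Tr(A)\,\id$ to show each single-orbit family $\bigl\{\tfrac{1}{\sqrt{2\Tr(F_k^\dagger F_k)}}\, gF_kg^\dagger\bigr\}$ is a valid measurement, and then use the affinity of $\B$ under mixing (Claim~\ref{convexity}) together with Claim~\ref{invarianceOfBunderG} to conclude that one of these orbits already attains the minimum. Your explicit bookkeeping of the weights $w_k = t_k/2$ just makes precise the paper's remark that $\hat{M}$ is obtained from the $M_k$ by averaging.
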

\begin{proof}
Let $M= \{F_k\}$ be a measurement. Let $K = |M|$ be the number of measurement operators.
Clearly, $\hat{M} = \{\hat{F}_{g,k}\}$
with
$$
\hat{F}_{g,k} = \frac{1}{2} g F_k g^\dagger \, ,
$$
is also a quantum measurement since $\sum_{g,k} \hat{F}_{g,k}^\dagger\hat{F}_{g,k} = \id$.
It follows from Claims~\ref{convexity} and~\ref{invarianceOfBunderG} that $\B(M) = \B(\hat{M})$.
Define operators
$$
N_{g,k} = \frac{1}{\sqrt{2\Tr(F_k^\dagger F_k)}}g F_k g^\dagger \, .
$$
Note that
$$
\sum_{g \in G} N_{g,k} = \frac{1}{\sqrt{2\Tr(F_k^\dagger F_k)}} \sum_{u,v \in \01}X^uZ^v
F_k^\dagger F_k Z^v X^u = \id \, .
$$
(see for example Hayashi~\cite{hayashi}).
Hence $M_k = \{N_{g,k}\}$ is a valid quantum measurement.
Now, note that $\hat{M}$ can be obtained from $M_1,\ldots,M_K$ by averaging.
Hence, by Claim~\ref{convexity} we have
$$
\B(M)= \B(\hat{M}) \geq \min_k \B(M_k) \, .
$$
Let $M^*$ be the optimal measurement.  Clearly, $m = \B(M^*) \geq
\min_k \B(M^*_k) \geq m$ by the above and
Claim~\ref{invarianceOfBunderG} from which the present claim
follows.
\end{proof}

Finally, we note that we can restrict ourselves to optimizing over
positive semi-definite (and hence Hermitian) matrices only.
\begin{claim}
Let $F$ be a measurement operator and $M^F = \set{g F g^\dag | g \in
  G}$ the associated measurement. Then there exists a Hermitian
operator $\hat{F}$ such that $\B(M^F)=\B(M^{\hat{F}})$.
\end{claim}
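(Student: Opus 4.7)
The plan is to use the polar decomposition $F = U|F|$, where $U$ is a unitary and $|F| \assign \sqrt{F^\dagger F} \in \pos(\hil)$ is Hermitian (and positive semi-definite), and then show that $\hat{F} \assign |F|$ gives the same value of $\B$. The key inputs will be the unitary covariance of depolarizing noise, $\mN(V\rho V^\dagger) = V\mN(\rho)V^\dagger$, and the unitary invariance of von Neumann entropy. These are exactly the two ingredients already used in Claim~\ref{invarianceOfBunderG}.

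First I would verify that $M^{\hat F}$ has the same outcome probabilities as $M^F$. For each $g\in G$ the operators $F_g \assign gFg^\dagger$ and $\hat F_g \assign g|F|g^\dagger$ satisfy $F_g^\dagger F_g = g F^\dagger F g^\dagger = g |F|^2 g^\dagger = \hat F_g^\dagger \hat F_g$, so $M^{\hat F}$ is a valid measurement whenever $M^F$ is, and the probabilities $p_{k|x\theta}$, $p_{\theta k}$, $p_{x|\theta k}$, $p_{x\theta k}$ all coincide. In particular the first summand $\H(X|\Theta K)$ in the expansion Eq.~\eqref{eq:merit} is the same for both measurements.

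Next I would compare the (normalised) post-measurement states. Writing $F = U|F|$, we have
\[
F_g\rho F_g^\dagger \;=\; gU|F|\rho|F|U^\dagger g^\dagger \;=\; (gUg^\dagger)\bigl(g|F|\rho|F|g^\dagger\bigr)(gUg^\dagger)^\dagger ,
\]
using $g^\dagger g = \id$ for $g\in G$. Letting $V_g \assign gUg^\dagger$ (unitary), this shows $\rho_{x\theta}^{g,M^F} = V_g\,\rho_{x\theta}^{g,M^{\hat F}}\,V_g^\dagger$. Applying depolarizing noise and using its unitary covariance gives $\mN(\rho_{x\theta}^{g,M^F}) = V_g \mN(\rho_{x\theta}^{g,M^{\hat F}}) V_g^\dagger$, and the same holds for the average $\sum_x p_{x|\theta g}\mN(\rho_{x\theta}^{g,\cdot})$ since $V_g$ does not depend on $x$. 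Since von Neumann entropy is invariant under unitary conjugation, both the second and third terms of Eq.~\eqref{eq:merit} take identical values for $M^F$ and $M^{\hat F}$. Adding up the three terms yields $\B(M^F) = \B(M^{\hat F})$, which is the claim.

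The argument is essentially book-keeping once one chooses the right decomposition; there is no substantive obstacle. The only mild subtlety is that the unitary $U$ from the polar decomposition sits \emph{between} the outer $g$ and the Hermitian factor $|F|$, so that what moves each post-measurement state is $V_g = gUg^\dagger$ rather than $U$ itself; this is harmless because $V_g$ is unitary and acts only by conjugation in every term of $\B$, and hence is absorbed by the unitary invariance of entropy and the covariance of $\mN$.
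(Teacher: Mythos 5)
Your proof is correct and follows essentially the same route as the paper: a polar decomposition of $F$ into a unitary times a positive (Hermitian) factor, the cyclicity of the trace to match all outcome probabilities, and the unitary covariance of the depolarizing map together with the unitary invariance of the von Neumann entropy to match the entropy terms. The only difference is that you spell out the book-keeping (in particular that the relevant unitary for the $g$-th outcome is $gUg^\dagger$), which the paper's one-line proof leaves implicit.
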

\begin{proof}
Let $F^\dagger = \hat{F}U$ be the polar decomposition of
$F^\dagger$, where $\hat{F}$ is positive semi-definite and $U$ is
unitary~\cite[Corollary 7.3.3]{horn&johnson:ma}.  Evidently, since
the trace is cyclic, all probabilities remain the same. Using the
invariance of the von Neumann entropy and the depolarizing quantum
operation under unitaries, the claim follows.
\end{proof}

Note that Claim~\ref{maximumInvariant} also gives us that we have at
most 4 measurement operators. Wlog, we take the measurement outcomes
to be labeled $1,2,3,4$ and measurement operators $F_1 = F, F_2=XFX,
F_3=ZFZ, F_4=XZFZX$. Our final observation is the following easy
claim.
\begin{claim}\label{projector}
  For any linear operator $F$ on Hilbert space $\hil$ and any state
  $\ket{\phi} \in \hil$ such that $F \ket{\phi} \neq 0$, it holds that
  the operator $P \assign \frac{F \proj{\phi} F^\dag}{\Tr(F
    \proj{\phi} F^\dag )}$ is a projector with $\rank(P)=1$.
\end{claim}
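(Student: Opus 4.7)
The plan is to observe that $F\proj{\phi}F^\dag$ is already a rank-one positive operator, so dividing by its trace just normalizes it into a rank-one projector. Specifically, I would introduce the (unnormalized) vector $\ket{\psi} \assign F\ket{\phi} \in \hil$, which is nonzero by hypothesis, and then rewrite
\[
F\proj{\phi}F^\dag \;=\; F\ket{\phi}\bra{\phi}F^\dag \;=\; \ket{\psi}\bra{\psi}.
\]
The denominator in the definition of $P$ is then $\Tr\bigl(\ket{\psi}\bra{\psi}\bigr) = \braket{\psi}{\psi} = \|F\ket{\phi}\|^2$, which is strictly positive and hence makes the expression for $P$ well defined.

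Next, I would set $\ket{\tilde\psi} \assign \ket{\psi}/\|\ket{\psi}\|$, a unit vector. Then by construction
\[
P \;=\; \frac{\ket{\psi}\bra{\psi}}{\braket{\psi}{\psi}} \;=\; \ket{\tilde\psi}\bra{\tilde\psi}.
\]
From this closed form, both defining properties of a rank-one projector follow by a single line: $P^\dag = (\ket{\tilde\psi}\bra{\tilde\psi})^\dag = \ket{\tilde\psi}\bra{\tilde\psi} = P$, and
\[
P^2 \;=\; \ket{\tilde\psi}\braket{\tilde\psi}{\tilde\psi}\bra{\tilde\psi} \;=\; \ket{\tilde\psi}\bra{\tilde\psi} \;=\; P,
\]
so $P$ is an orthogonal projector, and its range is the one-dimensional subspace $\spann\set{\ket{\tilde\psi}}$, giving $\rank(P)=1$.

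There is really no obstacle to overcome here: the statement is a direct consequence of writing $F\proj{\phi}F^\dag$ as an outer product $\ket{\psi}\bra{\psi}$ and noting that the nonvanishing hypothesis $F\ket{\phi}\neq 0$ is exactly what is needed both to make the normalization $\Tr(F\proj{\phi}F^\dag)$ nonzero and to guarantee that the resulting projector has rank $1$ rather than $0$. The only thing worth being careful about is not forgetting to invoke $F\ket{\phi}\neq 0$ when dividing by $\|F\ket{\phi}\|^2$.
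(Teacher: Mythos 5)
Your proof is correct and follows essentially the same route as the paper: both rest on the observation that $F\proj{\phi}F^\dag$ is the rank-one outer product $F\ket{\phi}\bra{\phi}F^\dag$, the paper verifying idempotence directly via $\proj{\phi}F^\dag F\proj{\phi}=\Tr(F\proj{\phi}F^\dag)\proj{\phi}$ while you normalize $F\ket{\phi}$ and read off $P=\proj{\tilde\psi}$. Your version additionally records Hermiticity explicitly, which is a harmless (and slightly more complete) addition.
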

\begin{proof}
Notice that $\proj{\phi} F^\dag F \proj{\phi} =
\Tr(F \proj{\phi} F^\dag) \proj{\phi}$. Thus
\begin{align*}
 PP=\frac{F \proj{\phi} F^\dag F \proj{\phi} F^\dag}{\Tr(F
  \proj{\phi} F^\dag )^2} = \frac{F \proj{\phi} F^\dag}{\Tr(F
  \proj{\phi} F^\dag )} = P \, .
\end{align*}
As $F \ket{\phi} \neq 0$ we have that $\rank(F \proj{\phi}
F^\dag)=1$.
\end{proof}

Exploiting our observations, we can considerably simplify the
expression $B(M)$ to be minimized:

\begin{lemma}
Let $\B(M)$ be defined as above. Then
$$
\min_M \B(M) = \min_F  C(F),
$$
where the minimization is taken over Hermitian operators $F \in \Complex^{2\times 2}$ and $C(F)$ is defined
as
\begin{equation}
C(F)= \frac12 \left( h \left(2\,  \Tr\left(F \rho_{0 +}
F\right) \right) + h\left(2\, \Tr\left(F \rho_{0
  \times} F\right) \right) \right) + h\left(\frac{1+r}{2}\right) -
\H\left( \mN \left( 2 F^2\right) \right).
\label{def:C}
\end{equation}
\end{lemma}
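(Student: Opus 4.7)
My plan is to substitute the measurement $M = \{gFg^\dagger : g \in G\}$ from Claim~\ref{maximumInvariant} (with $F$ Hermitian, as established in the subsequent claim) directly into the three summands of Eq.~\eqref{eq:merit} and use Pauli symmetries together with the unitary covariance of the depolarizing channel to collapse the result into $C(F)$.

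First I would establish the normalization. Using $\sum_{g \in G} g A g^\dagger = 2\Tr(A)\id$ for any single-qubit $A$ applied to $A = F^2$, the POVM condition $\sum_k F_k^\dagger F_k = \id$ forces $\Tr(F^2) = 1/2$. Consequently $p_{\theta k}^M = \tfrac{1}{4}\Tr(F_k^2) = \tfrac{1}{8}$ for every pair $(\theta,k)$, and setting $a_\theta := \Tr(F\rho_{0\theta}F)$, $b_\theta := \Tr(F\rho_{1\theta}F)$ gives $a_\theta + b_\theta = \Tr(F^2) = 1/2$.

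Next I would evaluate each summand of Eq.~\eqref{eq:merit}. For $\H(X|\Theta K)_M$, the Pauli relations $X\rho_{x+}X = \rho_{\bar x +}$, $Z\rho_{x+}Z = \rho_{x+}$ (and symmetrically for the $\times$-basis) imply that the four outcomes in each basis split into two equiprobable pairs sharing the same conditional distribution; writing $p_{0|\theta k} \in \{2a_\theta, 2b_\theta\}$ and invoking $h(p) = h(1-p)$ with $a_\theta + b_\theta = 1/2$ collapses this term to $\tfrac{1}{2}(h(2a_+) + h(2a_\times))$. For the second summand, Claim~\ref{projector} identifies every renormalized post-measurement state $\rho_{x\theta}^{k,M}$ as a rank-one projector, whose depolarized version has spectrum $\{(1+r)/2, (1-r)/2\}$, so $\H(\mN(\rho_{x\theta}^{k,M})) = h((1+r)/2)$ and the summation yields $h((1+r)/2)$. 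For the third summand, $\rho_{0\theta} + \rho_{1\theta} = \id$ forces $\sum_x p_{x|\theta k}^M \rho_{x\theta}^{k,M} = F_k F_k^\dagger/(4p_{\theta k}) = 2 F_k^2$; the unitary covariance $\mN(gMg^\dagger) = g\mN(M)g^\dagger$ and unitary invariance of the von Neumann entropy give $\H(\mN(2F_k^2)) = \H(\mN(2F^2))$ for every $k$, so this term equals $\H(\mN(2F^2))$.

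Combining the three contributions yields $\B(M) = \tfrac{1}{2}(h(2a_+) + h(2a_\times)) + h((1+r)/2) - \H(\mN(2F^2))$, which is precisely $C(F)$. Taking the minimum over the measurements permitted by the earlier structural claims produces the equality of minima. The only delicate point will be keeping track of the conditional-probability conventions and the implicit normalization $\Tr(F^2) = 1/2$ that accompanies a valid POVM; once the Pauli bookkeeping in step (i) and the covariance step in (iii) are handled cleanly, there is no substantive obstacle, since the heavy lifting has already been done by Claims~\ref{convexity}--\ref{projector}.
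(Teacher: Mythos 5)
Your proposal is correct and follows essentially the same route as the paper's own proof: it substitutes the symmetrized measurement $\{gFg^\dagger\}$ into the three summands of Eq.~\eqref{eq:merit}, uses the normalization $\Tr(F^2)=1/2$ and $p_{\theta k}=1/8$, the Pauli/bit-flip symmetry with $h(p)=h(1-p)$ for the first term, Claim~\ref{projector} with the spectrum $\{(1\pm r)/2\}$ for the second, and $\rho_{0\theta}+\rho_{1\theta}=\id$ together with depolarizing covariance and unitary invariance of the entropy for the third. No substantive differences from the paper's argument.
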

\begin{proof}
First of all, note that
$$
p_{\theta k} = p_{0 \theta k} + p_{1 \theta k} = \frac{1}{4}
\Tr\left( F_k \left(\rho_{0 \theta}+\rho_{1 \theta}\right) F_k
\right) = \frac{1}{4} \Tr( F^2) \, ,
$$
which is independent of $k$. Thus we have
$$
\frac12=p_{\theta} = \sum_{k=1}^4 p_{\theta k} = \Tr( F^2) \, ,
$$
and hence $p_{\theta k} = \frac{1}{8}$. Furthermore, as in the proof
of Claim~\ref{invarianceOfBunderG}, there exists a bijection $f: \01
\rightarrow \01$ such that
$$
p_{x | \theta k} = \frac{p_{x \theta k}}{p_{\theta k}} = \frac{ \Tr\left(F_k
      \rho_{x \theta} F_k \right)/4}{1/8} = 2\, \Tr \left( F_k \rho_{x \theta} F_k \right)
= 2\, \Tr\left(F \rho_{f(x) \theta}F \right) \, .
$$
Note again that $h(p_{0|\theta k}^M)=h(p_{1|\theta k}^M)$.
We then obtain for the first term
\begin{align*}
\H(X | \Theta K) &= \sum_{\theta k} p_{\theta k} h(p_{0|\theta k}) \\
&= \sum_{\theta k} \frac{1}{8} h( 2\, \Tr( F_k \rho_{0 \theta} F_k) ) \\
&= \frac12 \left( h (2\,  \Tr(F \rho_{0 +} F) ) + h(2\, \Tr(F
\rho_{0 \times} F) ) \right) \, .
\end{align*}

For the second term, we need to evaluate $\H( \mN(\rho_{x \theta}^k
))$. It follows from Claim~\ref{projector} that if $p_{x \theta k} >
0$, the normalized post-measurement state $\rho_{x \theta}^k$ has
eigenvalues $0$ and $1$. Applying the depolarizing quantum operation
to such rank 1 state gives an entropy $\H( \mN(\rho_{x \theta}^k)) =
h((1+r)/2)$, independent of the state. Thus the second term becomes
\begin{align*}
\sum_{x \theta k} p_{x \theta k} \H\left( \mN\left(\rho_{x \theta}^k \right)\right) &=
\sum_{x \theta k} p_{x \theta k} h\left(\frac{1+r}{2}\right) =h\left(\frac{1+r}{2}\right).
\end{align*}

For the third term, we use that for $0 \leq \alpha \leq 1$, it holds
that $\mN (\alpha \rho + (1-\alpha) \sigma)= \alpha \mN(\rho) +
(1-\alpha) \mN(\sigma)$. Hence,
\begin{align*}
p_{0 | \theta k} \mN\left(\rho_{0  \theta}^k\right) + p_{1 | \theta k}
\mN\left(\rho_{1 \theta}^k\right) &=
\mN \left( p_{0 | \theta k} \rho_{0 \theta}^k + p_{1 | \theta k} \rho_{1
  \theta}^k\right)\\
&=\mN\left( 2\, \Tr\left(F_k \rho_{0 \theta} F_k^\dag\right)
\frac{F_k \rho_{0 \theta} F_k^\dag}{\Tr\left(F_k \rho_{0 \theta}
F_k^\dag\right)} +  2\, \Tr\left(F_k \rho_{1 \theta} F_k^\dag\right)
\frac{F_k \rho_{1 \theta} F_k^\dag}{\Tr\left(F_k \rho_{1 \theta}
F_k^\dag\right)}
\right)\\
&=\mN \left(2 F_k (\rho_{0 \theta} + \rho_{1 \theta}) F_k^\dag \right)\\
&=U_k \mN\left(2 F^2\right) U_k^{\dagger},
\end{align*}
where $U_k \in G$.
The third term then yields
$$
\sum_{\theta k} p_{\theta k} \H\left( \sum_x p_{x|\theta k}
\mN\left(\rho_{x \theta}^k \right)\right) = \H\left( \mN ( 2 F^2)
\right) \, .
$$
These arguments prove the Lemma.
\cancel{ To summarize, our optimization problem can be simplified to
\begin{eqnarray*}
\min_M \B(M) & \equiv & \min_M \H(X|\Theta K E)_M = \min_F
\H(X|\Theta K
E)_{M^F}\\
&=& \min_F \C(F) \, ,
\end{eqnarray*}
for
\begin{equation} \label{def:C}
\C(F) \assign \frac12 \left( h \left(2  \Tr\left(F \rho_{0 +} F\right) \right) + h\left(2 \Tr\left(F \rho_{0
  \times} F\right) \right) \right) + h\left(\frac{1+r}{2}\right) -
\H\left( \mN \left( 2 F F\right) \right) \, ,
\end{equation}
as promised.}
\end{proof}

\subsection{F is Diagonal in the Computational Basis}

Now that we have simplified our problem considerably, we are ready
to perform the actual optimization. We first show that we can take
$F$ to be diagonal in the computational (or Hadamard) basis.
\begin{claim}\label{ignoreY}
Let $F \in \Complex^{2 \times 2}$ be the Hermitian operator that
minimizes $\C(F)$ as defined by Eq.~\eqref{def:C}.
Then $F = \alpha\proj{\phi}+\beta\, (\id-\proj{\phi})$ for some $\alpha, \beta \in \Real$
and pure state $\ket{\phi}$ lying in the XZ plane of the Bloch sphere. (i.e. $\Tr(FY) = 0$).
\end{claim}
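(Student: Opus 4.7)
The plan is to parameterize a general Hermitian $F \in \Complex^{2\times 2}$ as $F = a_0 \id + a_x X + a_y Y + a_z Z$ with $a_i \in \Real$, and to show that at least one minimizer of $C(F)$ satisfies $a_y = 0$; the claimed spectral form in the XZ plane then follows by diagonalizing the resulting combination of $\id, X, Z$. First I would record the normalization already used in the proof of the previous Lemma: Claim~\ref{maximumInvariant} puts the optimal measurement in the form $\{c \, g F g^\dag : g \in G\}$, and the Pauli-twirl identity $\sum_{g \in G} g F^2 g^\dag = 2\,\Tr(F^2)\,\id$ together with $\sum_k F_k^\dag F_k = \id$ pins down $\Tr(F^2) = 1/2$, i.e., $a_0^2 + a_x^2 + a_y^2 + a_z^2 = 1/4$ in the Bloch parameterization.

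Next I would expand each ingredient of $C(F)$ defined in Eq.~\eqref{def:C}. Since Pauli matrices anticommute pairwise, $F^2 = \tfrac{1}{4}\,\id + 2 a_0 (a_x X + a_y Y + a_z Z)$, whence $\Tr(F \rho_{0+} F) = \langle 0|F^2|0\rangle = \tfrac{1}{4} + 2 a_0 a_z$ and $\Tr(F \rho_{0\times} F) = \langle +|F^2|+\rangle = \tfrac{1}{4} + 2 a_0 a_x$. The key observation is that both binary-entropy terms of $C(F)$ are therefore completely independent of $a_y$. For the last term, the eigenvalues of the state $2 F^2$ are $\tfrac{1}{2} \pm 4 |a_0| |\vec{a}|$ with $|\vec{a}|^2 = a_x^2 + a_y^2 + a_z^2 = \tfrac{1}{4} - a_0^2$ already pinned down by the constraint, so applying $\mN$ gives $\H(\mN(2 F^2)) = h\bigl(\tfrac{1}{2} + 4 r |a_0| \sqrt{\tfrac{1}{4} - a_0^2}\bigr)$, a function of $a_0$ alone.

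With $a_0$ held fixed, the third term is fixed and minimizing $C(F)$ reduces to minimizing $\tfrac{1}{2}\bigl(h(\tfrac{1}{2} + 4 a_0 a_z) + h(\tfrac{1}{2} + 4 a_0 a_x)\bigr)$ subject to $a_x^2 + a_z^2 \leq \tfrac{1}{4} - a_0^2$ (the slack equals $a_y^2$). By AM-GM, $4|a_0 a_i| \leq 2(a_0^2 + a_i^2) \leq 1/2$, so the arguments of $h$ lie in $[0,1]$ and each summand is non-increasing in $|a_i|$ on this range by concavity of $h$ around $1/2$ (strictly decreasing once $|a_i|>0$). Hence rescaling $(a_x, a_z) \mapsto \lambda (a_x, a_z)$ with $\lambda = \sqrt{(\tfrac{1}{4} - a_0^2)/(a_x^2 + a_z^2)} \geq 1$ (or, if $a_x = a_z = 0$, depositing the whole budget into $a_z$) and then setting $a_y := 0$ produces a feasible $F'$ with $C(F') \leq C(F)$. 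This shows that some minimizer has $a_y = 0$, equivalently $\Tr(FY) = 2 a_y = 0$. Spectrally decomposing the remaining $F = a_0 \id + a_x X + a_z Z$ finally yields $\alpha \proj{\phi} + \beta (\id - \proj{\phi})$ with $\ket{\phi}$ lying in the XZ plane.

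The only bookkeeping subtlety is the AM-GM bound ensuring that $\tfrac{1}{2} + 4 a_0 a_i$ stays in $[\tfrac{1}{2}, 1]$ up to the symmetry $h(\tfrac{1}{2}+t)=h(\tfrac{1}{2}-t)$ so that the rescaling cannot overshoot the monotone regime of $h$; beyond this the argument is purely algebraic and I do not anticipate a serious obstacle.
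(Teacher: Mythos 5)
Your proof is correct, and its overall strategy is the same as the paper's: hold fixed the quantity that controls the third term of $\C(F)$, observe that the two binary-entropy terms do not involve the $Y$-component, and argue that transferring all of the traceless part into the XZ plane can only decrease them. The difference is in execution. The paper starts from the spectral decomposition $F=\alpha\proj{\phi}+\beta\proj{\phi^\perp}$ with $\alpha^2+\beta^2=1/2$, parametrizes the Bloch vector of $\ket{\phi}$, computes $2\Tr(F\rho_{0+}F)=\tfrac12(1+(4\alpha^2-1)\hat{z})$, and then shows by an explicit (Mathematica-assisted) derivative computation that $\hat{z}\mapsto h\bigl(\tfrac12(1+(4\alpha^2-1)\hat{z})\bigr)$ is non-increasing on $[0,1]$, concluding that $\hat{x},\hat{z}$ should be taken maximal, hence $\hat{y}=0$. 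You instead write $F=a_0\id+a_xX+a_yY+a_zZ$ with $a_0^2+a_x^2+a_y^2+a_z^2=1/4$ (your parameters are related to the paper's by $a_0=(\alpha+\beta)/2$ and $\vec a=\tfrac{\alpha-\beta}{2}(\hat x,\hat y,\hat z)$, so the two formulas for the probabilities agree), which makes the independence of the entropy terms from $a_y$ and of the third term from $(a_x,a_y,a_z)$ completely manifest, and you replace the calculus by the elementary fact that $h(\tfrac12+t)$ is symmetric and decreasing in $|t|$, together with the AM--GM check that the arguments stay in $[0,1]$ after rescaling $(a_x,a_z)$ to absorb the $a_y^2$ budget. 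This buys a cleaner, derivative-free monotonicity argument at the same level of generality; the only cosmetic slips are that the monotonicity is really symmetry-plus-unimodality of $h$ rather than concavity per se, and "strictly decreasing" requires $a_0\neq 0$, but non-increasing is all you need, and your degenerate-case clause ($a_x=a_z=0$) covers the rest.
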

\begin{proof}
Since $F$ is a Hermitian on a 2-dimensional space, we may express $F$ as
$$
F = \alpha \proj{\phi} + \beta \proj{\phi^\perp} \, ,
$$
for some state $\ket{\phi}$ and real numbers $\alpha,\beta$. We first of all note that from
$\sum_{k} F_k F_k = \id$, we obtain that
\begin{equation*}
\sum_k \Tr(F_k F_k) =\sum_{g \in\{\id,X,Z,XZ\}} \Tr(gFgg^\dagger Fg^\dagger)= 4\, \Tr(F^2) = \Tr(\id) = 2 \, ,
\end{equation*}
and hence $\Tr(F^2) = \alpha^2 + \beta^2 = 1/2$. Furthermore, using that
$\proj{\phi} + \proj{\phi^\perp} =\id$ gives
\begin{equation}\label{Fdef}
F = \alpha\proj{\phi}+\beta\, (\id- \proj{\phi}) \, ,
\end{equation}
with $\beta = \sqrt{1/2-\alpha^2}$.
Hence without loss of generality, we can consider $0 \leq \alpha \leq 1/\sqrt{2}$.
The eigenvalues of $2 F^2$ are
$2\alpha^2$ and $1 - 2 \alpha^2$. Hence, the third term of $\C(F)$ becomes
$\H(\mN(2 F^2)) = h(2 r \alpha^2 + (1-r)/2)$ which does not depend on
$\ket{\phi}$. We want to minimize
\begin{equation} \label{eq:mintarget}
\min_F \frac12 ( h (2\,  \Tr(F \rho_{0 +} F) ) + h(2\, \Tr(F \rho_{0
  \times} F) ) ) +
h((1+r)/2) - h(2 r \alpha^2 + (1-r)/2) \, .
\end{equation}
We first parametrize the state $\ket{\phi}$ in terms of its Bloch vector
$$
\proj{\phi} = \frac{\id + \hat{x} X + \hat{y} Y + \hat{z} Z}{2} \, .
$$
Since $\ket{\phi}$ is pure we can write $\hat{y} = \sqrt{1 - \hat{x}^2 - \hat{z}^2}$.
Note that we may wlog assume that $0 \leq \hat{x}, \hat{z} \leq 1$,
since the remaining three
measurement operators are given by $XFX$, $ZFZ$, and $XZFZX$.
A small calculation shows that for the encoded bit $x \in \01$
$$
2\,\Tr\left(F\rho_{x+}F\right) = \frac{1}{2}\left(1 + (-1)^x (4 \alpha^2
  - 1)\hat{z}\right) \, ,
$$
and similarly
$$
2\,\Tr\left(F\rho_{x\times}F\right) = \frac{1}{2}\left(1 + (-1)^x (4
  \alpha^2 - 1)\hat{x}\right) \, .
$$
Our goal is to show that for every $0 \leq \alpha \leq 1/\sqrt{2}$,
the function
$$
f(\hat{z}) \assign h(2\, \Tr(F\rho_{x+}F))
$$
is non-increasing
on the interval $0 \leq \hat{z} \leq 1$.  First of all, note
that $f(\hat{z}) = 1$ for $\alpha=1/2$. We now consider the case of
$\alpha \neq 1/2$.
A simple computation (using Mathematica) shows that when differentiating
$f$ with respect to $\hat{z}$ we obtain
\begin{align*}
f'(\hat{z}) &=
\frac{\partial}{\partial \hat{x}} f(\hat{z}) =
\frac{1}{2}(1 - 4 \alpha^2) \log\left(\frac{2}{1 + \hat{z} - 4 \alpha^2 \hat{z}} - 1\right),\\
f''(\hat{z}) &=
\frac{\partial^2}{\partial \hat{x}} f(\hat{z}) =
\frac{(1 - 4 \alpha^2)^2}{\ln 2 (\hat{z}^2(1 - 4 \alpha^2)^2 - 1)}
\end{align*}
Hence the function has one maximum at $\hat{z}=0$ with $f(0)=1$.
Since $0\leq \alpha \leq 1/\sqrt{2}$ and $\alpha \neq 1/2$ we also have that
$(1-4\alpha^2)^2 \leq 1$ and hence $f''(\hat{z}) \leq 0$ everywhere
and $f$ is concave (though not strictly concave).
Thus $f(\hat{z})$ is decreasing with $\hat{z}$.

Since we have $\hat{x}^2 + \hat{z}^2 + \hat{y}^2 = 1$ we can thus
conclude that in order to minimize $\C(F)$, we want to choose
$\hat{x}$ and $\hat{z}$ as large as possible and thus let $\hat{y} =
0$ from which the claim follows.
\end{proof}

We can immediately extend this analysis to find
\begin{claim}
Let $F$ be the operator that minimizes $\C(F)$, and write $F$ as in Eq.~\ref{Fdef}.
Then
$$
\ket{\phi} = g \ket{0} \, ,
$$
for some $g \in \{\id,X,Z,XZ\}$.
\end{claim}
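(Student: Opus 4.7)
The plan is to continue the analysis started in Claim~\ref{ignoreY}: we already know the optimal $F$ has the form $F = \alpha\proj{\phi}+\beta(\id-\proj{\phi})$ with $\ket{\phi}$ lying in the XZ plane of the Bloch sphere and $\alpha^2+\beta^2 = 1/2$. So we may parametrize $\proj{\phi} = (\id + \hat{x}X + \hat{z}Z)/2$ with $\hat{x}^2 + \hat{z}^2 = 1$ and, WLOG by Claim~\ref{maximumInvariant}, $\hat{x},\hat{z} \geq 0$. Writing $\hat{x} = \sin\theta$, $\hat{z} = \cos\theta$ for $\theta \in [0,\pi/2]$, all we need to show is that the minimum over $\theta$ is attained at the endpoint $\theta = 0$ (giving $\ket{\phi}=\ket{0}$); the boundary case $\theta = \pi/2$ corresponds to $\ket{\phi} = \ket{+}$, which by the evident $\hat{x} \leftrightarrow \hat{z}$ symmetry of $\C(F)$ (equivalently, Hadamard conjugation) gives the same value and may then be reduced to the $\theta = 0$ case.

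Using the computations already done in Claim~\ref{ignoreY}, only the first term of $\C(F)$ depends on $\theta$, and it equals
\[
\tfrac12\bigl[f(|c|\cos\theta) + f(|c|\sin\theta)\bigr] =: \tfrac12\,\psi(\theta),
\]
where $c := 4\alpha^2 - 1 \in [-1,1]$ and $f(y) := h\!\left(\tfrac{1+y}{2}\right)$ is even, concave, and strictly decreasing on $[0,1]$ (with $f(0)=1$, $f(1)=0$). It thus suffices to show that $\psi$ achieves its minimum on $[0,\pi/2]$ at the endpoints. Note $\psi(\theta)=\psi(\pi/2-\theta)$, so it is enough to prove that $\psi$ is strictly increasing on $(0,\pi/4)$.

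Computing,
\[
\psi'(\theta) = |c|\sin\theta\,\cos\theta\!\left[\frac{|f'(|c|\cos\theta)|}{\cos\theta} - \frac{|f'(|c|\sin\theta)|}{\sin\theta}\right],
\]
so strict monotonicity of $\psi$ on $(0,\pi/4)$ reduces to showing that $y \mapsto |f'(y)|/y$ is strictly increasing on $(0,1)$, since there $|c|\cos\theta > |c|\sin\theta > 0$. This follows from the explicit expansion
\[
\frac{|f'(y)|}{y} = \frac{1}{2y\ln 2}\ln\frac{1+y}{1-y} = \frac{1}{\ln 2}\sum_{k\geq 0}\frac{y^{2k}}{2k+1},
\]
which is manifestly strictly increasing in $y$. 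Hence $\psi'(\theta) > 0$ on $(0,\pi/4)$ and, by symmetry, $\psi'(\theta) < 0$ on $(\pi/4,\pi/2)$; the unique maximum of $\psi$ on $[0,\pi/2]$ is at $\pi/4$, and the minimum is attained at the endpoints.

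The main (mild) obstacle is the verification of the monotonicity of $|f'(y)|/y$; this is the one step that cannot be handled purely by symmetry and requires an explicit estimate, though the Taylor-series argument above handles it cleanly. Once this is in hand, we conclude that the optimal $\ket{\phi}$ may be taken to be $\ket{0}$ (after applying the Hadamard-type $+\leftrightarrow\times$ symmetry of $\C(F)$ to fold $\theta = \pi/2$ into $\theta = 0$), and using the $G$-symmetry from Claim~\ref{maximumInvariant} this is equivalent to $\ket{\phi} = g\ket{0}$ for some $g \in \{\id,X,Z,XZ\}$, as claimed.
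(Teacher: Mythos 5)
Your proof is correct, but the final optimization step is handled differently from the paper. After the reduction (which you share with the paper) to minimizing $\tfrac12\bigl(f(|c|\hat z)+f(|c|\hat x)\bigr)$ over the quarter-circle $\hat x^2+\hat z^2=1$, $\hat x,\hat z\ge 0$, with the other two terms of $\C(F)$ independent of $\ket{\phi}$, the paper invokes Lagrange multipliers to locate the stationary points ($\hat x=\hat z=1/\sqrt2$ or the two boundary points) and then simply asserts that the boundary points give the smaller value. You instead parametrize the arc by an angle and show that $\psi(\theta)=f(|c|\cos\theta)+f(|c|\sin\theta)$ is increasing on $(0,\pi/4)$ and, by the $\hat x\leftrightarrow\hat z$ symmetry, decreasing on $(\pi/4,\pi/2)$, so the minimum sits at the endpoints; the only analytic input is that $|f'(y)|/y=\frac{1}{2y\ln 2}\ln\frac{1+y}{1-y}$ is strictly increasing on $(0,1)$, which your power-series expansion establishes cleanly. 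This is in effect a tightening of the paper's argument rather than merely a variant: injectivity of $y\mapsto f'(y)/y$ is exactly what the Lagrange-multiplier step needs to conclude that the symmetric point is the only interior stationary point, and your monotonicity argument also removes the need to compare the value at $\hat x=\hat z=1/\sqrt2$ with the endpoint value, which the paper only claims "can be seen." Your treatment of the $\theta=\pi/2$ endpoint ($\ket{\phi}=\ket{+}$) via Hadamard conjugation matches the intended without-loss-of-generality reading of the claim (the paper itself later says one may take $\ket{\phi}\in\{\ket{0},\ket{1}\}$); strictly speaking the minimizer is not unique, so both you and the paper are really showing that some minimizer has the stated form. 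One harmless omission: when $c=4\alpha^2-1=0$ your strict inequalities are vacuous because $\psi$ is constant, but then $F=\id/2$ and the claim holds trivially.
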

\begin{proof}
By Claim~\ref{ignoreY}, we can rewrite our optimization problem as
\begin{sdp}{minimize}{$(f(\hat{x}) + f(\hat{z}))/2 + h((1+r)/2) - h(2 r \alpha^2 + (1-r)/2)$}
&$\hat{x}^2+\hat{z}^2 = 1$\\
&$0\leq \hat{x} \leq 1$\\
&$0 \leq \hat{z} \leq 1 \,$.
\end{sdp}
By using Lagrange multipliers we can see that for an extreme point
we must have either $\hat{x}=\hat{z} = 1/\sqrt{2}$ or $\hat{x} = 0,
\hat{z}=1$ or $\hat{z} = 0, \hat{x}=1$. From the definition of $f$ above we can see that to minimize the expression, we want to choose
the latter, from which the claim follows.
\end{proof}

\subsection{Optimality of the Trivial Strategies}
We have shown that without loss of generality $F$ is diagonal
in the computational basis. Hence, we have only a single parameter left in
our optimization problem. We must optimize over all operators $F$
of the form
$$
F = \alpha \proj{\phi} + \sqrt{1/2 - \alpha^2} \proj{\phi^\perp} \, ,
$$
where we may take $\ket{\phi}$ to be $\ket{0}$ or $\ket{1}$. Our aim is to show that
either $F$ is the identity, or $F = \proj{\phi}$ depending on the value of $r$.
\begin{claim}
Let $F$ be the operator that minimizes $\C(F)$, and let
$r_0 \assign 2 h^{-1}\left(\frac{1}{2}\right) - 1$.
Then $F = c\, \id$ (for some $c \in \Real$) for $r \geq r_0$, and
$F = \proj{\phi}$ for
$r < r_0$, where
$$
\ket{\phi} = g \ket{0} \, ,
$$
for some $g \in \{\id,X,Z,XZ\}$.
\end{claim}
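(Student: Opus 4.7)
The plan is to reduce the problem to a single-variable optimization and then analyze the resulting function by elementary calculus. By the preceding claim, $F = \alpha\proj{\phi} + \sqrt{1/2 - \alpha^2}\proj{\phi^\perp}$ with $\ket{\phi} = g\ket{0}$ for some $g \in \{\id, X, Z, XZ\}$, so without loss of generality I take $\ket{\phi} = \ket{0}$. Introducing $\beta \assign 2\alpha^2 \in [0, 1]$ and substituting into~\eqref{def:C} yields
\begin{equation*}
\C(\beta) = \tfrac{1}{2} h(\beta) + \tfrac{1}{2} + h\!\left(\tfrac{1+r}{2}\right) - h\!\left(r\beta + \tfrac{1-r}{2}\right),
\end{equation*}
which is manifestly symmetric under $\beta \mapsto 1-\beta$. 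The two candidate minimizers in the theorem correspond to $\beta = 1/2$ (giving $F = \tfrac{1}{2}\id$, for which $\C(1/2) = h((1+r)/2)$) and to $\beta \in \{0, 1\}$ (giving $F$ proportional to a rank-one projector onto $\ket{\phi^\perp}$ or $\ket{\phi}$, for which $\C(0) = \C(1) = 1/2$). By the definition $\hat{r} = 2h^{-1}(1/2) - 1$, these two values coincide exactly at $r = \hat{r}$, so the claim will follow once I establish $\min_{\beta \in [0,1]} \C(\beta) = \min\{1/2,\, h((1+r)/2)\}$.

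For this global optimization, I would analyze the second derivative
\begin{equation*}
\C''(\beta) = \frac{1}{\ln 2}\left[\frac{r^2}{y(1-y)} - \frac{1}{2\beta(1-\beta)}\right], \qquad y \assign r\beta + \tfrac{1-r}{2},
\end{equation*}
and a short algebraic manipulation reduces $\C''(\beta) \geq 0$ to the condition $\beta(1-\beta) \geq (1-r^2)/(4r^2)$. When $r \leq 1/\sqrt{2}$, the right-hand side is at least $1/4 \geq \beta(1-\beta)$, so $\C'' \leq 0$ on all of $[0,1]$. Hence $\C$ is concave with equal endpoint values $\C(0) = \C(1) = 1/2$, its minimum is attained at the endpoints, and this is consistent with $h((1+r)/2) > 1/2$ in this range (since $1/\sqrt{2} < \hat{r}$).

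The more delicate case is $1/\sqrt{2} < r < 1$, where the threshold $\beta_0 \in (0, 1/2)$ solving $\beta_0(1-\beta_0) = (1-r^2)/(4r^2)$ splits $[0,1]$ into three regions on which $\C$ is successively concave, convex, and concave again. To rule out interior minima other than $\beta = 1/2$, I would combine the boundary behavior $\C'(0^+) = +\infty$ (since $h'(\beta) \to +\infty$ as $\beta \to 0$ while the remaining term in $\C'$ stays bounded) with the symmetry-derived identity $\C'(1/2) = 0$. On the convex piece $(\beta_0, 1/2)$, $\C'$ is strictly increasing and vanishes at the right endpoint, forcing $\C' < 0$ throughout this interval; on the concave piece $(0, \beta_0)$, $\C'$ is strictly decreasing from $+\infty$ to a negative value and hence has a unique zero $\beta^* \in (0, \beta_0)$, which is a local maximum of $\C$. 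Symmetry gives the mirror picture on $[1/2, 1]$. Consequently $\beta = 1/2$ is the only interior local minimum, so the global minimum equals $\min\{\C(0), \C(1/2)\} = \min\{1/2,\, h((1+r)/2)\}$, which is $h((1+r)/2)$ precisely when $r \geq \hat{r}$.

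Putting the two regimes together yields the claimed dichotomy, identifying the optimal $F$ as $\tfrac{1}{2}\id$ when $r \geq \hat{r}$ and as a scaled rank-one projector $\sqrt{1/2}\,\proj{\phi}$ (for some $\ket{\phi} \in \{\ket{0}, \ket{1}\}$) when $r < \hat{r}$. The main technical obstacle I anticipate is the sign analysis of $\C'$ in the concave--convex--concave regime: although it is elementary, one must track the interplay between the unbounded derivative at the origin, the sign change of $\C''$ at $\beta_0$, and the symmetry-forced vanishing of $\C'$ at $1/2$ carefully enough to pin down exactly the set of critical points.
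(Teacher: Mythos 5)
Your proposal is correct and takes essentially the same route as the paper: after the reduction to a diagonal $F$, both arguments minimize the same one-parameter function (your $\C(\beta)$ is the paper's $t(r,\alpha)$ under $\beta=2\alpha^2$) and reduce the dichotomy to comparing the two candidate values $1/2$ and $h\bigl(\tfrac{1+r}{2}\bigr)$, which cross exactly at $r=\hat{r}$. Your concave--convex--concave analysis via the sign of $\C''$ together with the $\beta\mapsto 1-\beta$ symmetry is in fact a more careful way of excluding other interior minima than the paper's brief discussion of the critical points at $\alpha\to 0$ and $\alpha=1/2$ and of the extra maxima appearing beyond $r=1/\sqrt{2}$.
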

\begin{proof}
We can plug $x=0$ and $z=1$ in the expressions in the proof of our previous claim.
Thus our goal is to minimize
$$
t(r,\alpha) \assign \frac{1}{2}\left(1 + g(1,\alpha)\right) +
h\left(\frac{1+r}{2}\right) - g(r,\alpha) \, ,
$$
where
$$
g(r,\alpha) \assign h\left(\frac{1+r}{2}-2 \alpha^2 r\right) \, .
$$
Differentiating $g$ with respect to $\alpha$ gives us
$$
\frac{\partial}{\partial \alpha} g(r,\alpha) = 4 \alpha r
\left(\log\left(\frac{1+r}{2} - 2 \alpha^2 r\right) -
  \log\left(\frac{1-r}{2} + 2 \alpha^2 r\right)\right) \, ,
$$
with which we can easily differentiate $t$ with respect to $\alpha$ as
$$
\frac{\partial}{\partial \alpha} t(r,\alpha) = \frac{1}{2}
\frac{\partial}{\partial \alpha} g(1,\alpha) -
\frac{\partial}{\partial \alpha} g(r,\alpha) \, .
$$
We can calculate
$$
\lim_{\alpha\rightarrow 0} \frac{\partial}{\partial \alpha} t(r,\alpha) = 0
$$
and
$$
\frac{\partial}{\partial \alpha} t(r,1/2) = 0 \, .
$$
We thus have two extremal points. By computing the second derivative which is equal to $8(2r^2-1)/\ln 2$ at the point $\alpha=1/2$,
we can see that as $r$ grows from 0 to 1, the second extreme point switches from a maximum to a minimum at $r = 1/\sqrt{2}$.
Our goal is thus to determine for which $r$ we have
$$
t(r,0) \leq t(r,1/2) \, .
$$
Note that shortly after the transition point $r=1/\sqrt{2}$, we do obtain two additional maxima, but since we
are interested in finding the minimum they do not contribute to our analysis.
By plugging in the definition for $t$ from above, we have that $t(r,0) \leq t(r,1/2)$
iff
$$
\frac{1}{2} \leq h\left(\frac{1+r}{2}\right) \, ,
$$
or in other words iff
$$
2 h^{-1}\left(\frac{1}{2}\right) - 1 \leq r \, ,
$$
as promised.
\end{proof}

We conclude that Bob's optimal strategy, --the one which minimizes $H(X|\Theta K E)$--,
is an extremal strategy, that is, he either measures his qubit in the computational
basis, or he stores the qubit as is. This is the content of Theorem~\ref{thm:depolarize}.
We believe that a similar analysis can be done for the dephasing
quantum operation, by first symmetrizing
the noise by applying a rotation over $\pi/4$ to the input states.

\end{document}